\definecolor{greencite}{rgb}{0.2,0.6,0.2} 
\definecolor{bluformula}{rgb}{0.1,0.2,0.6}
\def\R{\mathbb R} 
\def\N{\mathbb N}
\def\H{{\cal H}}
\def\de{\delta}
\def\e{\varepsilon}
\def\vphi{\varphi}
\def\Z{\mathbb Z}
\def\Om{\Omega}
\def\2{\stackrel{2\textrm{-s}}\rightharpoonup}
\def\pa{\partial}
\def\big{\bigskip}
\def\R{\mathbb R}
\def\S{\mathbb S}
\def\H{{\mathcal H}}
\def\vphi{\varphi}
\def\Om{\Omega}
\def\N{\mathbb N}
\def\e{\varepsilon}
\def\pa{\partial}
\def\Div{\mbox{\rm div}\,}
\newcommand{\M}{\mathbb M}
\newcommand{\wto}{\rightharpoonup}
\newcommand{\beq}{\begin{equation}}
\newcommand{\eeq}{\end{equation}}
\newcommand{\medintinrigo}{-\kern -,315cm\int}
\newcommand{\medint}{-\kern -,375cm\int}
\newcommand{\onda}{~}
\newtheorem{theorem}{Theorem}[section]
\newtheorem{lemma}[theorem]{Lemma}
\newtheorem{corollary}[theorem]{Corollary}
\newtheorem{proposition}[theorem]{Proposition}
\newtheorem*{theorem*}{Theorem}
\numberwithin{equation}{section}
\theoremstyle{remark}
\newtheorem{remark}[theorem]{Remark}
\theoremstyle{definition}
\newtheorem{definition}[theorem]{Definition}
\begin{document}

 \title{Reduced models for ferromagnetic thin films with periodic surface roughness}
\author{ M. Morini \& V. Slastikov}
\address[M.\ Morini]{Universit\`a di Parma, Parma, Italy}
\email[Massimiliano Morini]{massimiliano.morini@unipr.it}
\address[V.\ Slastikov]{School of Mathematics, University of Bristol, Bristol, UK}
\email[Valeriy Slastikov]{Valeriy.Slastikov@bristol.ac.uk}


\begin{abstract} 

We investigate the influence of periodic surface roughness in thin ferromagnetic films on shape anisotropy and magnetization behavior inside the ferromagnet. Starting from the full micromagnetic energy and using methods of homogenization and $\Gamma$-convergence we derive a two dimensional local reduced model. Investigation of this model provides an insight on the formation mechanism of {\it perpendicular magnetic anisotropy} and uniaxial anisotropy with an {\it arbitrary} preferred direction of magnetization.

\end{abstract} 

\maketitle

\section{Introduction}

%
%
%
%
%
%

Magnetic anisotropy is one of the fundamental properties of ferromagnetic materials. It is responsible for defining preferred directions of magnetization inside the ferromagnet. The main sources of magnetic anisotropy are magnetocrystalline anysotropy, prescribed by the crystalline structure of the material, and shape anisotropy, induced by the demagnetizing (or stray) field generated by the magnetization distribution inside the ferromagnet. 
In bulk ferromagnets the magnetocrystalline anisotropy provides the leading contribution to magnetic anisotropy and the  demagnetizing field is mainly responsible for formation of multiple domains inside the magnetic sample. On the other hand, in ferromagnetic nanostructures of reduced dimension (thin films, ribbons, nanowires, nanodots) stray field effects may dominate magnetocrystalline anisotropy and become the leading mechanism for choosing preferred magnetization direction.


The geometry of a ferromagnet plays a crucial role in defining the shape anisotropy. It has been observed that in flat ferromagnetic thin films the magnetization vector prefers to be constrained to the plane of the film and align tangentially to the boundary of the film \cite{Aharoni, Gay1986, gioia97, Hubert}. Recent micromagnetic studies of ferromagnetic thin layers, ribbons, and shells with non-trivial curvature of the surface of the film indicate that surface curvature has a significant effect on shape anisotropy, and in ferromagnetic thin structures with non-zero curvature magnetization prefers to be tangent to the surface \cite{carbou01, gaididei17, Gaididei2014, sheka15, Makarov2016_review}. Therefore the dominating effect of the shape anisotropy induced by the stray field is to {\it align magnetization direction tangentially to the surface} of the ferromagnetic nanostructure. This general principle works very well when surface variations happen on a scale larger than thickness of the film (inverse surface curvature is larger than thickness).  However, in the case of rapidly modulated surface, when inverse curvature is of the same order as the thickness of the film, the situation might be different and magnetic anisotropy, dominated by surface curvature effects, may produce preferred directions not tangential to the surface of the film \cite{Bruno1988, Chappert1988, slastikov17}. This behavior might be observed in ultrathin ferromagnetic films with the thickness reaching several monolayers, where the surface roughness can be comparable in amplitude and modulation to the thickness of the film, effectively leading to the large curvature of the film surface.

In this paper we would like to understand the influence of the large surface curvature (or surface roughness) of thin films on the shape anisotropy induced by magnetostatic interaction. We consider the case of periodically modulated thin film surfaces modelling the surface roughness (see Fig.~\ref{fig:oe}). In our study we use the standard continuum model of micromagnetics \cite{Aharoni, Hubert}. In this framework stable magnetization distributions inside a ferromagnet correspond to local minimizers of the micromagnetic energy which after a
suitable nondimensionalization has the following form
\beq \label{smeq1}
{\mathcal E}(M) = d^2 \int_\Omega |\nabla M|^2
+ K \int_\Omega  \phi(M)
+ \int_{{\R}^3} |\nabla u|^2
- 2\int_{\Omega} h_{ext} \cdot M.
\eeq
Here $\Omega \subset \R^3$ is the region occupied by a ferromagnet, $M \colon \Omega \to \S^2$ is the magnetization distribution, and the function $u$ is defined on $\R^3$ and satisfies the following equation
\beq \label{smeq2}
\Div (\nabla u + M \chi (\Omega)) = 0  \hbox{ in } \R^3,
\eeq
\noindent with $\chi (\Omega)$ being the indicator of the set $\Omega$. The applied field is defined by $h_{ext}$, and $\phi$ is the internal anisotropy function. Material parameters $d$ and $K$ correspond to an effective exchange and anisotropy constants, respectively. The four terms of the energy are known as exchange, anisotropy, magnetostatic and Zeeman energies, respectively. Due to the non-convex and non-local nature this variational problem cannot be addressed in its full generality by current analytical methods. 





The standard route to analytically investigate micromagnetic energy \eqref{smeq1} is to consider a range of material and geometric parameters of a ferromagnet where the full three-dimensional model can be reduced to a simpler energy functional, capturing the essence of the magnetization the behavior in ferromagnetic sample \cite{desimone06r}. The derivation and study of the reduced micromagnetic models is by no means a trivial task, but, in general, it is  easier than investigation of the full three-dimensional model. Reduced models have been successfully derived and implemented to explore many magnetic phenomena in ferromagnetic nanostructures, including nanodots \cite{desimone95, slastikov10}, nanowires \cite{harutyunyan16, kuhn07, sanchez09, slastikov12}, thin films \cite{carbou01, desimone06r, desimone02, gioia97, kohn05arma}, and curved structures of reduced dimensions \cite{carbou01, gaididei17, Gaididei2014,  sheka15, slastikov05}. 

\vskip 0.3cm

The main goal of this paper is to obtain a comprehensive reduced model to describe magnetization behavior in ferromagnetic thin films with periodic surface roughness. We concentrate on a regime where the thickness of the film is comparable to the amplitude and the period of thin film surface modulation and derive an effective {\it local} two-dimensional model. This reduced model has been  examined, both analytically and numerically, in the recent paper \cite{slastikov17} and lead to some interesting observations. In particular, it was shown that in the special case of {\it parallel roughness}, when top and bottom surfaces of the layer are parallel, an extreme geometry is responsible for creating a strong uniaxial shape anisotropy with an {\it arbitrary} preferred direction depending on the surface roughness.  This is a rather unexpected outcome suggesting that in certain  regimes a surface roughness in ultrathin ferromagnetic films might lead to a {\it perpendicular magnetic anisotropy} \cite{Chappert1988, 
Johnson1996, Vaz2008}.  In the case of more general roughness, when top and bottom surfaces are different,  several examples have been also  considered where instead the   magnetization prefers to stay in-plane. 



The dimension reduction problems for thin films with periodic surfaces or edges have been extensively studied in the mathematical community in the case where the energy functional has a local energy density, see e.g. \cite{arrieta11, arrieta17, braides00, neukamm10, neukamm13}. The existing results are not directly applicable in our setting due to the nonlocal nature of the stray field energy and the main difficulty in our case comes from homogenizing the magnetostatic contribution.  In order to treat the magnetostatic energy we first identify its leading contribution coming from dipolar interaction of charges at the top and bottom surfaces of thin film. This leading contribution can be represented as an integral with the kernel becoming singular in the limit of vanishing thickness \cite{kohn05arma}. We investigate the homogenized limit of this singular integral and show that the leading order contribution  has a local energy density (similar to the case of flat thin films, see \cite{gioia97}). 

Using methods of $\Gamma$-convergence and two-scale convergence \cite{allaire92, dalmaso} we obtain the limiting behavior of the full micromagnetic energy. Although the treatment of the exchange energy  could be done using the framework of \cite{braides00}, we cannot explicitly use their results due to the more general roughness considered in our paper. Therefore, we adopt the   two-scale convergence approach adapted to dimension reduction problems as developed in \cite{neukamm10} and provide a relatively simple self-contained proof of the $\Gamma$-convergence of the exchange energy. Special care has to be taken due to the fact that magnetization distribution has values on a two dimensional sphere.

\vskip 0.3cm

The paper is organized as follows. In Section~\ref{sec2} we provide a rigorous mathematical formulation of the problem and state our main results  in the Theorem~\ref{th:main}. Section~\ref{sec3} is devoted to the proof of Theorem~\ref{th:main}. We begin our exposition in Section~\ref{subsec:parallel} by finding the limiting behavior of the magnetostatic energy in the case of ``parallel roughness", i.e. when the top and bottom surfaces of the film are exactly the same up to a shift in the vertical direction. The limiting behavior of the magnetostatic energy in the general case  is treated in Section~\ref{subsec:general}. After that, in Section~\ref{sbs:exch} we identify the limiting behavior of the exchange energy. Combining all of the above we arrive at the $\Gamma$-convergence result which completes the proof of the Theorem~\ref{th:main} in Section~\ref{sbs:G}.

\section{Formulation of the problem and statement of the main results}\label{sec2}
In this section we provide a rigorous mathematical set-up of the problem and state out main results in the Theorem~\ref{th:main}. We are interested in proving a $\Gamma$-convergence result and deriving a simplified reduced micromagnetic model (see \eqref{E0gen}). Without loss of generality we are going to consider the case of zero anisotropy and external field, $K=0$ and $h_{ext}=0$ since $\Gamma$-convergence is insensitive to continuous perturbations of the energy functional. 

\vskip 0.2cm

In the following, in order to indicate the generic point $x\in \R^3$ we will use the notation $x=(x', x_3)$, with  $x'=(x_1, x_2)\in \R^2$ and $x_3\in \R$. We also  set $Q:=(0,1) \times (0,1)$ and $\mathbb{S}^2=\pa B(0,1)=\{\xi\in \R^3:\, |\xi|=1\}$.

 Let $f_1, f_2:\R^2\to (0, +\infty)$ be  Lipschitz continuous $Q$-periodic functions, with periodic cell given by  $Q$,  with $f_1<f_2$, and $\omega\subset\R^2$  a bounded open set with Lipschitz boundary. 
 
We will consider three-dimensional  thin film domains with oscillating profiles of the form
\beq\label{dom-gen}
V_\e =\left\{ (x',x_3) \, : \, x' \in \omega \, , \e f_1\left(\frac{x'}{\e}\right)<x_3< \e f_2\left(\frac{x'}{\e}\right) \right\}.
\eeq
\begin{figure}[htbp]
 \begin{center}
 \includegraphics[scale=0.06]{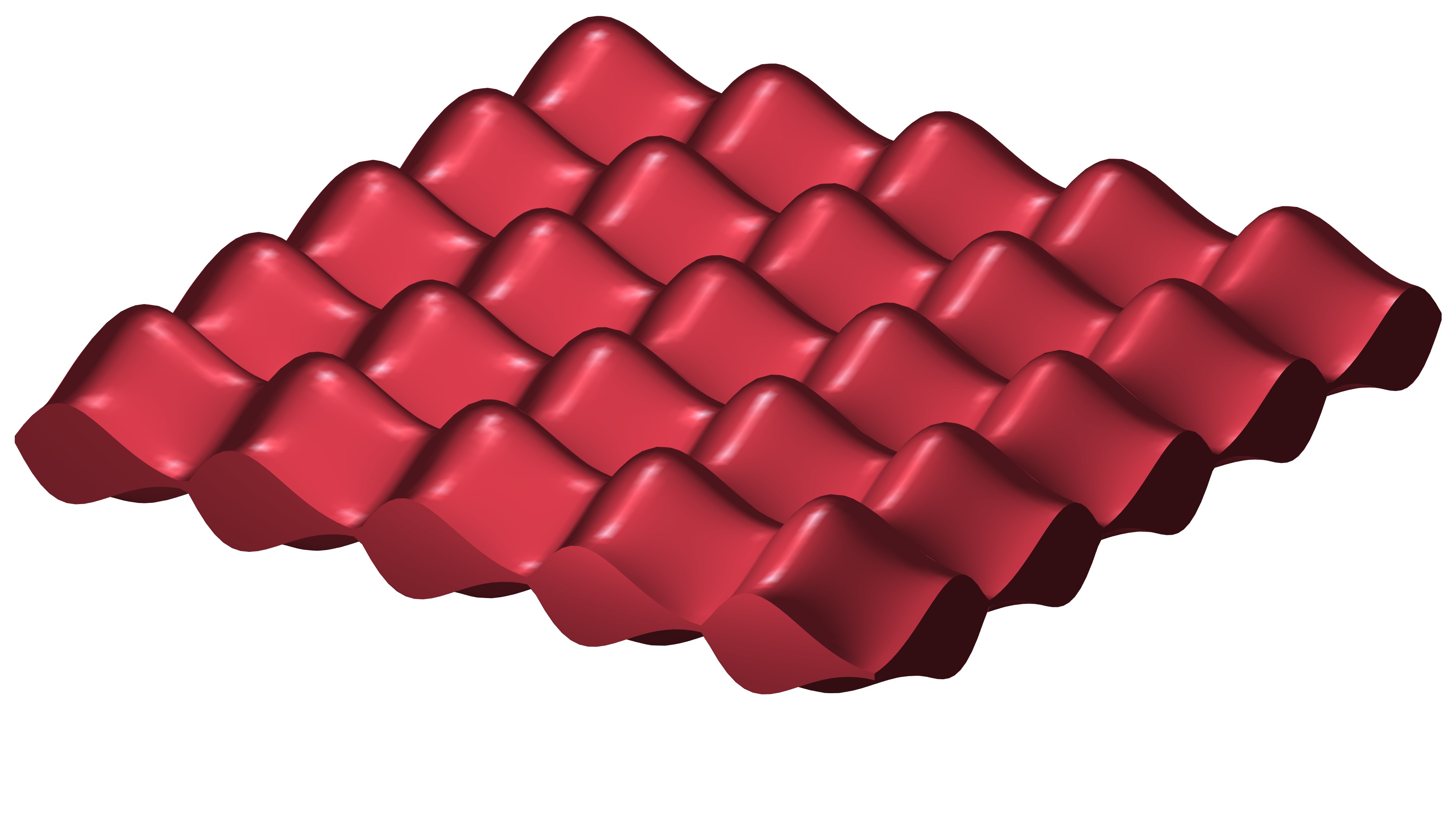}
  \includegraphics[scale=0.06]{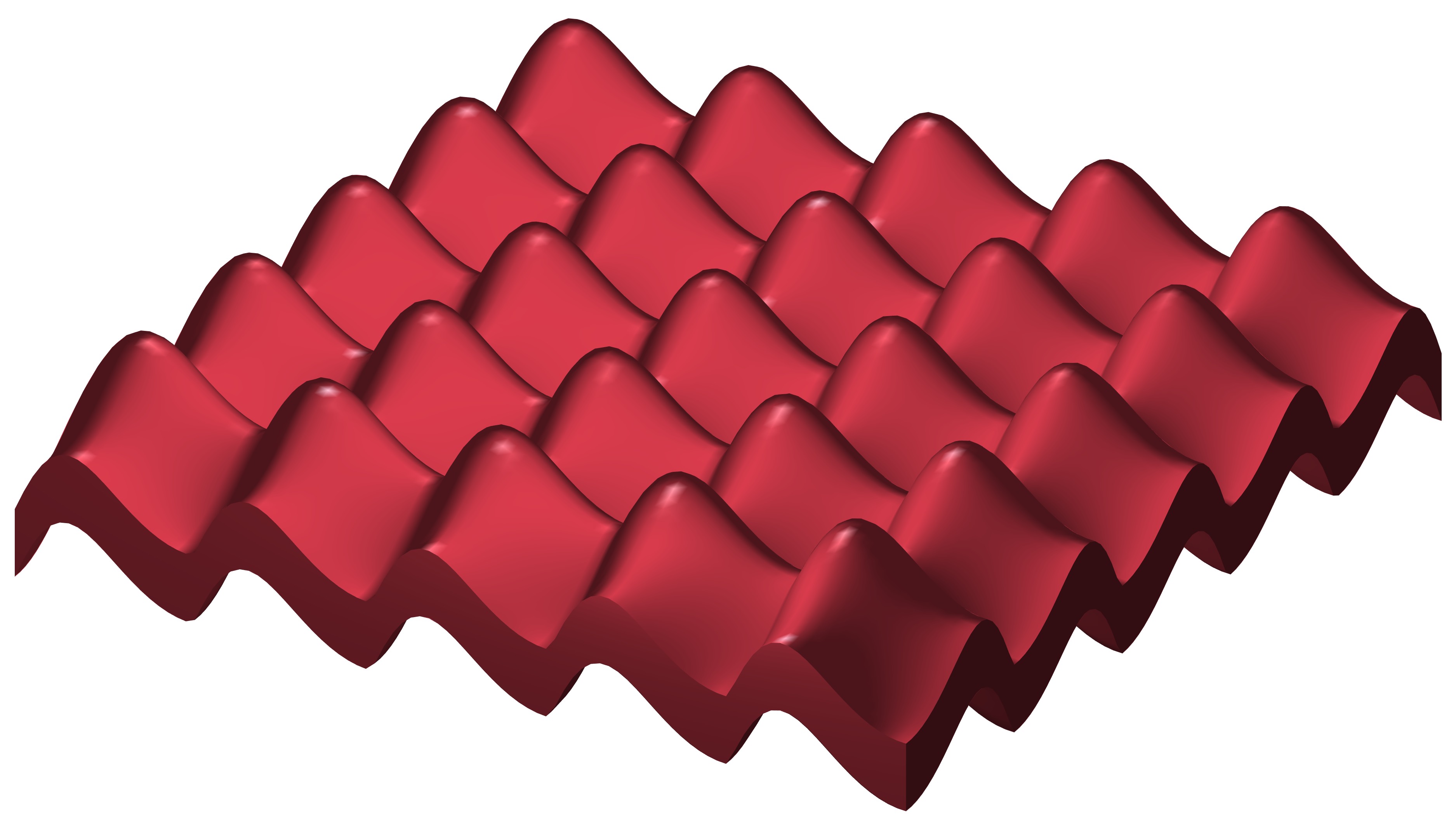}
 \caption{Thin film with generic periodic roughness $V_\e$ (left) and parallel roughness (right) \cite{slastikov17}.}
 \label{fig:oe}
 \end{center}
 \end{figure} 
 
\noindent We recall that given a {\em magnetization} $M\in H^1(V_\e; \mathbb{S}^2)$,  the corresponding micromagnetic energy of the film is  defined as
\beq\label{enMM}
{\mathcal E}_\e (M): = d^2 \int_{V_\e} |\nabla M|^2 +  \int_{\R^3} |\nabla u|^2,
\eeq
where $d>0$ is a material parameter, the so-called {\em exchange constant}, and $u_\e$ is determined as the unique  solution to  
\beq\label{eqe}
\Delta u = {\rm div} ( M  \chi_{V_\e} )  \quad\text{in $\R^3$}
\eeq
in  $\dot{H}^1(\R^3)$, that is, in the homogeneous Sobolev Space obtained as a completion of $C^{\infty}_c(\R^3)$ with respect to the norm $\|u\|_{\dot{H}^1(\R^3)}:=\|\nabla u\|_{L^2(\R^3)}$.
In order to study the limiting behavior of the energy as $\e\to 0^+$, it is convenient to consider the following rescaled energies:
\beq\label{rescEn}
E_\e (m) = {d^2}\int_{\Omega_\e} \left( | \nabla_{x'} m|^2 + \frac{1}{\e^2} |\partial_{x_3} m|^2 \right)\, dx+ \frac{1}{\e}  \int_{\R^3} |\nabla u|^2,
\eeq
defined for all $m\in H^1(\Om_\e; \mathbb{S}^2)$, where 
 $$
\Omega_\e = \left\{ (x', x_3) \, : \, x' \in \omega\, ,  f_1\left(\frac{x'}{\e}\right)<x_3<  f_2\left(\frac{x'}{\e}\right) \right\}
$$
and $u$ now solves  \eqref{eqe} with $M\in H^1(V_\e; \mathbb{S}^2)$ defined by 
$$
M(x',x_3):=m(x',x_3/\e)\,.
$$
Note that 
$$
E_\e (m)=\frac{1}{\e}{\mathcal E}_\e (M)\,.
$$
We also set
$$
Q_{f_1,f_2}:=\{(x', x_3)\in \R^3: x'\in Q\text{ and }f_1(x')<x_3<f_2(x')\}
$$
and denote by $H^1_\#(Q_{f_1,f_2}; \R^3)$ the space of functions $\vphi\in H^1(Q_{f_1,f_2}; \R^3)$ that are $Q$-periodic in the $x'$-variable. 
We will show that the limiting energy is given by the following functional $E_0: H^1(\omega; \mathbb{S}^2)\to [0, +\infty)$ defined by


\beq\label{E0gen}
E_0(m):=d^2\int_{\omega}g_{\mathrm{hom}}(\nabla m)\, dx'+ \int_{\omega}A_{\mathrm{hom}}\, m \cdot m\, dx'
\eeq
for every $m\in H^1(\omega; \mathbb{S}^2)$, where $g_{\mathrm{hom}}:\M^{3\times 2} \to \R$ is  given by
\beq\label{ghom}
g_{\mathrm{hom}}(\xi):=\inf_{\vphi\in H^1_{\#}(Q_{f_1,f_2};\,\R^3)}\int_{Q_{f_1,f_2}}\left(|\xi+\nabla_{y'}\vphi|^2+|\partial_{y_3}\vphi|^2\right)\, dy
\eeq
and {\it constant matrix} $ A_{\mathrm{hom}}$ is defined as
\begin{align}\label{Ahomgen}
 A_{\mathrm{hom}}:&=\frac{1}{4\pi}\int_{Q}\int_{\R^2} \Biggl(  \frac{n_1(x') \otimes n_1(z'+x')}{\sqrt{|z'|^2 + \left| f_1( z'+x') - f_1(x') \right|^2}} 
- \frac{n_1(x') \otimes n_1(z'+x')}{\sqrt{|z'|^2 +1}}   \Biggr)\,  dz'dx' \nonumber \\
 &+\frac{1}{4\pi}\int_{Q}\int_{\R^2}\Biggl(  \frac{n_2(x') \otimes n_2(z'+x') }{\sqrt{|z'|^2 + \left| f_2( z'+x') - f_2(x') \right|^2}} 
- \frac{n_2(x') \otimes n_2(z'+x') }{\sqrt{|z'|^2 +1}}   \Biggr)\, dz'dx' \nonumber\\
&- \frac{1}{2\pi}\int_{Q}\int_{\R^2} \Biggl(  \frac{n_1(x') \otimes n_2(z'+x')}{\sqrt{|z'|^2 + \left| f_2( z'+x') - f_1(x') \right|^2}} 
- \frac{n_1(x') \otimes n_2(z'+x')}{\sqrt{|z'|^2 +1}}   \Biggr)\, dz'dx' \nonumber\\
&+\frac{1}{4\pi}\int_{Q}\int_{\R^2}\Biggl(\frac{I-\mathbf{e_3}\otimes\mathbf{e_3}}{(|z'|^2+1)^{3/2}} -\frac{3(z',0)\otimes (z',0)}{(|z'|^2+1)^{5/2}}\Biggr) \nonumber \\
 & \qquad \qquad \qquad \qquad \qquad  \cdot (f_2(z'+x')-f_1(z'+x'))(f_2(x')-f_1(x'))\, dz'dx'. 
\end{align}
In the above formula we used the notation
\beq\label{nigen}
n_i(x'):=(-\nabla f_i(x'), 1)\quad i=1,2\qquad\text{and}\qquad \mathbf{e_3}:=(0,0,1)\,.
\eeq

We will also show below (see Section~\ref{subsec:parallel}) that in the case of parallel profiles, that is when $f_2= f_1+a$ for a suitable constant $a>0$ (see Figure~\ref{fig:oe}) the expression of  $A_{\mathrm{hom}}$ reduces to the following much simpler formula: 
\begin{align}\label{Ahom}
A_{\mathrm{hom}}=\frac{1}{2\pi}
\int_Q \int_{\R^2} \Biggl[  \frac{ n(x') \otimes n(z'+x')}{\sqrt{|z'|^2 + \left| f( z'+x') - f(x') \right|^2}} - \frac{ n(x') \otimes n(z'+x')}{\sqrt{|z'|^2 + \left|a+ f(z'+x') - f( x') \right|^2}}   \Biggr]\, d z'  d x',
\end{align}
with 
$$
n(x'):=(-\nabla f(x'), 1)\,.
$$
The link between \eqref{rescEn} and \eqref{E0gen} is made precise by the  the following compactness and $\Gamma$-convergence type statement, which represents the main result of the paper. 
\begin{theorem}\label{th:main} The following  statements hold.
\begin{itemize}
\item[i)] (Compactness) Let $\{m_\e\}_\e$ be such that $m_\e\in H^1(\Om_\e; \mathbb{S}^2)$  for every $\e>0$ and 
$$
\sup_{\e} E_\e(m_\e)<+\infty.
$$
Then, there exists $m_0\in H^1(\omega; \mathbb{S}^2)$ and a (not relabelled) subsequence such that 
\beq\label{conve1}
\int_{\Om_\e}|m_\e(x)-m_0(x')|^2\, dx\to 0
\eeq
as $\e\to 0^+$.
\item[ii)] ($\Gamma$-liminf inequality) Let $m_0\in H^1(\omega; \mathbb{S}^2)$  and let $\{m_\e\}_\e$ be such that $m_\e\in H^1(\Om_\e; \mathbb{S}^2)$  for every $\e>0$ and \eqref{conve1} holds. 
 Then
$$
E_0(m_0)\leq \liminf_{\e \to 0} E_\e(m_\e)\,.
$$
\item[iii)] ($\Gamma$-limsup inequality) For any $m_0\in H^1(\omega; \mathbb{S}^2)$, there exists $\{m_\e\}_\e$,   with $m_\e\in H^1(\Om_\e; \mathbb{S}^2)$ for all $\e>0$, such that \eqref{conve1} holds and 
$$
E_0(m_0)= \lim_{\e \to 0} E_\e(m_\e)\,.
$$
\end{itemize}
\end{theorem}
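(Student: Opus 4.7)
The plan is to exploit the additive decomposition $E_\e(m)=E_\e^{\mathrm{exch}}(m)+E_\e^{\mathrm{mag}}(m)$ into the rescaled exchange and stray-field contributions, to analyze each piece by its own technique, and to combine them into a single $\Gamma$-convergence statement. Compactness follows from the exchange bound alone: once $E_\e^{\mathrm{exch}}(m_\e)=O(1)$, one has $\|\partial_{x_3}m_\e\|_{L^2(\Om_\e)}=O(\e)$ together with a uniform $L^2$-control on $\nabla_{x'}m_\e$. After a change of variables straightening $\Om_\e$ to the fixed cylinder $\omega\times(0,1)$ (Jacobian uniformly bounded above and away from zero because $f_2-f_1\geq c>0$), a vertical Poincar\'e inequality forces the transformed magnetization to concentrate in $L^2$ on its vertical average $\bar m_\e(x')$, which is uniformly bounded in $H^1(\omega;\R^3)$. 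Extraction of a subsequence yields $\bar m_\e\to m_0$ strongly in $L^2(\omega)$, and the identity
\[
\int_{\Om_\e}|m_\e-\bar m_\e|^2\,dx=\int_\omega\bigl(f_2(x'/\e)-f_1(x'/\e)\bigr)\bigl(1-|\bar m_\e|^2\bigr)\,dx'
\]
combined with $f_2-f_1\geq c>0$ forces $|m_0|=1$ a.e., proving~(i).

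For the $\Gamma$-liminf (ii) I treat the two contributions independently. The magnetostatic liminf is precisely what Sections~\ref{subsec:parallel}--\ref{subsec:general} furnish: writing $\frac{1}{\e}\int_{\R^3}|\nabla u|^2$ as a double integral over $\pa V_\e$ with a kernel that concentrates as $\e\to 0$, one isolates the leading dipolar contribution and identifies its homogenized limit on a magnetization of the form $m_0(x')$ with $\int_\omega A_{\mathrm{hom}}\,m_0\cdot m_0\,dx'$. For the exchange I would adapt the two-scale approach of \cite{neukamm10}: after the cylinder straightening above, the rescaled pair $(\nabla_{x'}m_\e,\frac{1}{\e}\partial_{x_3}m_\e)$ admits a two-scale limit of the form $(\nabla m_0(x')+\nabla_{y'}\Phi(x',y),\partial_{y_3}\Phi(x',y))$ with $\Phi\in L^2(\omega;H^1_\#(Q_{f_1,f_2};\R^3))$; two-scale lower semicontinuity of the squared $L^2$-norm yields
\[
\liminf_\e E_\e^{\mathrm{exch}}(m_\e)\geq d^2\int_\omega\!\int_{Q_{f_1,f_2}}\!\bigl(|\nabla m_0+\nabla_{y'}\Phi|^2+|\partial_{y_3}\Phi|^2\bigr)\,dy\,dx',
\]
and pointwise minimization in $\Phi(x',\cdot)$ reproduces $d^2\int_\omega g_{\mathrm{hom}}(\nabla m_0)\,dx'$.

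For the $\Gamma$-limsup (iii), the key preliminary observation is that the infimum in \eqref{ghom} at $\xi=\nabla m_0(x')$ may be attained by a $\vphi$ pointwise tangent to $m_0(x')$: the decomposition $\vphi=\vphi^T+(\vphi\cdot m_0)m_0$ (with $m_0(x')$ fixed) splits the integrand into orthogonal tangent and normal parts, the latter equal to $|\nabla(\vphi\cdot m_0)|^2$ and minimized by constants. For $m_0\in C^\infty(\bar\omega;\mathbb{S}^2)$ I then select, measurably in $x'$, a tangential quasi-optimal corrector $\vphi(x',y)$ and set
\[
\tilde m_\e(x',x_3):=\pi_{\mathbb{S}^2}\bigl(m_0(x')+\e\,\vphi(x',x'/\e,x_3)\bigr),
\]
with $\pi_{\mathbb{S}^2}$ the nearest-point projection onto the sphere, well defined for $\e$ small. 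Tangency together with $d\pi_{\mathbb{S}^2}(p)=I-p\otimes p$ produces the correct two-scale behaviour of $\nabla\tilde m_\e$, and a direct computation gives $E_\e^{\mathrm{exch}}(\tilde m_\e)\to d^2\int_\omega g_{\mathrm{hom}}(\nabla m_0)\,dx'$. The magnetostatic term converges to $\int_\omega A_{\mathrm{hom}}\,m_0\cdot m_0\,dx'$ by Sections~\ref{subsec:parallel}--\ref{subsec:general}, since the $O(\e)$ perturbation does not alter the leading-order stray field. A diagonal/density argument extends the construction to arbitrary $m_0\in H^1(\omega;\mathbb{S}^2)$.

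The most delicate step will be the magnetostatic homogenization: replacing the nonlocal, singular double-integral representation of $\frac{1}{\e}\int_{\R^3}|\nabla u|^2$ by a pointwise quadratic form in $m_0$ requires a careful isolation of the leading dipolar contribution and a delicate two-scale passage inside kernels of the form $1/\sqrt{|z'|^2+|f_i(z'+x')-f_j(x')|^2}$; the subtraction of the $1/\sqrt{|z'|^2+1}$ regularizations in \eqref{Ahomgen} is what is needed to restore absolute convergence after homogenization. The $\mathbb{S}^2$-valued constraint, handled by the tangential reduction of the cell problem outlined above, is the secondary but genuine obstacle in the recovery-sequence construction.
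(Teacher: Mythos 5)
Your decomposition into exchange and magnetostatic contributions, and the overall treatment of each (two-scale convergence for the exchange part; homogenization of the surface-charge kernel for the magnetostatic part; tangential corrector plus spherical projection for the recovery sequence) is exactly the architecture of the paper's proof, and your compactness argument -- including the explicit computation $\int_{\Om_\e}|m_\e-\bar m_\e|^2 = \int_\omega (f_2-f_1)(x'/\e)(1-|\bar m_\e|^2)\,dx'$, together with $f_2-f_1\geq c>0$, to force $|m_0|=1$ a.e. -- is correct and in fact slightly more explicit than what the paper writes. Two implementation choices differ, and one point needs to be tightened. For the exchange liminf you straighten $\Om_\e$ to $\omega\times(0,1)$ before passing to two-scale limits; the paper instead extends $m_\e$ from $\Om_\e$ to a fixed cylinder $\Om^M=\omega\times(0,M)$ by iterated vertical reflections across the graphs of $f_1,f_2$ (Lemma~\ref{lm:extension}) and two-scales the extended field weighted by the indicator $\chi_{\{f_1<x_3<f_2\}}$, approximated from below by Lipschitz periodic cut-offs $g_k$. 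Extension keeps the Dirichlet form with constant coefficients; straightening makes the quadratic form in $(\nabla_{x'}\tilde m_\e,\tfrac1{\e}\partial_s\tilde m_\e)$ oscillate with $x'/\e$, and its cell problem unfolds to the one you state on $Q_{f_1,f_2}$ only after undoing the change of variables -- make that unfolding explicit, since as written you quote a two-scale limit on $Q_{f_1,f_2}$ while working in the straightened cylinder. In the recovery construction, ``select, measurably in $x'$, a tangential quasi-optimal corrector'' is not enough: $\vphi(x',x'/\e,x_3)$ must be $H^1$ in $x$, and for that the paper first establishes continuity of $x'\mapsto\vphi_{\nabla m_0(x')}$ (Remark~\ref{rm:cell}), mollifies in $x'$, and then re-projects onto the tangent plane of $m_0(x')$ (Lemma~\ref{lm:density}); your $d\pi_{\S^2}$ computation can be bypassed as the paper does by using $|\hat m_\e|\geq 1$ (a consequence of orthogonality) to bound the projected gradient by the unprojected one. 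Finally, you leave implicit the reduction of the stray field from $m_\e$ to its vertical average $\overline M_\e$ (Lemma~\ref{lm:media}); without it the kernel homogenization of Propositions~\ref{lm:main} and~\ref{prop:main} applies only to $\overline M_\e$, not to $m_\e$ itself.
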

As a consequence of the above theorem, we will  be able to establish the following corollary about the asymptotic behavior of global minimizers. 
\begin{corollary}\label{cor:main}
Let $m_\e\in H^1(\Om_\e; \mathbb{S}^2)$ be a minimizer of $E_\e$.
 Then, up to a (not relabelled) subsequence,  
 $$
\int_{\Om_\e}|m_\e-e_0|^2\, dx\to 0
$$
for a suitable $e_0\in \mathbb{S}^2$ such that 
$$
A_{\mathrm{hom}}\, e_0\cdot e_0=\min_{e\in \mathbb{S}^2} A_{\mathrm{hom}}\, e\cdot e\,.
$$
\end{corollary}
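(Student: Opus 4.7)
The plan is to apply the standard $\Gamma$-convergence argument for convergence of minimizers, combined with a strict positivity property of the homogenized exchange density $g_{\mathrm{hom}}$. First I would pick $e^*\in\mathbb{S}^2$ realizing $\min_{e\in\mathbb{S}^2}A_{\mathrm{hom}}\, e\cdot e$ and view it as a constant element of $H^1(\omega;\mathbb{S}^2)$. Since the choice $\vphi\equiv 0$ in \eqref{ghom} gives $g_{\mathrm{hom}}(0)=0$, we have $E_0(e^*)=|\omega|\, A_{\mathrm{hom}}\, e^*\cdot e^*$. Applying Theorem~\ref{th:main}(iii) to $e^*$ produces a recovery sequence $\bar m_\e\in H^1(\Om_\e;\mathbb{S}^2)$ with $E_\e(\bar m_\e)\to E_0(e^*)$, and the minimality of $m_\e$ then forces
$$
\limsup_{\e\to 0}E_\e(m_\e)\leq \lim_{\e\to 0}E_\e(\bar m_\e)=|\omega|\, A_{\mathrm{hom}}\, e^*\cdot e^*.
$$

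The resulting uniform energy bound, together with the compactness statement Theorem~\ref{th:main}(i), produces along a subsequence a limit $m_0\in H^1(\omega;\mathbb{S}^2)$ satisfying \eqref{conve1}, and the $\Gamma$-liminf inequality Theorem~\ref{th:main}(ii) gives $E_0(m_0)\leq |\omega|\, A_{\mathrm{hom}}\, e^*\cdot e^*$. On the other hand, since $|m_0(x')|=1$ a.e.\ one has $A_{\mathrm{hom}}\, m_0(x')\cdot m_0(x')\geq A_{\mathrm{hom}}\, e^*\cdot e^*$ pointwise, and together with $g_{\mathrm{hom}}\geq 0$ this yields the reverse inequality $E_0(m_0)\geq |\omega|\, A_{\mathrm{hom}}\, e^*\cdot e^*$. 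Therefore both inequalities are equalities at the pointwise level: $g_{\mathrm{hom}}(\nabla m_0)=0$ a.e.\ in $\omega$ and $A_{\mathrm{hom}}\, m_0\cdot m_0=\min_{e\in\mathbb{S}^2}A_{\mathrm{hom}}\, e\cdot e$ a.e.

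The key step, and where the real work lies, is to conclude that $m_0$ is constant, which reduces to the claim $g_{\mathrm{hom}}(\xi)=0\Longrightarrow \xi=0$. I would prove this by taking a minimizing sequence $\vphi_k\in H^1_\#(Q_{f_1,f_2};\R^3)$ in \eqref{ghom}, normalizing so that $\int_{Q_{f_1,f_2}}\vphi_k\,dy=0$, and observing that $\xi+\nabla_{y'}\vphi_k$ and $\partial_{y_3}\vphi_k$ are bounded in $L^2$; the triangle inequality then controls $\nabla_{y'}\vphi_k$ in $L^2$, and the Poincar\'e--Wirtinger inequality on the bounded Lipschitz domain $Q_{f_1,f_2}$ gives $H^1$-boundedness of $\vphi_k$. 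Extracting a weakly convergent subsequence $\vphi_k\wto \vphi_*$ preserves the $Q$-periodicity in $y'$ (a trace condition, stable under weak $H^1$-convergence), and weak lower semicontinuity forces $\xi+\nabla_{y'}\vphi_*=0$ and $\partial_{y_3}\vphi_*=0$ a.e., so $\vphi_*(y)=-\xi y'+c$ on $Q_{f_1,f_2}$. The $Q$-periodicity of $\vphi_*$ in $y'$ is compatible with this form only if $\xi=0$. Hence $\nabla m_0\equiv 0$, so $m_0\equiv e_0$ for some $e_0\in\mathbb{S}^2$ with $A_{\mathrm{hom}}\, e_0\cdot e_0=\min_{e\in\mathbb{S}^2}A_{\mathrm{hom}}\, e\cdot e$, which combined with \eqref{conve1} gives the corollary.
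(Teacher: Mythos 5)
Your proposal is correct and takes essentially the same route as the paper: invoke parts (i)--(iii) of Theorem~\ref{th:main} in the standard way to pass minimizers of $E_\e$ to a minimizer $m_0$ of $E_0$ along a subsequence, and then argue that minimizers of $E_0$ must be constants minimizing the quadratic form $A_{\mathrm{hom}}\,e\cdot e$. The paper disposes of this last step with ``it is now easy to see''; you supply exactly the needed detail, namely the strict positivity $g_{\mathrm{hom}}(\xi)=0\Rightarrow\xi=0$, and your argument for it (normalize the minimizing sequence, Poincar\'e--Wirtinger on $Q_{f_1,f_2}$, weak compactness, conclude $\vphi_*$ is affine and then use $Q$-periodicity in $y'$) is sound; one could equivalently start from the minimizer $\vphi_\xi$ whose existence is already implicit in Remark~\ref{rm:cell} rather than from a minimizing sequence.
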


\section{Proofs of the results}\label{sec3}
In this section we collect the proofs of the main results. We treat separately the magnetostatic and the exchange energies.  We start with the study of the magnetostatic energy, which represents the main novelty of the present analysis. In order to simplify the exposition, in Section~\ref{subsec:parallel}  we consider first the case of parallel profiles (see Figure~\ref{fig:oe}). Then, in Section~\ref{subsec:general} we consider the case of general surface roughness, requiring a more intricate analysis, and identify the limiting behavior of the magnetostatic energy in  Proposition~\ref{prop:main}. The $\Gamma$-limit of the exchange energy is investigated in Section~\ref{sbs:exch} (see Propositions~\ref{prop:exch-liminf}, \ref{prop:exch-limsup}). Finally, combining the  aforementioned results we provide the proof of Theorem~\ref{th:main} in Section~\ref{sbs:G}.

\subsection{Study  of the magnetostatic energy:  the case of parallel profiles}\label{subsec:parallel} 
Following   \cite{kohn05arma,slastikov05}, in order to treat the magnetostatic energy we show that its limiting behavior  can be reduced to that of  the energy of magnetic charges at the top and bottom surfaces of the thin layer (see Lemmas~\ref{lm:rappresentazione}-\ref{lm:bulk-boundary}). The core of the analysis is then represented by the study of the latter contribution (see Proposition~\ref{lm:main}) .

In what follows we set $f_1=f$ and $f_2=f+a$, for some $Q$-periodic Lipschitz continuous function $f$ and $a>0$, so that \eqref{dom-gen} becomes 
\beq\label{dom}
V_\e =\left\{ (x',x_3) \, : \, x' \in \omega \, , \e f\left(\frac{x'}{\e}\right)<x_3< a \e+\e f\left(\frac{x'}{\e}\right) \right\}
\eeq
and thus 
$$
\Omega_\e = \left\{ (x', x_3) \, : \, x' \in \omega\, ,  f\left(\frac{x'}{\e}\right)<x_3< a + f\left(\frac{x'}{\e}\right) \right\}\,.
$$
 The typical examples that we might consider is
$$
f(x')= \sin^2(\pi x_1)\sin^2(\pi x_2) \quad \hbox{ or } \quad f(x')= \sin^2(\pi x_1)\,.
$$
We start by recalling the following well-known useful representation formula for the {\em magnetostatic} energy.
\begin{lemma}\label{lm:rappresentazione} Let $u$ solve \eqref{eqe}. 
Then
\begin{align}\label{representation}
4\pi \int_{\R^3}|\nabla u|^2\, dx &= \int_{V_\e}\int_{V_\e}\frac{1}{|x-y|}\Div M(x)\Div M(y)\, dxdy\nonumber\\
&+\int_{\pa V_\e}\int_{\pa V_\e} \frac{1}{|x-y|}(M\cdot \nu_{\e})(x)(M\cdot \nu_{\e})(y)\, d\H^{2}(x)d\H^2(y)\\
&-2\int_{\pa V_\e}\int_{V_\e}\frac{1}{|x-y|}\Div M(y)(M\cdot \nu_\e)(x)\, dy d\H^{2}(x)\,, \nonumber
\end{align}
where $\nu_\e$ denotes the outer unit normal to $V_\e$. 
\end{lemma}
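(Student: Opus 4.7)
My approach is the classical Riesz-potential computation that reduces the Dirichlet energy of a Newtonian potential to a double integral against $1/|x-y|$. The first step is to recognise that the distributional source $\Div(M\chi_{V_\e})$ is actually a finite compactly supported signed Radon measure $\mu_\e$: testing it against $\varphi\in C_c^\infty(\R^3)$ and integrating by parts on $V_\e$ gives
\begin{equation*}
\langle \Div(M\chi_{V_\e}),\varphi\rangle = \int_{V_\e}\varphi\,\Div M\, dx - \int_{\pa V_\e}\varphi\,(M\cdot\nu_\e)\, d\H^2,
\end{equation*}
so $\mu_\e$ consists of the bulk density $\Div M\in L^2(V_\e)$ on $V_\e$ and the surface density $-(M\cdot\nu_\e)\in L^2(\pa V_\e)$ on $\pa V_\e$ (the trace being well defined because $\pa V_\e$ is Lipschitz). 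Using the fundamental solution $G(x)=-1/(4\pi|x|)$ of the Laplacian, the unique solution of \eqref{eqe} in $\dot H^1(\R^3)$ is then the Newtonian potential $u(x) = -\frac{1}{4\pi}\int_{\R^3}|x-y|^{-1}\,d\mu_\e(y)$, which satisfies $|u(x)| = O(|x|^{-1})$ and $|\nabla u(x)| = O(|x|^{-2})$ at infinity thanks to the compact support of $\mu_\e$.

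The second step is the energy identity
\begin{equation*}
\int_{\R^3}|\nabla u|^2\,dx = -\int_{\R^3} u\,d\mu_\e,
\end{equation*}
which I would obtain by inserting $\eta_R u$ as test function in the weak formulation $\int\nabla u\cdot\nabla\varphi\,dx = -\int\varphi\,d\mu_\e$, where $\eta_R\in C^\infty_c(\R^3)$ equals $1$ on $B_R$, is supported in $B_{2R}$, and satisfies $|\nabla\eta_R|\le C/R$; the cutoff error vanishes as $R\to\infty$ thanks to the decay estimates above.

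Substituting the potential representation of $u$ into the right-hand side and applying Fubini then yields
\begin{equation*}
4\pi \int_{\R^3}|\nabla u|^2\,dx = \int_{\R^3}\int_{\R^3}\frac{d\mu_\e(x)\,d\mu_\e(y)}{|x-y|}.
\end{equation*}
Splitting the product measure $\mu_\e\otimes\mu_\e$ into its bulk$\times$bulk, surface$\times$surface, and bulk$\times$surface pieces (the last appearing twice by the symmetry $x\leftrightarrow y$ of $|x-y|^{-1}$) produces exactly the three integrals in \eqref{representation}; the minus sign in front of the mixed term is a direct consequence of the sign $-(M\cdot\nu_\e)$ in $\mu_\e$.

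The main obstacle is the application of Fubini on the surface–surface piece, since $1/|x-y|$ is only weakly singular and could in principle fail to be jointly integrable against $\H^2\otimes\H^2$ on $\pa V_\e\times\pa V_\e$. This is handled by the classical single-layer bound $\int_{\pa V_\e}|x-y|^{-1}\,d\H^2(y)\le C$ uniform in $x\in\pa V_\e$ (valid because $\pa V_\e$ is Lipschitz), which combined with the $L^2$ control of $M\cdot\nu_\e$ makes the iterated integrals absolutely convergent and legitimises the exchange of integration order. The truncation argument in the energy identity is also required but is entirely standard.
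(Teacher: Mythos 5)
Your argument is correct and is exactly the classical potential-theoretic derivation: the paper itself gives no proof here (it simply cites \cite[p.~237]{kohn05arma}), and what you have written is the standard argument that underlies that reference. The identification of the source $\Div(M\chi_{V_\e})$ with the bulk density $\Div M$ on $V_\e$ plus the surface density $-(M\cdot\nu_\e)$ on $\pa V_\e$, the Newtonian potential representation, the energy identity $\int|\nabla u|^2=-\int u\,d\mu_\e$ obtained by cutting off, and the final splitting of $\mu_\e\otimes\mu_\e$ into its three pieces all check out, and your handling of the Fubini issue via the uniform single-layer bound on a Lipschitz boundary is the right way to close the argument (one can even replace $L^2$ by $L^\infty$ control of $M\cdot\nu_\e$, since $|M|=1$). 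The only point worth making explicit is the identification of the Newtonian potential with the unique $\dot H^1$ solution: since $M\chi_{V_\e}\in L^2(\R^3)$, the source lies in $H^{-1}(\R^3)$ and Lax--Milgram gives uniqueness in $\dot H^1$, while the potential is easily checked to lie in that space by its decay at infinity and local regularity, so the two coincide. With that remark your proof is complete.
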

\begin{proof}
See \cite[page 237]{kohn05arma}.
\end{proof}
{\bf Notational warning:} In all the following results (and proofs) $C$ will denote a positive constant possibly depending only on $f$ and $\omega$ (and possibly changing from line to line).

The next lemma provides a simple estimate that will allow us to reduce to the case of $x_3$-independent magnetizations. 
\begin{lemma}\label{lm:media}
Let $M\in H^1(V_\e; \mathbb{S}^2)$.  Set  
$$
\overline M(x'):=\frac{1}{a \e}\int_{\e f(x'/\e)}^{\e a+ \e f(x'/\e)}M(x', x_3)\, dx_3
$$
and let $\bar u$ be the solution to \eqref{eqe} with $M$ replaced by $\overline M$.  Then,
$$
\biggl| \int_{\R^3}|\nabla u|^2\, dx -\int_{\R^3}|\nabla \bar u|^2\, dx\biggr|\leq C \e^{3/2}\biggl\|\frac{\pa M}{\pa x_3}\biggr\|_{L^2(V_\e)}\,.
$$
\end{lemma}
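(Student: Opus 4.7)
The plan is to exploit the quadratic structure of the magnetostatic energy. For $M_1, M_2\in L^2(V_\e;\R^3)$ (extended by zero outside $V_\e$), introduce the symmetric bilinear form $B(M_1,M_2):=\int_{\R^3}\nabla u_{M_1}\cdot \nabla u_{M_2}\, dx$, where $u_{M_i}\in \dot H^1(\R^3)$ is the unique solution of \eqref{eqe} associated with $M_i$. Since $B(M,M)=\|\nabla u_M\|_{L^2(\R^3)}^2\ge 0$, the form is positive semi-definite, hence it satisfies the Cauchy--Schwarz inequality. Writing
$$
\int_{\R^3}|\nabla u|^2 \,dx-\int_{\R^3}|\nabla \bar u|^2\, dx= B(M,M)-B(\overline M,\overline M)= B(M-\overline M,\, M+\overline M),
$$
the problem is reduced to estimating the two factors in
$$
|B(M-\overline M,M+\overline M)|\le B(M-\overline M, M-\overline M)^{1/2}\cdot B(M+\overline M, M+\overline M)^{1/2}.
$$

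For the second factor I would use the standard a priori energy estimate: testing \eqref{eqe} against $u_M$ itself yields $\|\nabla u_M\|_{L^2(\R^3)}\le \|M\|_{L^2(V_\e)}$. Since $|M|=1$ a.e.\ and $|\overline M|\le 1$ by Jensen's inequality, together with $|V_\e|\le C\e$, one obtains
$$
B(M+\overline M, M+\overline M)^{1/2}\le \|M+\overline M\|_{L^2(V_\e)}\le 2|V_\e|^{1/2}\le C\e^{1/2}.
$$

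For the first factor, the point is that $\overline M(x')$ is the mean value of $M(x',\cdot)$ on the vertical fiber of length $a\e$. A one-dimensional Poincar\'e inequality on each fiber gives, for a.e.\ $x'\in\omega$,
$$
\int_{\e f(x'/\e)}^{\e(a+f(x'/\e))}|M(x',x_3)-\overline M(x')|^2\, dx_3\le C\e^2\int_{\e f(x'/\e)}^{\e(a+f(x'/\e))}|\partial_{x_3}M|^2\, dx_3,
$$
and integrating in $x'\in\omega$ yields $\|M-\overline M\|_{L^2(V_\e)}\le C\e\|\partial_{x_3}M\|_{L^2(V_\e)}$. Applying again the a priori bound $B(\cdot,\cdot)^{1/2}\le \|\cdot\|_{L^2(V_\e)}$ to $M-\overline M$ gives
$$
B(M-\overline M, M-\overline M)^{1/2}\le C\e\,\|\partial_{x_3}M\|_{L^2(V_\e)}.
$$
Multiplying the two estimates produces the desired $C\e^{3/2}\|\partial_{x_3}M\|_{L^2(V_\e)}$ bound.

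There is no substantial obstacle here: once the bilinear-form viewpoint is adopted, the argument is a clean combination of Cauchy--Schwarz, the trivial $L^2$ energy estimate for the Poisson equation, and a slice-wise Poincar\'e inequality. The only mildly delicate point is the careful tracking of powers of $\e$, where the factor $\e^{1/2}$ coming from $|V_\e|\le C\e$ combines with the factor $\e$ from the Poincar\'e inequality on a fiber of length $a\e$ to produce precisely the $\e^{3/2}$ in the statement.
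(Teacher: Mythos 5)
Your proof is correct and takes essentially the same route as the paper, which simply invokes \cite[Lemma 3]{kohn05arma}: polarize the magnetostatic energy into the bilinear form $B(M_1,M_2)=\int_{\R^3}\nabla u_{M_1}\cdot\nabla u_{M_2}$, split the difference as $B(M-\overline M,\,M+\overline M)$, apply Cauchy--Schwarz together with the a priori estimate $\|\nabla u_N\|_{L^2(\R^3)}\le\|N\|_{L^2(V_\e)}$, and control $\|M-\overline M\|_{L^2(V_\e)}$ by a fiber-wise Poincar\'e inequality on intervals of length $a\e$. The bookkeeping $|V_\e|^{1/2}\lesssim\e^{1/2}$ and the Poincar\'e factor $\e$ combining to $\e^{3/2}$ is exactly how the constant arises in the cited argument.
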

\begin{proof}
The proof can be established arguing as in  \cite[Lemma 3]{kohn05arma}.
\end{proof}
\begin{remark}\label{rm:media}
The previous lemma holds also in the general case \eqref{dom-gen} with the same proof.
\end{remark}
In the next two lemmas we estimate the first and the third terms, respectively, of the representation formula \eqref{representation}.

\begin{lemma}\label{lm:bulk-bulk}
Under the hypothesis and with the notation of the previous lemma we have
$$
\biggl|\int_{V_\e}\int_{V_\e}\frac{1}{|x-y|}\Div \overline M(x')\Div \overline M(y')\, dxdy\biggr|\leq C\e^2\|\Div_{x'}\overline M\|^2_{L^2(\omega)}\,.
$$
\end{lemma}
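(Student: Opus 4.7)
\emph{Approach.} The plan is to integrate out the thin vertical variables first and reduce to a two-dimensional estimate involving $\Div_{x'}\overline M$. Since $\overline M$ depends only on $x'$, its three-dimensional divergence coincides with the planar divergence $\Div_{x'}\overline M$, so by Fubini
$$
\int_{V_\e}\int_{V_\e}\frac{\Div\overline M(x')\,\Div\overline M(y')}{|x-y|}\, dx\, dy=\int_{\omega}\int_{\omega}\Div_{x'}\overline M(x')\,\Div_{x'}\overline M(y')\,K_\e(x',y')\, dx'\, dy',
$$
where
$$
K_\e(x',y'):=\int_{\e f(x'/\e)}^{\e a+\e f(x'/\e)}\int_{\e f(y'/\e)}^{\e a+\e f(y'/\e)}\frac{1}{\sqrt{|x'-y'|^2+(x_3-y_3)^2}}\, dy_3\, dx_3.
$$
Note that the oscillatory dependence of the integration endpoints on $x'/\e$ plays no role in what follows, since only the thickness $a\e$ enters the estimate of the inner integral.

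\emph{Kernel estimate.} I would then bound $K_\e$ in two regimes. For $|x'-y'|\geq \e$ the crude bound $|x-y|\geq|x'-y'|$ yields
$$
K_\e(x',y')\leq \frac{(a\e)^2}{|x'-y'|}.
$$
For $|x'-y'|<\e$, an explicit primitive of $(r^2+t^2)^{-1/2}$ in $t$ combined with the second integration gives
$$
K_\e(x',y')\leq C\e\Bigl(1+\log\frac{\e}{|x'-y'|}\Bigr).
$$
Passing to polar coordinates centered at $x'$ and using the boundedness of $\omega$, these two bounds combine to yield
$$
\sup_{x'\in\omega}\int_{\omega}K_\e(x',y')\, dy'\leq C\e^2,
$$
the logarithmic singularity being absorbed by the radial factor since $\int_0^\e r\log(\e/r)\, dr\leq C\e^2$.

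\emph{Conclusion via Schur.} Since $K_\e$ is symmetric and non-negative, Schur's test applied to the integral operator with kernel $K_\e$ (or equivalently Cauchy--Schwarz) gives
$$
\int_\omega\!\!\int_\omega \Div_{x'}\overline M(x')\,\Div_{x'}\overline M(y')\, K_\e(x',y')\, dx'\, dy'\leq\|\Div_{x'}\overline M\|_{L^2(\omega)}^2\,\sup_{x'\in\omega}\int_{\omega}K_\e(x',y')\, dy',
$$
which, combined with the kernel estimate above, yields the desired inequality.

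The only mildly delicate point is the logarithmic singularity of $K_\e$ on the diagonal $x'=y'$; this is tame since $r\log(1/r)$ is integrable in two dimensions, and the bulk contribution comes from the regime $|x'-y'|\geq\e$, where the straightforward bound $(a\e)^2/|x'-y'|$ already accounts for the expected $\e^2$ scaling dictated by the thinness of the layer in the $x_3$-direction.
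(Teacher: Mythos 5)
Your proof is correct and is essentially the same as the paper's: integrate out the vertical variables, bound the kernel, and apply Schur's test (the paper cites the Generalized Young's inequality, which here amounts to the same thing). One small remark: the regime splitting at $|x'-y'|=\e$ and the logarithmic estimate near the diagonal are unnecessary — the crude bound $1/|x-y|\leq 1/|x'-y'|$ already gives $K_\e(x',y')\leq (a\e)^2/|x'-y'|$ for all $x'\neq y'$, and since $\omega$ is bounded this directly yields $\sup_{x'}\int_\omega K_\e(x',y')\,dy'\leq C\e^2$ without any further analysis of the diagonal singularity, which is exactly what the paper does.
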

\begin{proof}
Using the fact that $\overline M$ is independent of $x_3$, one immediately gets
\begin{multline*}
\biggl|\int_{V_\e}\int_{V_\e}\frac{1}{|x-y|}\Div \overline M(x')\Div \overline M(y')\, dxdy\biggr|\\ \leq a^2\e^2
\int_{\omega}\int_{\omega}\frac{1}{|x'-y'|}|\Div_{x'} \overline M(x')|\, |\Div_{y'} \overline M(y')|\, dx'dy'
\leq C\e^2\|\Div_{x'}\overline M\|^2_{L^2(\omega)}\,,
\end{multline*}
where the last estimate follows from the Generalized Young's Inequality (see \cite{lieb-loss}).
\end{proof} 
\begin{lemma}\label{lm:bulk-boundary}
With the notation of the previous lemma we have
$$
\biggl|\int_{\pa V_\e}\int_{V_\e}\frac{1}{|x-y|}\Div \overline M(y')(\overline M\cdot \nu_\e)(x')\, dy d\H^{2}(x)\biggr|
\leq C\e^{3/2}\|\Div_{x'} \overline M \|_{L^2(\omega)}\,.
$$
\end{lemma}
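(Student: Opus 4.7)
My plan is to decompose $\pa V_\e$ into its three natural pieces --- the top graph $\Sigma_t^\e:=\{(x',a\e+\e f(x'/\e)):x'\in\omega\}$, the bottom graph $\Sigma_b^\e:=\{(x',\e f(x'/\e)):x'\in\omega\}$, and the lateral piece $\Sigma_l^\e\subset\pa\omega\times\R$ --- and to treat each separately. The lateral contribution is straightforward: $\H^2(\Sigma_l^\e)=O(\e)$, $|\overline M\cdot\nu_\e|\le 1$, and the inner integral $\int_{V_\e}|\Div \overline M(y')|\,|x-y|^{-1}dy$ is controlled by $C\e\,\|\Div_{x'}\overline M\|_{L^2(\omega)}$ via the same Generalized Young's Inequality used in the proof of Lemma~\ref{lm:bulk-bulk} (bound $1/|x-y|\le 1/|x'-y'|$ and convolve with $1/|\cdot|\in L^{2,w}(\R^2)$). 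Multiplying by the $O(\e)$ area of $\Sigma_l^\e$ gives a lateral contribution of order $\e^{2}\|\Div_{x'}\overline M\|_{L^2(\omega)}$, already well below the target.

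The crux of the argument is the top + bottom contribution, and here the parallel-profile hypothesis $f_2=f_1+a$ is essential. At corresponding points $x_t(x'):=(x',a\e+\e f(x'/\e))$ and $x_b(x'):=(x',\e f(x'/\e))$, the vector surface elements $\nu_\e\,d\H^2$ are \emph{exact opposites}, equal respectively to $(-\nabla f(x'/\e),1)\,dx'$ and $-(-\nabla f(x'/\e),1)\,dx'$. Summing the two boundary integrals therefore collapses them into a single integral of $\overline M(x')\cdot(-\nabla f(x'/\e),1)$ against the \emph{difference of Coulomb kernels} $\frac{1}{|x_t(x')-y|}-\frac{1}{|x_b(x')-y|}$. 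Since $|x_t(x')-x_b(x')|=a\e$, the Mean Value Theorem applied to $1/|\cdot-y|$ produces the pointwise bound $C\e/|x^*(x')-y|^2$ on this difference, with $x^*(x')$ on the vertical segment joining $x_b(x')$ to $x_t(x')$.

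It then remains to estimate $\e\int_\omega\!\int_{V_\e}|\Div \overline M(y')|\,|x^*(x')-y|^{-2}\,dy\,dx'$. Swapping the order of integration, the inner $x'$-integral $\int_\omega dx'/|x^*(x')-y|^2$ is $O(|\log\e|)$: the $x'=y'$ singularity would be non-integrable in dimension two, but it is regularised at scale $\e$ by the $(x_3^*(x')-y_3)^2$ term in the denominator. The remaining $y_3$-integration across the $O(\e)$-thick slab contributes another factor of $\e$, and a final Cauchy--Schwarz against $|\Div_{x'}\overline M|$ on $\omega$ produces the norm $\|\Div_{x'}\overline M\|_{L^2(\omega)}$. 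Combined with the factor of $\e$ already extracted from the Mean Value Theorem, one obtains a top + bottom bound of order $\e^{2}|\log\e|\,\|\Div_{x'}\overline M\|_{L^2(\omega)}$, comfortably smaller than the claimed $C\e^{3/2}\|\Div_{x'}\overline M\|_{L^2(\omega)}$ for small $\e$. The main obstacle is precisely this cancellation step: the naive bound $|\overline M\cdot\nu_\e|\le 1$ applied separately on top and on bottom only yields $O(\e)$, which matches the size of the leading boundary--boundary contribution and would therefore be useless downstream. Extracting the extra half power of $\e$ relies crucially on the parallel-profile geometry, and is exactly what the lemma is designed to capture.
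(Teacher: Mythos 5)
Your overall strategy coincides with the paper's: decompose $\pa V_\e$ into top, bottom and lateral pieces, estimate the lateral contribution directly, and exploit the fact that the top and bottom vector surface elements are exact opposites (a consequence of $f_2=f_1+a$), so that their sum reduces to a difference of Coulomb kernels at distance $a\e$. The paper encodes this cancellation by bounding the kernel difference pointwise by a fixed radial kernel $K_\e(z')=\tfrac1{|z'|}-\tfrac1{\sqrt{|z'|^2+\e^2L^2}}$ with $\|K_\e\|_{L^1}=O(\e)$ and applying Young's inequality, while you propose the Mean Value Theorem in the vertical direction; these are morally the same. However, two of your intermediate estimates are not justified as stated.

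First, the claimed \emph{pointwise} bound $\int_{V_\e}|\Div\overline M(y')|\,|x-y|^{-1}\,dy\le C\e\|\Div_{x'}\overline M\|_{L^2(\omega)}$ for each fixed $x\in\Sigma_l^\e$ does not follow from generalized Young: the convolution $L^2*L^{2,w}$ in $\R^2$ lands at the endpoint $r=\infty$, where the weak-type inequality fails, and indeed the Riesz potential $\int_\omega |x'-y'|^{-1}|\Div\overline M(y')|\,dy'$ of an $L^2$ density need not be uniformly bounded in $x'$. The bound must be an averaged one: after integrating once over the $\e$-thick vertical slabs one is left with $\e^2\int_\omega|\Div\overline M(y')|\bigl(\int_{\pa\omega}|x'-y'|^{-1}d\H^1(x')\bigr)\,dy'$, and the inner integral is only $O(\log(1/\mathrm{dist}(y',\pa\omega)))$ --- which works because this logarithm is in $L^2(\omega)$, but requires a genuine Fubini argument, not a sup bound. (The paper sidesteps this by the cruder splitting $1/\sqrt{r^2+t^2}\le 1/(\sqrt r\,\sqrt t)$, whence the $\e^{3/2}$ rather than $\e^2$; either route is fine for the lemma.) Second, in the top\,$+$\,bottom estimate the claim that $\int_\omega dx'/|x^*(x',y)-y|^2=O(|\log\e|)$ is not secured: the MVT point $x^*$ lies somewhere on the vertical segment, and nothing prevents $x_3^*(x',y)=y_3$ at $x'\approx y'$, in which case the $(x_3^*-y_3)^2$ term provides no regularization and this $x'$-integral blows up. This is repaired by integrating in $y_3$ \emph{first}: replacing the MVT by the elementary $|r_1^{-1}-r_2^{-1}|\le a\e/(r_1r_2)\le\tfrac{a\e}{2}(r_1^{-2}+r_2^{-2})$ with $r_i=|x_{t/b}(x')-y|$, the $y_3$-integral of each $r_i^{-2}$ over the slab is bounded by $\min(Ca\e/|x'-y'|^2,\pi/|x'-y'|)$, after which your estimate goes through and yields $O(\e^2|\log\e|)\|\Div_{x'}\overline M\|_{L^2(\omega)}$, consistent with --- and implying --- the stated bound.
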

\begin{proof}
Using the inequality 
\beq\label{eq:ineq}
\frac{1}
{\sqrt{|x'-y'|^2+(x_3-y_3)^2}}\leq \frac{1}{\sqrt{|x'-y'|}\sqrt{|x_3-y_3|}}
\eeq
and setting 
\begin{align*}
A:=\biggl|\int_{\omega}\int_{\e f(y'/\e)}^{a\e+ \e f(y'/\e)}&\int_{\omega}\frac{\Div \overline M(y')(\overline M(x')\cdot (-\nabla f(x'/\e), 1))}
{\sqrt{(x'-y')^2+(\e a+ \e f(x'/\e)-y_3)^2}}dx'dy_3dy'
\\
&+\int_{\omega}\int_{\e f(y'/\e)}^{a\e+ \e f(y'/\e)}\int_{\omega}\frac{\Div \overline M(y')(\overline M(x')\cdot (\nabla f(x'/\e), -1))}
{\sqrt{(x'-y')^2+( \e f(x'/\e)-y_3)^2}}dx'dy_3dy'\biggr|\,,
\end{align*}
we have
\begin{align}
&\biggl|\int_{\pa V_\e}\int_{V_\e}\frac{\Div \overline M(y')(\overline M\cdot \nu_\e)(x')}{|x-y|}\, dy d\H^{2}(x)\biggr| \nonumber\\
&\leq \biggl|\int_{\omega}\int_{\pa \omega}\int_{\e f(x'/\e)}^{\e a+ \e f(x'/\e)}\int_{\e f(y'/\e)}^{\e a+ \e f(y'/\e)}\int\frac{\Div \overline M(y')(\overline M\cdot \nu_\e)(x')}{\sqrt{|x'-y'|^2+|x_3-y_3|^2}}\, dy_3 d x_3 d\H^{1}(x')dy'\biggr|+A\nonumber\\
&\leq 
\int_{\omega}\int_{\pa \omega}\frac{|\Div \overline M(y')|}{\sqrt{|x'-y'|}}d\H^{1}(x')dy' \int_{0}^{\e a+ \|f\|_\infty\e }\int_{0}^{\e a+ \|f\|_\infty\e }\frac{1}{\sqrt{|x_3-y_3|}}\, dy_3 d x_3 +A\nonumber\\
&\leq C \e^{3/2} \int_{\omega}\int_{\pa \omega}\frac{|\Div \overline M(y')|}{\sqrt{|x'-y'|}}d\H^{1}(x')dy' 
\leq C \e^{3/2} \|\Div_{x'} \overline M \|_{L^2(\omega)}+A\,.\label{uno}
\end{align}
Since  for $y_3\in (\e f(y'/\e), a\e+ \e f(y'/\e))$ we may find $L>0$  large enough (depending only on $f$ and $a$)   so that 
\begin{multline}\label{keieps}
\biggl|  \frac{1}{\sqrt{(x'-y')^2+( \e f(x'/\e)-y_3)^2}}  -  \frac{1}{\sqrt{(x'-y')^2+(a \e + \e f(x'/\e)-y_3)^2}}\biggr|  \\
 \leq \frac{1}{|x'-y'|} - \frac{1}{\sqrt{(x'-y')^2+ \e^2 L^2}} =:K_\e(x'-y')\,,
\end{multline}
and we can estimate
$$
A \leq C\e \int_\omega \int_\omega  |\Div \overline M(y')| K_\e(x'-y')\,dx' dy'\,.
$$
In turn, by the Generalized Young's inequality and using the fact that
\begin{align}
\int_{\R^2}K_\e(z')\, dz'&=2\pi \int_0^{+\infty}r\left(\frac{1}r-\frac{1}{\sqrt{r^2+\e^2L^2}}\right)\, dr \nonumber \\
&=2\pi \int_0^{+\infty}\frac{\e^2 L^2}{\sqrt{r^2+\e^2L^2}(\sqrt{r^2+\e^2L^2}+r)}\, dr
\leq 2\pi\int_0^\infty \frac{\e^2 L^2}{r^2 + L^2 \e^2} dr =\pi^2\e L \,,\label{keieps2}
\end{align}
we obtain
$$
A \leq C\e  \| K_\e\|_{L^1(\R^2)} \| \Div \overline M \|_{L^2(\omega)}\leq C\e^2 \| \Div \overline M \|_{L^2(\omega)}\,.
$$
Combining the last inequality with \eqref{uno}, we conclude the proof of the lemma.
\end{proof}
The estimates provided by the next two lemmas will be useful in the computing  the limit of the second term in \eqref{representation}. 
\begin{lemma}\label{lm:bd-bd}
With the same notation of the previous lemma, we have
\begin{align*}
\int_{\omega}\int_{\pa \omega}\int_{\e f(x'/\e)}^{a\e+ \e f(x'/\e)}&\biggl|\frac{1}
{\sqrt{|x'-y'|^2+(\e a+ \e f(x'/\e)-y_3)^2}}&\\
&-\frac{1}
{\sqrt{|x'-y'|^2+(\e f(x'/\e)-y_3)^2}}\biggl|dy_3dx'dy'\leq C\e^2
\end{align*}
\end{lemma}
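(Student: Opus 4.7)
The plan is to reduce the integrand pointwise to the kernel $K_\e$ already introduced in the proof of Lemma~\ref{lm:bulk-boundary}, and then apply Fubini together with the $L^1$ bound on $K_\e$. I would start by setting $s := y_3 - \e f(x'/\e)$, which transforms the innermost integration interval into $(0,\e a)$ and rewrites the two square-root arguments as $\e a - s$ and $-s$, both of absolute value at most $\e a$. This already eliminates any $y'$-dependence from the two offsets, which will be crucial for the subsequent integration.

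Since both offsets have magnitudes bounded by $\e L$ with $L := a$, the same elementary computation that produced \eqref{keieps} (which relies only on having two offsets of magnitude $\leq \e L$, regardless of their precise form) gives
\begin{equation*}
\biggl|\frac{1}{\sqrt{|x'-y'|^2 + (\e a - s)^2}} - \frac{1}{\sqrt{|x'-y'|^2 + s^2}}\biggr| \leq \frac{1}{|x'-y'|} - \frac{1}{\sqrt{|x'-y'|^2 + \e^2 L^2}} = K_\e(x'-y'),
\end{equation*}
uniformly in $s \in (0,\e a)$. Integrating this estimate in $s$ over an interval of length $\e a$ then contributes a multiplicative factor of $\e a$.

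For the remaining outer double integral I would apply Fubini: for each fixed $x' \in \pa\omega$, one has $\int_\omega K_\e(x'-y')\, dy' \leq \|K_\e\|_{L^1(\R^2)} \leq \pi^2\e L$ by \eqref{keieps2}. Integrating this bound in $x'$ over $\pa\omega$ gives a further factor of $\H^1(\pa\omega)$, finite since $\omega$ has Lipschitz boundary. Multiplying the three factors together yields a total bound of $\e a \cdot \pi^2 \e L \cdot \H^1(\pa\omega) \leq C\e^2$, which is the claimed inequality. I do not expect a real obstacle here: essentially all the ingredients — the pointwise bound by $K_\e$ and the $L^1$ estimate \eqref{keieps2} — have been developed in Lemma~\ref{lm:bulk-boundary}; the present lemma simply reassembles them with a different order of integration and a slightly different (in fact easier) parameterization of the offsets.
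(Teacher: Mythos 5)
Your proof is correct and follows essentially the same route as the paper, which just invokes the pointwise bound \eqref{keieps} and the $L^1$ estimate \eqref{keieps2} from Lemma~\ref{lm:bulk-boundary}. Your change of variables $s = y_3 - \e f(x'/\e)$ is a minor refinement that makes it explicit that the offsets have magnitude at most $\e a$ (so one can take $L=a$ rather than an $L$ depending on $\|f\|_\infty$), but it is not essential to the argument.
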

\begin{proof}
We can estimate the integrand as in \eqref{keieps} and \eqref{keieps2} to easily conclude. 
\end{proof}
\begin{lemma}\label{lm:bd-bd2}
We have
$$
\int_{\pa \omega}\int_{\pa \omega}\int_{\e f(x'/\e)}^{a\e+ \e f(x'/\e)}\int_{\e f(y'/\e)}^{a\e+ \e f(y'/\e)}\frac{1}
{\sqrt{|x'-y'|^2+(x_3-y_3)^2}}dy_3dx_3d\H^1(y')d\H^1(x')\leq C \e^{3/2}\,.
$$
\end{lemma}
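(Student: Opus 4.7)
The plan is to apply the elementary factorization inequality \eqref{eq:ineq} used already in the proof of Lemma~\ref{lm:bulk-boundary}, namely
$$
\frac{1}{\sqrt{|x'-y'|^2+(x_3-y_3)^2}}\leq \frac{1}{\sqrt{|x'-y'|}\sqrt{|x_3-y_3|}}\,,
$$
to decouple the horizontal and vertical contributions in the fourfold integral. The inequality turns the integrand into a tensor product, so the quadruple integral splits as the product of a purely horizontal piece over $\pa\omega\times\pa\omega$ and a purely vertical piece depending a priori on $x',y'$ only through the limits of integration.

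The vertical piece is controlled as follows. Since $f$ is bounded, for every $x',y'\in \R^2$ the intervals $[\e f(x'/\e),a\e+\e f(x'/\e)]$ and $[\e f(y'/\e),a\e+\e f(y'/\e)]$ are both contained in $[0, C\e]$ for a constant $C$ depending only on $a$ and $\|f\|_\infty$. Hence, by positivity and the scaling $(x_3,y_3)\mapsto (\e s, \e t)$,
$$
\int_{\e f(x'/\e)}^{a\e+\e f(x'/\e)}\int_{\e f(y'/\e)}^{a\e+\e f(y'/\e)}\frac{dy_3\,dx_3}{\sqrt{|x_3-y_3|}}\leq \int_0^{C\e}\int_0^{C\e}\frac{dy_3\,dx_3}{\sqrt{|x_3-y_3|}}\leq C'\e^{3/2}\,,
$$
uniformly in $x',y'$.

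For the horizontal piece, I would use that $\pa\omega$ is a bounded Lipschitz curve, so it admits a bi-Lipschitz parametrization by arclength and one has the one-dimensional measure estimate $\H^1(\{y'\in\pa\omega : |y'-x'|<r\})\leq C r$ for every $x'\in\pa\omega$ and every $r>0$. Since the singularity $r^{-1/2}$ is locally integrable in one dimension, this yields
$$
\sup_{x'\in\pa\omega}\int_{\pa\omega}\frac{d\H^1(y')}{\sqrt{|x'-y'|}}<+\infty\,,
$$
and in particular $\int_{\pa\omega}\int_{\pa\omega}|x'-y'|^{-1/2}\,d\H^1(y')d\H^1(x')\leq C$.

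Combining the uniform vertical bound $C\e^{3/2}$ with the uniform horizontal bound concludes the proof. No genuine obstacle is expected; the only point requiring some care is the independence of the vertical estimate from $x'$ and $y'$, which is ensured by the boundedness of $f$ and allows the inner integrals to be decoupled from the outer boundary integral.
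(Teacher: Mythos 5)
Your proposal is correct and follows exactly the argument the paper intends: the paper's proof is simply the one-line remark ``straightforward after recalling \eqref{eq:ineq}'', and you have supplied the obvious details --- apply the factorization inequality \eqref{eq:ineq}, bound the vertical part by $C\e^{3/2}$ using that both intervals of integration lie in $[0,(a+\|f\|_\infty)\e]$ (exactly as done inline in the proof of Lemma~\ref{lm:bulk-boundary}), and observe that the remaining kernel $|x'-y'|^{-1/2}$ is integrable over $\pa\omega\times\pa\omega$ because $\pa\omega$ is a bounded Lipschitz curve.
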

\begin{proof}
The the proof is  straightforward after recalling \eqref{eq:ineq}.
\end{proof}
We will also need the following simple and rather standard result on the approximation of the identity. It is a particular case of a  more general statement,  however we formulate it only in the form that serves  our purposes.
\begin{lemma}\label{lm:lemmetto}
Let $( K_\e)$ be a family of non-negative   kernels satisfying
\beq\label{eq:lemmetto}
\sup_{\e>0}\int_{\R^2} K_\e(z')\, dz'=:M <+\infty \ \hbox{ and for any fixed } \delta>0 \   \int_{|{ z'}| > \delta} K_\e ({z'})dz' \to 0 \hbox{ as }\e \to 0\,.
\eeq
Let $u_\e\to u$ in $L^1(\R^2; \R^3)$. Then 
$$
\int_{\R^2}\int_{\R^2}K_\e(x'-y')|u_\e(x')-u(y')|\, dx'dy'\to 0
$$ 
as $\e\to 0^+$.
\end{lemma}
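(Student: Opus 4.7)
The proof is a standard two-step reduction: first peel off the dependence on the approximating sequence $u_\e$, then apply the classical approximation-of-the-identity argument to $u$ itself. The plan is to reduce the bivariate integral to a one-variable integral of $K_\e$ against a modulus-of-continuity function for $u$, where the second hypothesis in \eqref{eq:lemmetto} can be exploited directly.

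First, I would split via the triangle inequality $|u_\e(x') - u(y')| \le |u_\e(x') - u(x')| + |u(x') - u(y')|$. For the first piece, Fubini together with the translation invariance of Lebesgue measure yield
$$\int_{\R^2}\int_{\R^2} K_\e(x'-y')|u_\e(x')-u(x')|\, dx'\, dy' = \left(\int_{\R^2} K_\e(z')\, dz'\right)\|u_\e - u\|_{L^1(\R^2;\R^3)} \le M\,\|u_\e - u\|_{L^1(\R^2;\R^3)},$$
which tends to zero by the hypothesis $u_\e \to u$ in $L^1$.

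For the second piece, I would change variables via $z' = x' - y'$ in the inner integral to obtain $\int_{\R^2} K_\e(z')\,\tau(z')\, dz'$, where $\tau(z') := \|u - u(\cdot - z')\|_{L^1(\R^2;\R^3)}$. Now I exploit two standard facts: $\tau(z') \le 2\|u\|_{L^1}$ uniformly, and $\tau(z') \to 0$ as $|z'| \to 0$ by the continuity of translations in $L^1$. Given $\eta > 0$, choose $\delta > 0$ so small that $\tau(z') < \eta$ whenever $|z'| < \delta$; splitting the integral over $\{|z'| < \delta\}$ and $\{|z'| \ge \delta\}$ and using the first hypothesis in \eqref{eq:lemmetto} on the former gives
$$\int_{\R^2} K_\e(z')\,\tau(z')\, dz' \le \eta M + 2\|u\|_{L^1}\int_{|z'| \ge \delta} K_\e(z')\, dz'.$$
By the second condition in \eqref{eq:lemmetto} the tail integral vanishes as $\e \to 0$, so the $\limsup$ of the left-hand side is at most $\eta M$; letting $\eta \to 0^+$ concludes the argument.

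There is no genuine obstacle here: this is the standard proof that convolution against an approximation of the identity converges in $L^1$, reformulated in bivariate form, combined with an elementary perturbation step to absorb the sequence $u_\e$. The only point requiring mild care is the order of operations, since $u_\e$ itself is not invariant under translation in any useful way; handling the $u_\e - u$ contribution first via Fubini reduces the remainder to a statement about the single fixed function $u$, for which the concentration hypothesis on $K_\e$ can be applied cleanly.
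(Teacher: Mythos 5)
Your proof is correct, and it takes a genuinely different route from the paper's. The paper approximates $u$ by a compactly supported continuous function $w$, writes a three-term triangle-inequality split ($u_\e - u$, twice $u - w$, and $w(x')-w(y')$), bounds the first three contributions by $M\delta$ via Fubini and translation invariance, and disposes of the $w(x')-w(y')$ term using a preliminary observation (their display \eqref{eq:lemmetto2}) that for $w \in C_c$ the concentration hypothesis kills the double integral. You instead use only a two-term split, absorbing $u_\e - u$ exactly as they do, but then handle $|u(x')-u(y')|$ directly: after the change of variables $z' = x'-y'$ and Fubini you are left with $\int K_\e(z')\,\tau(z')\,dz'$ where $\tau(z') = \|u - u(\cdot - z')\|_{L^1}$, and you invoke the continuity of translations in $L^1$ together with the two hypotheses on $K_\e$. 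Both arguments rest on the same underlying fact (continuity of translations in $L^1$ is itself proved by density of $C_c$), so the mathematical content is equivalent; your version bundles the $C_c$-approximation into a quotable standard lemma rather than reproducing it inline, which makes the bookkeeping a bit tighter and the tail/core split on $K_\e$ more transparent, while the paper's version keeps the elementary \eqref{eq:lemmetto2} visible, at the cost of one more approximation layer.
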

\begin{proof}
The proof is rather standard. Observe first that by   \eqref{eq:lemmetto} it easily follows that 
\beq\label{eq:lemmetto2}
w\in C_c(\R^2; \R^3)\Rightarrow \int_{\R^2}\int_{\R^2}K_\e(x'-y')|w(x')-w(y')|\, dx'dy'\to 0\quad\text{ as }\e\to 0^+\,.
\eeq
Fix $\de>0$ and find  $w\in C_c(\R^2; \R^3)$ and $\bar \e>0$  such that $\|w-u\|_{1}\leq \de$ and $\|u_\e-u\|_{1}\leq \de$ 
for all $\e\in (0, \bar \e)$. Then for all such $\e$ we have 
\begin{align*}
\int_{\R^2}&\int_{\R^2}K_\e(x'-y')|u_\e(x')-u(y')|\, dx'dy' \\
&\leq 
 \int_{\R^2}\int_{\R^2}K_\e(x'-y')|u_\e(x')-u(x')|\, dx'dy'+2\int_{\R^2}\int_{\R^2}K_\e(x'-y')|u(x')-w(x')|\, dx'dy'\\
&\quad +\int_{\R^2}\int_{\R^2}K_\e(x'-y')|w(x')-w(y')|\, dx'dy'\\
& = \|K_\e\|_{1}\big( \|u_\e-u\|_{1}+ 2\|w-u\|_{1}\big)+\int_{\R^2}\int_{\R^2}K_\e(x'-y')|w(x')-w(y')|\, dx'dy'\\
& \leq 3 M\de+\int_{\R^2}\int_{\R^2}K_\e(x'-y')|w(x')-w(y')|\, dx'dy'\,,
\end{align*}
where in the last inequality we used the first assumption in \eqref{eq:lemmetto}. Recalling \eqref{eq:lemmetto2} we deduce
$$
\limsup_{\e\to 0}\int_{\R^2}\int_{\R^2}K_\e(x'-y')|u_\e(x')-u(y')|\, dx'dy'\leq 3M\de
$$
and the conclusion follows by the arbitrariness of $\de$.
\end{proof}
The following proposition identifies the limit as $\e\to 0$ of the boundary-boundary term in \eqref{representation} and represents the main brick in the proof of Theorem~\ref{th:main}.   
\begin{proposition}\label{lm:main}
Let $m_0\in L^2(\omega; \mathbb{S}^2)$ and let $(\overline M_\e) \subset L^2(\omega; \R^3)$ be such that $|\overline M_\e|\leq 1$ for all $\e$ and     $\overline M_\e\to m_0$ in $L^2(\omega; \R^3)$. 
Then
$$
\frac{1}{4\pi \e}\int_{\pa V_\e}\int_{\pa V_\e} \frac{1}{|x-y|}(\overline M_\e(x')\cdot \nu_{\e}(x))(\overline M_\e(y')\cdot \nu_{\e}(y))\, d\H^{2}(x)d\H^2(y)
\to  \int_\omega  A_{\mathrm{hom}}\, m_0\cdot m_0\, dx'\,,
$$
where $A_{\mathrm{hom}}$ is the constant matrix defined in \eqref{Ahom}.
\end{proposition}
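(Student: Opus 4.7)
My plan is to isolate the main two-scale contribution via the change of variables $z'=(y'-x')/\e$ and to identify its $\e\to 0^+$ limit through a strong/weak$^*$ duality between a slowly varying (in $x'$) and a rapidly oscillating (in $x'/\e$) factor. Write $\pa V_\e=T_\e\cup B_\e\cup L_\e$ as the top graph, bottom graph, and lateral cylinder. On $T_\e$, $B_\e$ the parameterization by $x'\in\omega$ gives $(\overline M_\e\cdot\nu_\e)\,d\H^2=\pm\,\overline M_\e(x')\cdot n(x'/\e)\,dx'$ (opposite signs from opposite outer normals), while on $L_\e$ the normal is horizontal. The $L_\e\times L_\e$ self-interaction contributes $O(\e^{3/2})$ by Lemma~\ref{lm:bd-bd2}. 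The $L_\e\times T_\e$ and $L_\e\times B_\e$ cross-interactions, when combined using the opposite normals on $T_\e$ and $B_\e$, reduce to an integral of a kernel difference of the form $1/\sqrt{r^2+h^2}-1/\sqrt{r^2+(h-\e a)^2}$ (with $r=|\sigma-y'|$ and $|h|=O(\e)$), estimated analogously to Lemma~\ref{lm:bd-bd} to be $O(\e^2)$ in total. After dividing by $\e$, all these lateral contributions vanish.

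The remaining $T_\e\times T_\e+B_\e\times B_\e-2\,T_\e\times B_\e$ combination, upon the substitution $y'=x'+\e z'$ (contributing $\e^2\,dz'$) and the factorization $|x-y|=\e\sqrt{|z'|^2+(\cdot)^2}$ on each piece, collapses after cancelling the overall power of $\e$ into
$$
\frac{1}{2\pi}\int_\omega\!\int_{(\omega-x')/\e}\!\bigl(\overline M_\e(x')\!\cdot\! n(x'/\e)\bigr)\bigl(\overline M_\e(x'+\e z')\!\cdot\! n(x'/\e+z')\bigr)\,B(x'/\e,z')\,dz'\,dx',
$$
where
$$
B(y,z'):=\frac{1}{\sqrt{|z'|^2+(f(y+z')-f(y))^2}}-\frac{1}{\sqrt{|z'|^2+(a+f(y+z')-f(y))^2}}.
$$
Algebraic manipulation of $B$ as a difference of inverse square roots, together with $|f(y+z')-f(y)|\le L|z'|$ and the global bound $\|f\|_\infty<\infty$, yields the uniform estimate
$$
|B(y,z')|\le \frac{C}{|z'|(1+|z'|^2)}\qquad\text{for all }y\in\R^2,\ z'\in\R^2\setminus\{0\},\ \e>0,
$$
which is integrable on $\R^2$: the first term alone controls the $1/|z'|$ behaviour near $z'=0$, while the cancellation in $(a+f(y+z')-f(y))^2-(f(y+z')-f(y))^2=a^2+2a(f(y+z')-f(y))$ produces the decisive $|z'|^{-3}$ decay at infinity. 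Together with $|n|\le C$ and $|\overline M_\e|\le 1$, this delivers a dominating function on $\omega\times\R^2$.

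The heart of the argument is the pointwise-in-$z'$ two-scale limit: for a.e.\ fixed $z'\in\R^2$,
$$
\int_{\omega\cap(\omega-\e z')}\!\bigl[\overline M_\e(x')\otimes\overline M_\e(x'+\e z')\bigr]:h(x'/\e,z')\,dx'\;\longrightarrow\;\int_\omega m_0(x')\cdot\bar h(z')\,m_0(x')\,dx',
$$
with $h(y,z'):=n(y)\otimes n(y+z')\,B(y,z')$ and cell average $\bar h(z'):=\int_Q h(y,z')\,dy$. The slow factor $\overline M_\e(x')\otimes\overline M_\e(x'+\e z')$ converges strongly in $L^1(\omega;\R^{3\times 3})$ to $m_0\otimes m_0$, combining $\overline M_\e\to m_0$ in $L^2$, $L^2$-continuity of translations to handle the translate $\overline M_\e(\cdot+\e z')$, and the bound $|\overline M_\e|\le 1$ to pass through the tensor product via a Cauchy--Schwarz splitting. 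The fast factor $h(\cdot/\e,z')$, being $Q$-periodic in the first argument and bounded for fixed $z'$, converges weakly-$*$ in $L^\infty(\omega;\R^{3\times 3})$ to the constant matrix $\bar h(z')$. The standard strong/weak$^*$ duality between $L^1$ and $L^\infty$ then delivers the stated limit; the measure-defect of $\omega\cap(\omega-\e z')$ against $\omega$ vanishes as $\e\to 0$ and is absorbed because the integrand is pointwise bounded by $C(z')$.

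A final application of dominated convergence in $z'$ (justified by the bound of the previous paragraph) and Fubini to exchange the $Q$- and $\R^2$-integrations give
$$
\frac{1}{2\pi}\int_\omega m_0^T\!\left(\int_Q\!\int_{\R^2}\!n(y)\otimes n(y+z')\,B(y,z')\,dz'\,dy\right)\!m_0\,dx'=\int_\omega A_{\mathrm{hom}}\,m_0\cdot m_0\,dx',
$$
which is the claim. The principal obstacle is that each of the two terms in $B$ is individually only $O(1/|z'|)$ at infinity and hence non-integrable over $\R^2$; only the delicate algebraic cancellation produces the $|z'|^{-3}$ decay that simultaneously makes $A_{\mathrm{hom}}$ well-defined and closes the dominated convergence argument.
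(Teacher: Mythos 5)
Your argument is correct, and it takes a genuinely different route from the paper's. You first perform the change of variables $z'=(y'-x')/\e$, obtaining an $\e$-independent kernel $B(y,z')$ and then exploiting the uniform decay $|B(y,z')|\le C|z'|^{-1}(1+|z'|^2)^{-1}$ (arising from the algebraic cancellation $(a+b)^2-b^2=a^2+2ab$) as a dominating function; for each frozen $z'$ you then invoke the strong $L^1$--weak-$*$ $L^\infty$ pairing between the slow tensor factor $\overline M_\e(x')\otimes\overline M_\e(x'+\e z')\to m_0\otimes m_0$ and the fast periodic factor $h(x'/\e,z')\rightharpoonup^*\bar h(z')$, closing with dominated convergence in $z'$. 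The paper instead keeps the kernel $\Gamma_\e(x',y')$ in the $(x',y')$-variables, bounds it by an approximate-identity kernel $\frac{L}{2\pi\e}K_\e(x'-y')$ whose $L^1$ mass is $O(\e)$ (Lemma~\ref{lm:lemmetto}), shows $\int_\omega\Gamma_\e(\cdot,y')\overline M_\e\,dx'\to G(y'/\e)m_0(y')$ in $L^1$, and then passes to the limit via the Riemann--Lebesgue lemma. Both arguments pivot on the same kernel cancellation; what your route buys is a single once-and-for-all dominating bound and a very transparent appearance of $A_{\mathrm{hom}}$ as a cell average of $\bar h(z')$, whereas the paper's route avoids a frozen-$z'$ DCT by pushing all the analysis into the approximate-identity lemma and dealing with the oscillating factor only once at the very end. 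Two small points worth tightening if you write this out in full: the tensor-product passage is really a Hölder $L^1\times L^\infty$ splitting using $|\overline M_\e|\le 1$ and $|m_0|=1$ rather than Cauchy--Schwarz, and the shrinking-domain artefact $\omega\cap(\omega-\e z')$ should be handled by noting $|\omega\setminus(\omega-\e z')|\to 0$ with the integrand bounded by the $z'$-dependent constant already controlled by $B$.
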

\begin{proof}
We start by decomposing  $\pa V_\e$ as $\pa V_\e=\Gamma^+_\e\cup \Gamma^-_\e\cup \Gamma^{lat}_\e$, with $
\Gamma^+_\e$ and $\Gamma^-_\e$ denoting the top and the bottom part of $\pa V_\e$, respectively, and $\Gamma^{lat}_\e$ being the lateral boundary.  Observe now that we may split the double integral $\int_{\pa V_\e}\int_{\pa V_\e}$ as
\begin{align}
\int_{\pa V_\e}\int_{\pa V_\e} &= \int_{\Gamma^+_\e}\int_{\Gamma^+_\e}+\int_{\Gamma^-_\e}\int_{\Gamma^-_\e}+2\int_{\Gamma^+_\e}\int_{\Gamma^-_\e}+2\int_{\Gamma^{lat}_\e}\int_{\Gamma^+_\e\cup \Gamma^-_\e}+\int_{\Gamma^{lat}_\e}\int_{\Gamma^{lat}_\e}\nonumber\\
&=2 \int_{\Gamma^+_\e}\int_{\Gamma^+_\e}+2\int_{\Gamma^+_\e}\int_{\Gamma^-_\e}+2\int_{\Gamma^{lat}_\e}\int_{\Gamma^+_\e\cup \Gamma^-_\e}+\int_{\Gamma^{lat}_\e}\int_{\Gamma^{lat}_\e}\,,\label{splitting}
\end{align}
where we used the obvious identity $\int_{\Gamma^+_\e}\int_{\Gamma^+_\e}=\int_{\Gamma^-_\e}\int_{\Gamma^-_\e}$, which follows from the fact that $\Gamma^+_\e$ and $\Gamma^-_\e$ are parallel.
By Lemma~\ref{lm:bd-bd} we easily get 
\beq\label{easily1000}
\int_{\Gamma^{lat}_\e}\int_{\Gamma^+_\e\cup \Gamma^-_\e} \frac{1}{|x-y|}(\overline M_\e(x')\cdot \nu_{\e}(x))(\overline M_\e(y')\cdot \nu_{\e}(y))\, d\H^{2}(x)d\H^2(y)\leq C \e^2\,,
\eeq
while Lemma~\ref{lm:bd-bd2} yields
\beq\label{easily1001}
\int_{\Gamma^{lat}_\e}\int_{\Gamma^{lat}_\e} \frac{1}{|x-y|}(\overline M_\e(x')\cdot \nu_{\e}(x))(\overline M_\e(y')\cdot \nu_{\e}(y))\, d\H^{2}(x)d\H^2(y)\leq C \e^{3/2}\,.
\eeq
 Thus, combining \eqref{splitting}--\eqref{easily1001} we get
\begin{align}\label{semplice}
\lim_{\e \to 0}\frac{1}{4\pi \e}\int_{\pa V_\e}&\int_{\pa V_\e} \frac{1}{|x-y|}(\overline M_\e(x')\cdot \nu_{\e}(x))(\overline M_\e(y')\cdot \nu_{\e}(y))\, d\H^{2}(x)d\H^2(y)\nonumber \\ 
&=  \lim_{\e \to 0} \frac{1}{2\pi \e}\biggl[\int_{\Gamma^+_\e}\int_{\Gamma^+_\e} \frac{1}{|x-y|}(\overline M_\e(x')\cdot \nu_{\e}(x))(\overline M_\e(y')\cdot \nu_{\e}(y))\, d\H^{2}(x)d\H^2(y)\nonumber \\ 
&\qquad\qquad+\int_{\Gamma^+_\e}\int_{\Gamma^-_\e} \frac{1}{|x-y|}(\overline M_\e(x')\cdot \nu_{\e}(x))(\overline M_\e(y')\cdot \nu_{\e}(y))\, d\H^{2}(x)d\H^2(y)\biggr]\nonumber \\
&= \lim_{\e \to 0}   \int_\omega \int_\omega   \Gamma_\e (x',y')\overline M_\e(x')\cdot \overline M_\e(y') \, dx'dy'\,,
\end{align}
where
\begin{multline*}
\Gamma_\e ({ x'},{ y'}) := \\
\frac{1}{2\pi \e}\left( \frac{n(\frac{{ y'}}{\e})\otimes n(\frac{{ x'}}{\e})}{\sqrt{|{x'} -{y'}|^2 + \e^2 \left| f\left(\frac{{x'}}{\e}\right) - f\left(\frac{{ y'}}{\e}\right) \right|^2}}   -\frac{n(\frac{{ y'}}{\e})\otimes n(\frac{{ x'}}{\e})}{\sqrt{|{x'} -{y'}|^2 + \e^2 \left| a+ f\left(\frac{{ x'}}{\e}\right) - f\left(\frac{{ y'}}{\e}\right) \right|^2}}\right),
\end{multline*}
with
$$
n(x'):=\left(-\nabla f(x'), 1\right)\,.
$$
Observe now that there exists $L$ sufficiently large such that
$$
|\Gamma_\e({ x'}, { y'})| \leq  
\frac L{2\pi \e}\left( \frac{1}{{|{ x'} -{' y}| }}  - \frac{1}{\sqrt{|{x'} -{y'}|^2 + \e^2 L^2}}\right)=: \frac L{2\pi \e} K_\e ({x'} - { y'})  
$$
and note that, using also \eqref{keieps2}, we have 
\beq\label{Ktodelta}
\frac L{2\pi \e}\int_{\R^2} K_\e(z')\, dz' \leq \frac{\pi}2 L^2 \quad \hbox{ and for any fixed } \delta>0 \quad  \frac L{2\pi \e} \int_{|{ z'}| > \delta} K_\e ({z'})dz' \to 0 \hbox{ as }\e \to 0\,.
\eeq
We define a $Q$-periodic function 
\begin{align*}
G(x'): = \frac{1}{2\pi}\int_{\R^2}  \Biggl[   &\frac{n(x')\otimes n({ z'}+ x' )}{\sqrt{|{z'}|^2 + \left| f\left(z'+ x'\right) - f\left(x'\right) \right|^2}} -\frac{n(x')\otimes n({ z'}+ x' )}{\sqrt{|{z'}|^2 + \left|a+ f\left(z'+ x'\right) - f\left(x'\right) \right|^2}}\Biggr] \, d{ z'}\,.
\end{align*}
By the  change of variables $z':=(x'-y')/\e$ we obtain
$$
G\left(\frac{{ y'}}{\e} \right) = \int_{\R^2}  \Gamma_\e (x',  y')\, dx'.
$$
Thus,
\begin{align*}
\Biggl|\int_{\omega}  \Gamma_\e ({ x'}, { y'})\overline M_\e({ x'})  \, dx' - G\left(\frac{{ y'}}{\e} \right)m_0({ y'})\Biggr|  &= \Biggl|\int_{\R^2} \Gamma_\e ({ x'}, { y'})(\overline M_\e({ x'})\chi_\omega(x')- m_0 ({ y'})) \, dx'\Biggr| \\
&\leq \frac L{2\pi \e} \int_{\R^2} K_\e ({ x'} - {y'}) |\overline M_\e({x'})\chi_\omega(x') - m_0 ({y'})|   \, dx'
\end{align*}
so that
\begin{multline*}
\int_{\omega}\Biggl|\int_{\omega}  \Gamma_\e ({ x'}, { y'})\overline M_\e({ x'})  \, dx' - G\left(\frac{{ y'}}{\e} \right)m_0({ y'})\Biggr| dy' \\
\leq \frac L{2\pi \e} \int_{\R^2}\int_{\R^2}  K_\e ({ x'} - {y'}) |\overline M_\e({x'})\chi_\omega(x') - m_0 ({y'})\chi_\omega(y')|  \, dx'dy'\to 0\,,
\end{multline*}
where the last limit follows from Lemma~\ref{lm:lemmetto}. 
In turn, using $|\overline M_{\e}(y')| \leq 1$ we have 
\begin{multline*}
 \lim_{\e \to 0}   \int_\omega \int_\omega \Gamma_\e (x',y') \overline M_\e(x')\cdot \overline M_\e(y')  \, dx'dy'\\ =
 \lim_{\e \to 0}\int_\omega  G\left(\frac{{ y'}}{\e} \right)m_0({ y'})\cdot \overline M_\e(y') dy'=
  \int_\omega  A_{\mathrm{hom}}\, m_0\cdot m_0\, dx'\,,
\end{multline*}
where the last equality follows from the Riemann-Lebesgue lemma and the definition of $G$ and $A_{\mathrm{hom}}$.
The conclusion of the lemma follows recalling \eqref{semplice}.
\end{proof}

Combining Lemma~\ref{lm:rappresentazione}, Lemmas~\ref{lm:bulk-bulk}--\ref{lm:bd-bd2} and Proposition~\ref{lm:main} we easily  establish the following asymptotic behavior of the magnetostatic energy.
\begin{proposition}\label{prp:finalparallel} Let $m_0\in H^1(\omega; \mathbb{S}^2)$  and let $\overline M_\e \wto m_0$ weakly in $H^1(\omega; \overline{B(0,1)})$.
 For every $\e>0$ let  $\bar u_\e$ solve   \eqref{eqe} with $M$ replaced by $\overline M_\e$. Then
$$
\frac1\e\int_{\R^3}|\nabla \bar u_\e|^2\, dx\to  \int_\omega  A_{\mathrm{hom}}\, m_0\cdot m_0\, dx'\,,
$$
as $\e\to 0^+$, where $A_{\mathrm{hom}}$ is the matrix defined in \eqref{Ahom}.
\end{proposition}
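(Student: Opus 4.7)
The proposition is essentially a collection statement: once Proposition~\ref{lm:main} is in hand, the rest of the argument is a clean assembly of the preceding lemmas. The plan is as follows. Extend each $\overline M_\e$ from $\omega$ to $V_\e$ trivially in the vertical direction (still denoting the extension by $\overline M_\e$) and let $\bar u_\e \in \dot H^1(\R^3)$ be the corresponding solution of \eqref{eqe}. Since $\overline M_\e$ has no $x_3$-dependence, the representation formula of Lemma~\ref{lm:rappresentazione} applies directly and yields
\begin{equation*}
4\pi \int_{\R^3}|\nabla \bar u_\e|^2\,dx \;=\; I_\e^{bb} + I_\e^{\partial\partial} - 2\, I_\e^{b\partial},
\end{equation*}
where $I_\e^{bb}$, $I_\e^{\partial\partial}$, $I_\e^{b\partial}$ are, respectively, the bulk--bulk, boundary--boundary, and mixed integrals appearing in \eqref{representation}.

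\textbf{Verification of hypotheses and disposal of the lower-order terms.} First I would observe that weak $H^1(\omega)$-convergence $\overline M_\e\wto m_0$ implies strong convergence $\overline M_\e \to m_0$ in $L^2(\omega;\R^3)$ by Rellich's theorem, while the range condition gives $|\overline M_\e|\leq 1$ pointwise; this is exactly what is needed to apply Proposition~\ref{lm:main}. The same weak convergence also yields a uniform bound $\sup_\e \|\Div_{x'}\overline M_\e\|_{L^2(\omega)}<+\infty$. Applying Lemma~\ref{lm:bulk-bulk} one gets $|I_\e^{bb}|\leq C\e^2$, and Lemma~\ref{lm:bulk-boundary} yields $|I_\e^{b\partial}|\leq C\e^{3/2}$; thus $\frac{1}{4\pi\e}I_\e^{bb}$ and $\frac{1}{2\pi\e}I_\e^{b\partial}$ both vanish as $\e\to 0^+$.

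\textbf{The main term.} For the boundary--boundary contribution, Proposition~\ref{lm:main} gives directly
\begin{equation*}
\frac{1}{4\pi\e} I_\e^{\partial\partial} \;\longrightarrow\; \int_\omega A_{\mathrm{hom}}\, m_0\cdot m_0\, dx'
\end{equation*}
as $\e\to 0^+$. Lemmas~\ref{lm:bd-bd} and \ref{lm:bd-bd2} (used to neglect the lateral part of $\partial V_\e$) are already absorbed inside the proof of Proposition~\ref{lm:main}, so no further estimate is needed. Dividing the representation formula by $4\pi\e$ and combining the three asymptotics produces exactly the claimed limit.

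\textbf{Anticipated obstacle.} There really isn't one beyond Proposition~\ref{lm:main} itself; the only point requiring (mild) care is that $\overline M_\e$ is defined on $\omega$ and must be promoted to a magnetization on $V_\e$ in a way compatible with both Lemma~\ref{lm:rappresentazione} and the $x_3$-independence used in Lemmas~\ref{lm:bulk-bulk}--\ref{lm:bulk-boundary}; the trivial vertical extension does this cleanly, and the bound $|\overline M_\e|\leq 1$ ensures the divergence-free part of Lemma~\ref{lm:rappresentazione}'s derivation goes through without modification (the formula holds for $\overline{B(0,1)}$-valued magnetizations, not only for $\mathbb{S}^2$-valued ones).
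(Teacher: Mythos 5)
Your proposal is correct and takes essentially the same approach as the paper: the paper's own proof of Proposition~\ref{prp:finalparallel} is a one-line "combine Lemma~\ref{lm:rappresentazione}, Lemmas~\ref{lm:bulk-bulk}--\ref{lm:bd-bd2} and Proposition~\ref{lm:main}," and you have correctly spelled out that combination, including the use of Rellich to obtain the strong $L^2$ convergence and the uniform $L^2$-bound on $\Div_{x'}\overline M_\e$ needed to make the bulk--bulk and mixed terms vanish after division by $\e$.
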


 \subsection{Study of the magnetostatic energy: the general case}\label{subsec:general} In this section we study the magnetostatic energy  in general domains of the form \eqref{dom-gen}. We note that Lemmas~\ref{lm:media}-\ref{lm:bd-bd2} can be directly transferred to the case of general profiles $f_1$, $f_2$ and therefore we will be referring to them without loss of generality.

\begin{lemma}\label{lm:bd-bdgen}
Let $\overline M'_\e\to m'_0$ in  $L^2 (\omega; \R^2)$, with $|\overline M'_\e|\leq 1$. Then
$$
\frac{1}{\e}\int_{\omega}\int_{\pa \omega}\int_{\e f_1(x'/\e)}^{ \e f_2(x'/\e)}\frac{(\overline M'_\e(x')\cdot \nu_{\omega}(x'))(\overline M'_\e(y')\cdot \nabla f_i(y'/\e))}
{\sqrt{|x'-y'|^2+(x_3- \e f_i(y'/\e))^2}}dx_3d\H^1(x')dy'\to 0
$$
for $i=1, 2$. Here $\nu_\omega$ denotes the outer unit normal to $\pa\omega$.
\end{lemma}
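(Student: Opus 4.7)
The plan is to replace the denominator $\sqrt{|x'-y'|^2+(x_3-\e f_i(y'/\e))^2}$ by its ``flat'' approximation $|x'-y'|$ (with a controllable correction that vanishes), and then to show that the resulting leading term tends to zero via the zero-mean oscillation of $\nabla f_i(\cdot/\e)$. Since $x_3\in(\e f_1(x'/\e),\e f_2(x'/\e))$ and $\e f_i(y'/\e)$ lies in the same $O(\e)$-window, there exists $L$ depending only on $\|f_1\|_\infty,\|f_2\|_\infty$ such that $|x_3-\e f_i(y'/\e)|\leq L\e$. Then, arguing exactly as in \eqref{keieps}--\eqref{keieps2},
$$0\leq \frac{1}{|x'-y'|}-\frac{1}{\sqrt{|x'-y'|^2+(x_3-\e f_i(y'/\e))^2}}\leq \frac{1}{|x'-y'|}-\frac{1}{\sqrt{|x'-y'|^2+L^2\e^2}}=:K_\e(x'-y'),$$
with $\|K_\e\|_{L^1(\R^2)}\leq \pi^2 L\e$. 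Using $|\overline M'_\e|\leq 1$, the Lipschitz bound on $f_i$, and that the $x_3$-interval has length $\leq C\e$, the contribution of this difference to the target integral is bounded by $\frac{C}{\e}\cdot\e\cdot\H^1(\partial\omega)\,\|K_\e\|_{L^1(\R^2)}=O(\e)$, which tends to zero.

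It thus suffices to show that the leading term
$$L_\e:=\int_\omega\int_{\partial\omega}\frac{(f_2-f_1)(x'/\e)(\overline M'_\e(x')\cdot\nu_\omega(x'))(\overline M'_\e(y')\cdot\nabla f_i(y'/\e))}{|x'-y'|}\,d\H^1(x')\,dy'$$
tends to zero (the $1/\e$ prefactor has been absorbed into the $x_3$-interval length $\e(f_2-f_1)(x'/\e)$). By Fubini, $L_\e=\int_{\partial\omega}(f_2-f_1)(x'/\e)(\overline M'_\e(x')\cdot\nu_\omega)\,J_\e(x')\,d\H^1(x')$, where
$$J_\e(x'):=\int_\omega \frac{\overline M'_\e(y')\cdot\nabla f_i(y'/\e)}{|x'-y'|}\,dy'.$$
The oscillation ingredient is that by the $Q$-periodicity of $f_i$ one has $\int_Q\nabla f_i=0$, so $\nabla f_i(\cdot/\e)\rightharpoonup 0$ weakly-$*$ in $L^\infty(\omega;\R^2)$. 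Combined with $\overline M'_\e\to m'_0$ strongly in $L^2(\omega)$ and the uniform bound $|\overline M'_\e|\leq 1$, a standard strong--weak-$*$ product argument (extracting an a.e.-convergent subsequence and using dominated convergence on the error) gives $\overline M'_\e\cdot\nabla f_i(\cdot/\e)\rightharpoonup 0$ weakly-$*$ in $L^\infty(\omega)$. Since $1/|x'-\cdot|\in L^1(\omega)$ with norm uniformly bounded for $x'\in\partial\omega$ (polar-coordinate computation), this yields $J_\e(x')\to 0$ pointwise on $\partial\omega$ with $|J_\e(x')|\leq C$, so dominated convergence gives $\|J_\e\|_{L^1(\partial\omega)}\to 0$. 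The uniform bound on $(f_2-f_1)(x'/\e)(\overline M'_\e\cdot\nu_\omega)$ then gives $|L_\e|\leq C\|J_\e\|_{L^1(\partial\omega)}\to 0$.

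The main obstacle is justifying the strong--weak-$*$ product convergence of $\overline M'_\e\cdot\nabla f_i(\cdot/\e)$ in $L^\infty$ and using it against the merely $L^1$-integrable test function $1/|x'-\cdot|$. Both steps are standard but use the $L^\infty$-bound throughout; the rest of the argument is careful bookkeeping of how the $1/\e$ prefactor is cancelled by the $O(\e)$-width of the $x_3$-integration.
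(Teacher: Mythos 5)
Your proof is correct and follows essentially the same route as the paper's. Both arguments rest on the same two ingredients — the kernel is replaced by the flat kernel $1/|x'-y'|$ up to an error controlled by the $K_\e$ estimates of \eqref{keieps}--\eqref{keieps2}, and the zero-mean oscillation $\nabla f_i(\cdot/\e)\stackrel{*}{\rightharpoonup}0$ in $L^\infty$ is paired against the strongly convergent magnetization — and both finish with dominated convergence; the paper simply bundles the kernel replacement into the pointwise $L^1$ convergence $\overline M'_\e/\sqrt{|x'-\cdot|^2+\e^2(x_3-f_i(\cdot/\e))^2}\to m'_0/|x'-\cdot|$ and applies DCT in the $(x',x_3)$ variables after rescaling $x_3\mapsto x_3/\e$, while you make the flat-kernel reduction an explicit preliminary $O(\e)$ step and run DCT over $x'\in\partial\omega$.
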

\begin{proof}
Using a change of variable and interchanging integrals, we may rewrite the above integral as
$$
\int_{\pa \omega}\int_{f_1(x'/\e)}^{  f_2(x'/\e)}(\overline M'_\e(x')\cdot \nu_{\omega}(x'))\int_{\omega}\frac{\overline M'_\e(y')\cdot \nabla f_i(y'/\e)}
{\sqrt{|x'-y'|^2+\e^2(x_3- f_i(y' /\e))^2}}dy'dx_3d\H^1(x')
$$
Since for all $x=(x', x_3)$
$$
\frac{\overline M'_\e}
{\sqrt{|x'- \cdot |^2+\e^2(x_3- f_i(\cdot /\e))^2}}\to \frac{m'_0}
{|x'- \cdot |}\qquad\text{in }L^1(\omega; \R^2) 
$$
and $\nabla f_i(\cdot /\e)\stackrel{*}{\rightharpoonup} 0$ weakly-$*$ in  $L^\infty(\omega; \R^2)$ (due to the periodicity of $f_i$), we deduce
that
$$
\int_{\omega}\frac{\overline M'_\e(y')\cdot \nabla f_i(y'/\e)}
{\sqrt{|x'-y'|^2+\e^2(x_3- f_i(y'/\e))^2}}dy'\to 0\qquad\text{for all } x\,. 
$$
Since the above integral is uniformly bounded with respect to $x$, 
the thesis of the lemma follows by the Dominated Convergence Theorem.
\end{proof}
As a consequence of the previous lemma we may now show the following
\begin{lemma}\label{lm:bd-bdgen2}
Let $\overline M_\e=(\overline M'_\e, \overline M^3_\e)\to m_0=(m'_0, m^3_0)$ in  $L^2 (\omega; \R^3)$, with $|\overline M_\e|\leq 1$. Then
\begin{multline*}
\frac{1}{\e}\int_{\omega}\int_{\pa \omega}\int_{\e f_1(x'/\e)}^{ \e f_2(x'/\e)}\frac{(\overline M'_\e(x')\cdot \nu_{\omega}(x'))(\overline M_\e(y')\cdot n_2(y'/\e))}
{\sqrt{|x'-y'|^2+(x_3- \e f_2(y'/\e))^2}}dx_3d\H^1(x')dy'\\
-\frac{1}{\e}\int_{\omega}\int_{\pa \omega}\int_{\e f_1(x'/\e)}^{ \e f_2(x'/\e)}\frac{(\overline M'_\e(x')\cdot \nu_{\omega}(x'))(\overline M_\e(y')\cdot n_1(y'/\e))}
{\sqrt{|x'-y'|^2+(x_3- \e f_1(y'/\e))^2}}dx_3d\H^1(x')dy'\to 0\,.
\end{multline*}
Here $n_1$ and $n_2$ are the vectors defined in \eqref{nigen}.
\end{lemma}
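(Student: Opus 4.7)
The plan is to split the two inner products $\overline M_\e(y')\cdot n_i(y'/\e)$ according to the decomposition $n_i=(-\nabla f_i,1)$. Writing $\overline M_\e = (\overline M'_\e,\overline M^3_\e)$ we obtain
$$
\overline M_\e(y')\cdot n_i(y'/\e) = -\overline M'_\e(y')\cdot \nabla f_i(y'/\e) + \overline M^3_\e(y')\,,
$$
so the difference of the two integrals in the statement separates into a \emph{tangential} part involving $\overline M'_\e\cdot\nabla f_i$ (for $i=1,2$) and a \emph{vertical} part involving $\overline M^3_\e(y')$.

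For the tangential part, each of the two integrals of the form
$$
\frac{1}{\e}\int_{\omega}\int_{\pa \omega}\int_{\e f_1(x'/\e)}^{ \e f_2(x'/\e)}\frac{(\overline M'_\e(x')\cdot \nu_{\omega}(x'))(\overline M'_\e(y')\cdot \nabla f_i(y'/\e))}{\sqrt{|x'-y'|^2+(x_3-\e f_i(y'/\e))^2}}\, dx_3\, d\H^1(x')\, dy'
$$
vanishes in the limit by Lemma~\ref{lm:bd-bdgen} applied separately to $i=1$ and $i=2$, so their difference is also negligible.

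The core of the argument is therefore the vertical part,
$$
R_\e:=\frac{1}{\e}\int_{\omega}\int_{\pa \omega}\int_{\e f_1(x'/\e)}^{ \e f_2(x'/\e)}(\overline M'_\e(x')\cdot \nu_{\omega}(x'))\overline M^3_\e(y')\left[\frac{1}{\sqrt{|x'-y'|^2+(x_3-\e f_2(y'/\e))^2}}-\frac{1}{\sqrt{|x'-y'|^2+(x_3-\e f_1(y'/\e))^2}}\right]dx_3\, d\H^1(x')\, dy'\,.
$$
Here we use exactly the same kernel trick as in \eqref{keieps}. Since $x_3\in(\e f_1(x'/\e),\e f_2(x'/\e))$ and $f_1,f_2$ are bounded, both $|x_3-\e f_i(y'/\e)|$ are bounded by $\e L$ for some $L$ depending only on $\|f_1\|_\infty$ and $\|f_2\|_\infty$; hence
$$
\left|\frac{1}{\sqrt{|x'-y'|^2+(x_3-\e f_2(y'/\e))^2}}-\frac{1}{\sqrt{|x'-y'|^2+(x_3-\e f_1(y'/\e))^2}}\right|\le \frac{1}{|x'-y'|}-\frac{1}{\sqrt{|x'-y'|^2+\e^2L^2}}=K_\e(x'-y')\,,
$$
and the computation in \eqref{keieps2} gives $\|K_\e\|_{L^1(\R^2)}\le C\e$.

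To conclude, I would use the bounds $|\overline M'_\e\cdot\nu_\omega|\le 1$, $|\overline M^3_\e|\le 1$, and the fact that $K_\e$ does not depend on $x_3$ while the $x_3$-interval has length $\e(f_2(x'/\e)-f_1(x'/\e))\le C\e$. Carrying out the $x_3$ integration and then Fubini gives
$$
|R_\e|\le \frac{1}{\e}\cdot C\e\int_{\pa \omega}\int_{\omega}K_\e(x'-y')\, dy'\, d\H^1(x')\le C\,\H^1(\pa\omega)\,\|K_\e\|_{L^1(\R^2)}\le C\e\,,
$$
which tends to $0$ as $\e\to 0$. Combined with the tangential part this yields the claim. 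The only mildly delicate point is the bookkeeping of the three small scales (the prefactor $1/\e$, the $x_3$-interval of length $O(\e)$, and $\|K_\e\|_{L^1}=O(\e)$); everything else is a direct adaptation of the arguments of Lemmas~\ref{lm:bulk-boundary} and~\ref{lm:bd-bdgen}.
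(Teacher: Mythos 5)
Your proposal is correct and follows essentially the same route as the paper: you decompose $\overline M_\e(y')\cdot n_i(y'/\e)$ via $n_i = (-\nabla f_i, 1)$, handle the two tangential contributions with Lemma~\ref{lm:bd-bdgen}, and estimate the remaining vertical term by the kernel comparison \eqref{keieps}–\eqref{keieps2}, which is exactly how the paper disposes of it (it simply cites Lemma~\ref{lm:bd-bd}). The only difference is that you carry out the last estimate explicitly instead of referring to that lemma; the bookkeeping of the three small factors $1/\e$, the $O(\e)$ length of the $x_3$-interval, and $\|K_\e\|_{L^1}=O(\e)$ is correct.
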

\begin{proof}
Observe that the difference of the two integrals appearing in the statement can be rewritten as 
\begin{align*}
&-\frac{1}{\e}\int_{\omega}\int_{\pa \omega}\int_{\e f_1(x'/\e)}^{ \e f_2(x'/\e)}\frac{(\overline M'_\e(x')\cdot \nu_{\omega}(x'))(\overline M'_\e(y')\cdot \nabla f_2(y'/\e))}
{\sqrt{|x'-y'|^2+(x_3- \e f_2(y'/\e))^2}}dx_3d\H^1(x')dy' \\
&\qquad\qquad\qquad+\frac{1}{\e}\int_{\omega}\int_{\pa \omega}\int_{\e f_1(x'/\e)}^{ \e f_2(x'/\e)}\frac{(\overline M'_\e(x')\cdot \nu_{\omega}(x'))(\overline M'_\e(y')\cdot \nabla f_1(y'/\e))}
{\sqrt{|x'-y'|^2+(x_3- \e f_1(y'/\e))^2}}dx_3d\H^1(x')dy'\\
&+ 
\frac{1}{\e}\int_{\omega}\int_{\pa \omega}\int_{\e f_1(x'/\e)}^{ \e f_2(x'/\e)}\overline M^3_\e (y') (\overline M'_\e(x')\cdot \nu_{\omega}(x'))
\Biggl(\frac{1}{\sqrt{|x'-y'|^2+(x_3- \e f_2(y'/\e))^2}}\\
&\qquad\qquad\qquad\qquad\qquad\qquad\qquad\qquad- \frac{1}{\sqrt{|x'-y'|^2+(x_3- \e f_1(y'/\e))^2}}\Biggl)  dx_3d\H^1(x')dy'\,.
\end{align*}
Now, the first two integrals in the above formula vanish thanks to Lemma~\ref{lm:bd-bdgen}, while the   convergence to zero of the last one can be shown as in  Lemma~\ref{lm:bd-bd}.
\end{proof}
We are ready to prove the main result, which establishes the limiting behavior of the magnetostatic energy. 

\begin{proposition}\label{prop:main}
Let  $\overline M_\e\wto m_0$ weakly in $H^1(\omega; \mathbb{S}^2)$.
Then
$$
\frac{1}{4\pi \e}\int_{\pa V_\e}\int_{\pa V_\e} \frac{1}{|x-y|}(\overline M_\e(x')\cdot \nu_{{\e}}(x))( \overline M_\e(y')\cdot \nu_{{\e}}(y))\, d\H^{2}(x)d\H^2(y)
\to  \int_\omega  A_{\mathrm{hom}} m_0\cdot m_0\, dx'\,,
$$
with $A_{\mathrm{hom}}$ defined in \eqref{Ahomgen}. We recall that $\nu_\e$ stands for the outer unit normal to $\pa V_\e$.
\end{proposition}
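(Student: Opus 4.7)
The plan is to generalize the strategy of Proposition~\ref{lm:main} to non-parallel profiles. First, I would decompose $\pa V_\e=\Gamma_\e^+\cup\Gamma_\e^-\cup\Gamma_\e^{lat}$ and split the boundary--boundary integral into top--top, bottom--bottom, twice top--bottom and three lateral contributions. Parametrising the top and bottom by $(x',\e f_2(x'/\e))$ and $(x',\e f_1(x'/\e))$ and using the identities $(\overline M_\e\cdot\nu_\e)\,d\H^2=\overline M_\e\cdot n_2(\cdot/\e)\,dx'$ on $\Gamma_\e^+$ and $(\overline M_\e\cdot\nu_\e)\,d\H^2=-\overline M_\e\cdot n_1(\cdot/\e)\,dx'$ on $\Gamma_\e^-$, the three surface--surface contributions rewrite as double integrals on $\omega\times\omega$ with kernels of the form $n_i(x'/\e)\otimes n_j(y'/\e)/\sqrt{|x'-y'|^2+\e^2 h_{ij}^2}$ for the natural height differences $h_{ij}$.

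The three lateral contributions are then shown to be negligible. The lateral--lateral piece is $O(\e^{3/2})$ as in Lemma~\ref{lm:bd-bd2}. For the (more delicate) sum of the mixed lateral--top and lateral--bottom contributions, I would write $\overline M_\e\cdot n_i=\overline M_\e^3-\overline M'_\e\cdot\nabla f_i(\cdot/\e)$: the $\overline M_\e^3$ part is controlled by the telescoping argument of Lemma~\ref{lm:bd-bd} (which extends verbatim to general profiles), while the $\overline M'_\e\cdot\nabla f_i(\cdot/\e)$ part vanishes by Lemma~\ref{lm:bd-bdgen} (exploiting $\nabla f_i(\cdot/\e)\stackrel{*}{\rightharpoonup}0$ in $L^\infty$). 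The joint statement one actually needs is precisely Lemma~\ref{lm:bd-bdgen2}. For the three surface--surface pieces, the recipe of Proposition~\ref{lm:main} then applies term by term: after the substitution $z'=(x'-y')/\e$ one dominates the kernel uniformly by an approximate identity $K_\e$ satisfying \eqref{eq:lemmetto}, applies Lemma~\ref{lm:lemmetto} to replace $\overline M_\e$ by $m_0$ on the diagonal, and finally averages the fast variable over $Q$ via Riemann--Lebesgue. Adding and subtracting the common regularising kernel $n_i(x')\otimes n_j(z'+x')/\sqrt{|z'|^2+1}$ in each piece makes each $z'$-integral individually convergent and yields exactly the first three (bracketed) terms of $A_\mathrm{hom}$ in \eqref{Ahomgen}.

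The main obstacle is producing the fourth, ``Hessian-type'', term of $A_\mathrm{hom}$. The idea is to use the identity $f_2(s+z')-f_1(s)=[f_2(s+z')-f_2(s)]+\delta(s)$ with $\delta:=f_2-f_1$, and symmetrically for the other cross contribution, then to Taylor expand $1/\sqrt{|z'|^2+h^2}$ in the thickness $\delta$ about the corresponding top--top or bottom--bottom reference; the next-to-leading correction carries precisely the Hessian kernel $[I-\mathbf{e_3}\otimes\mathbf{e_3}]/(|z'|^2+1)^{3/2}-3(z',0)\otimes(z',0)/(|z'|^2+1)^{5/2}$ weighted by $\delta(s+z')\delta(s)$. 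A sanity check is that in the parallel case, where $\delta$ is a positive constant, a direct polar-coordinate computation shows that the $z'$-integral of this Hessian kernel vanishes, so that the fourth term disappears and the general formula reduces to \eqref{Ahom}.
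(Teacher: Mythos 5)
Your decomposition, your treatment of the lateral terms via Lemmas~\ref{lm:bd-bd2} and \ref{lm:bd-bdgen2}, and your identification of the first three (bracketed) terms of $A_{\mathrm{hom}}$ through the kernel differences are all on track and match the paper. The derivation of the fourth term is where the proposal breaks down, and the gap is not a detail but the main difficulty of the proposition.

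First, note that when you ``add and subtract the common regularising kernel in each piece'' you are left, after extracting the three convergent differences, with three residual terms of the form
$\frac{1}{4\pi\e}\int\!\!\int\frac{(\overline M_\e\cdot n_i)(x'/\e)\,(\overline M_\e\cdot n_j)(y'/\e)}{\sqrt{|x'-y'|^2+\e^2}}\,dx'dy'$,
each of which diverges like $\e^{-1}$ (the kernel has $L^1$-norm of order one, not of order $\e$). These residuals do not vanish individually and must be combined: using $n_2-n_1=(-\nabla\delta,0)$ with $\delta:=f_2-f_1$, the algebraic identity $a\!\cdot\!a+b\!\cdot\!b-2a\!\cdot\!b=(b-a)\!\cdot\!(b-a)$ reduces their sum to
\begin{equation*}
I^4_\e=\frac{1}{4\pi\e}\int_\omega\!\int_\omega\frac{\bigl(\overline M'_\e(x')\cdot\nabla\delta(x'/\e)\bigr)\bigl(\overline M'_\e(y')\cdot\nabla\delta(y'/\e)\bigr)}{\sqrt{|x'-y'|^2+\e^2}}\,dx'dy'\,,
\end{equation*}
which is bounded because $\nabla\delta(\cdot/\e)$ has zero mean over the period cell. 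Your proposal never accounts for these residuals.

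Second, the Taylor-expansion route to the Hessian term cannot work. The thickness $\delta=f_2-f_1$ is an $O(1)$ quantity, not a small parameter, so expanding $1/\sqrt{|z'|^2+h^2}$ ``in $\delta$ about the top--top reference'' is not a valid expansion; and even formally, such a vertical expansion produces scalar correction kernels in the $z_3$-direction, not the matrix-valued horizontal Hessian $\bigl[I-\mathbf{e_3}\otimes\mathbf{e_3}\bigr]/(|z'|^2+1)^{3/2}-3\,(z',0)\otimes(z',0)/(|z'|^2+1)^{5/2}$, nor the weight $\delta(z'+x')\delta(x')$ (a naive expansion only ever brings out $\delta(x')$ or $\delta(x')^2$). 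The correct mechanism, which the paper uses, is to integrate $I^4_\e$ by parts twice with respect to the slow variables: this transfers the fast oscillating factors $\nabla\delta(\cdot/\e)$ into $\delta(\cdot/\e)$ factors, and the two derivatives then act on $\overline M'_\e$ (producing terms controlled by $\Div \overline M'_\e\in L^2$ that vanish in the limit; boundary terms are also shown to vanish) and on the kernel $1/\sqrt{|x'-y'|^2+\e^2}$. It is precisely the horizontal Hessian of this kernel at scale $\e$ that, after rescaling, gives the fourth block of $A_{\mathrm{hom}}$. Your sanity check that the fourth term vanishes when $\delta$ is constant is correct, but it merely confirms the formula, not the proposed derivation.
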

\begin{proof}
We start by observing that by Lemmas~\ref{lm:bd-bdgen2} and \ref{lm:bd-bd2} we have
\begin{align}
& \lim_{\e \to 0} \frac{1}{4\pi \e}\int_{\pa V_\e}\int_{\pa V_\e} \frac{1}{|x-y|}(\overline M_\e(x')\cdot \nu_{\e}(x))( \overline M_\e(y')\cdot \nu_{\e}(y))\, d\H^{2}(x)d\H^2(y) \\
&= \lim_{\e \to 0} \Biggl( \frac{1}{4\pi \e}\int_{\omega}\int_{\omega}
\frac{(\overline M_\e(x')\cdot n_1(x'/\e))(\overline M_\e(y')\cdot n_1(y'/\e))}{\sqrt{|x'-y'|^2+\e^2(f_1(x'/\e)-f_1(y'/\e))^2}}dx'dy'\nonumber\\
&+\frac{1}{4\pi \e}\int_{\omega}\int_{\omega}
\frac{(\overline M_\e(x')\cdot n_2(x'/\e))(\overline M_\e(y')\cdot n_2(y'/\e))}{\sqrt{|x'-y'|^2+\e^2(f_2(x'/\e)-f_2(y'/\e))^2}}dx'dy'\nonumber\\
&-\frac{1}{2\pi \e}\int_{\omega}\int_{\omega}
\frac{(\overline M_\e(x')\cdot n_1(x'/\e))(\overline M_\e(y')\cdot n_2(y'/\e))}{\sqrt{|x'-y'|^2+\e^2(f_1(x'/\e)-f_2(y'/\e))^2}}dx'dy' \Biggr)\nonumber\\
& =: \lim_{\e \to 0}I_\e\,. \label{red1}
\end{align}
\vspace{1cm}
Now, notice that 
\begin{align}
&I_\e  = I_\e\nonumber\\
&\pm\frac{1}{4\pi\e} \int_{\omega}\int_{\omega}\frac{(\overline M_\e(x')\cdot n_2(x'/\e)-\overline M_\e(x')\cdot n_1(x'/\e))(\overline M_\e(y')\cdot n_2(y'/\e)-\overline M_\e(y')\cdot n_1(y'/\e))}{\sqrt{|x'-y'|^2+\e^2}}dx'dy' \nonumber\\
&= \frac{1}{4\pi \e}\int_{\omega}\int_{\omega}(\overline M_\e(x')\cdot n_1(x'/\e))(\overline M_\e(y')\cdot n_1(y'/\e))\Biggl(\frac{1}{\sqrt{|x'-y'|^2+\e^2(f_1(x'/\e)-f_1(y'/\e))^2}}\nonumber\\
&\qquad\qquad\qquad\qquad\qquad\qquad\qquad\qquad\qquad\qquad\qquad\qquad\qquad\quad-\frac{1}{\sqrt{|x'-y'|^2+\e^2}}\Biggr)dx'dy'\nonumber \\
&\quad +\frac{1}{4\pi \e}\int_{\omega}\int_{\omega}(\overline M_\e(x')\cdot n_2(x'/\e))(\overline M_\e(y')\cdot n_2(y'/\e))\Biggl(\frac{1}{\sqrt{|x'-y'|^2+\e^2(f_2(x'/\e)-f_2(y'/\e))^2}}\nonumber\\
&\qquad\qquad\qquad\qquad\qquad\qquad\qquad\qquad\qquad\qquad\qquad\qquad\qquad\quad-\frac{1}{\sqrt{|x'-y'|^2+\e^2}}\Biggr)dx'dy'\nonumber \\
&\quad - \frac{1}{2\pi \e}\int_{\omega}\int_{\omega}(\overline M_\e(x')\cdot n_1(x'/\e))(\overline M_\e(y')\cdot n_2(y'/\e))\Biggl(\frac{1}{\sqrt{|x'-y'|^2+\e^2(f_1(x'/\e)-f_2(y'/\e))^2}}\nonumber\\
&\qquad\qquad\qquad\qquad\qquad\qquad\qquad\qquad\qquad\qquad\qquad\qquad\qquad\quad-\frac{1}{\sqrt{|x'-y'|^2+\e^2}}\Biggr)dx'dy'\nonumber \\
&\quad+ \frac{1}{4\pi\e} \int_{\omega}\int_{\omega}\frac{(\overline M'_\e(x')\cdot \nabla(f_2-f_1)(x'/\e)) (\overline M'_\e(y')\cdot \nabla(f_2-f_1)(y'/\e))}{\sqrt{|x'-y'|^2+\e^2}}dx'dy'\nonumber\\
&=:I^1_\e+I^2_\e+I^3_\e+I^4_\e\,.\label{red2}
\end{align}
Here we used again the notation $\overline M_\e=(\overline M'_\e, M^3_\e)$.
The limits of $I^1_\e$, $I^2_\e$, and $I^3_\e$ can be computed arguing exactly as in the proof of Lemma~\ref{lm:main}. We obtain
\beq\label{red3}
I^1_\e\to \int_\omega  A_{\mathrm{hom},1} m_0\cdot m_0\, dx'\,, \
I^2_\e\to \int_\omega  A_{\mathrm{hom},2} m_0\cdot m_0\, dx'\,, \ \text{and}\ 
I^3_\e\to \int_\omega  A_{\mathrm{hom},3} m_0\cdot m_0\, dx'\,,
\eeq
where
\begin{align*}
 A_{\mathrm{hom},1}:=&\frac{1}{4\pi}\int_{Q}\int_{\R^2}\Biggl(  \frac{n_1(x') \otimes n_1(z'+x') }{\sqrt{|z'|^2 + \left| f_1( z'+x') - f_1(x') \right|^2}} - \frac{n_1(x') \otimes n_1(z'+x') }{\sqrt{|z'|^2 +1}}   \Biggr)\, dz'dx' \nonumber \\
A_{\mathrm{hom},2}:= &\frac{1}{4\pi}\int_{Q}\int_{\R^2}\Biggl(  \frac{n_2(x') \otimes n_2(z'+x') }{\sqrt{|z'|^2 + \left| f_2( z'+x') - f_2(x') \right|^2}} - \frac{n_2(x') \otimes n_2(z'+x') }{\sqrt{|z'|^2 +1}}   \Biggr)\, dz'dx' \nonumber\\
A_{\mathrm{hom},3}:= &- \frac{1}{2\pi}\int_{Q}\int_{\R^2}\Biggl(  \frac{n_1(x') \otimes n_2(z'+x') }{\sqrt{|z'|^2 + \left| f_2( z'+x') - f_1(x') \right|^2}} 
- \frac{n_1(x') \otimes n_2(z'+x') }{\sqrt{|z'|^2 +1}}   \Biggr)\, dz'dx'\,.
\end{align*}
We are left with studying the behavior of $I^4_\e$. 
In order to deal with such a term, we set  $g:=f_2-f_1$ and we note that integration by parts yields
\begin{align}
4\pi I^4_\e = &\, \e \int_{\omega}\int_{\omega}\Div_{y'}\biggl[\overline M'_\e(y')\Div_{x'}\biggl(\frac{\overline M'_\e(x')}{\sqrt{|x'-y'|^2+\e^2}}\biggr)\biggr]g(x'/\e)g(y'/\e)dx'dy'\nonumber\\
&+\int_{\pa \omega}(\overline M'_\e\cdot \nu_{\omega})(x')g(x'/\e)\int_{\omega}\frac{\overline M'_\e(y')\cdot \nabla g(y'/\e)}{\sqrt{|x'-y'|^2+\e^2}}\, dy' d\H^1(x')\nonumber\\
&-\e \int_{\pa \omega}\int_{\omega}(\overline M'_\e\cdot \nu_{\omega})(y')\Div_{x'}\biggl(\frac{\overline M'_\e(x')}{\sqrt{|x'-y'|^2+\e^2}}\biggr)g(x'/\e)g(y'/\e)\,dy' d\H^1(x')\nonumber\\
=:&  J^1_\e+J^2_\e+J^3_\e\,. \label{red5}
\end{align}
Arguing exactly as in the proof of Lemma~\ref{lm:bd-bdgen}, the $L^{\infty}$ weak-$*$ convergence to $0$ of $\nabla g(\cdot /\e)$ easily yields that
\beq\label{J2}
J^2_\e\to 0\,.
\eeq
Moreover, for a sufficiently large $C>0$, we have
\begin{align}
|J^3_\e|& \leq \e \|g_{\infty}\|_\infty^2 \left( \int_{\pa \omega}\int_{\omega}\frac{|\Div_{x'}\overline M'_\e|}{\sqrt{|x'-y'|^2+\e^2}}\, dx'd\H^1(y')
+ \int_{\pa \omega}\int_{\omega}\frac{|x'-y'|}{(|x'-y'|^2+\e^2)^{3/2}}\, dx'd\H^1(y') \right) \nonumber \\
&\leq \sqrt{\e} \|g_{\infty}\|_\infty^2 \int_{\pa \omega}\int_{\omega}\frac{|\Div_{x'}\overline M'_\e|}{\sqrt{|x'-y'|}}\, dx'd\H^1(y') +C\e  \int_0^C\frac{r^2}{(r^2+\e^2)^{3/2}}\, dr\nonumber \\
&\leq C\sqrt{\e}+ C\e  \int_0^C\frac{r}{(r^2+\e^2)}\, dr\to 0\,,
\label{J3}
\end{align}
where the last convergence follows by explicit computation of the integral. Note that in the last inequality we have also used the fact that $\Div_{x'}\overline M'_\e$ is bounded in $L^2$. In order to deal with $J^1_\e$, we expand  the double divergence term to get
\begin{align*}
J^1_\e & =  
\e \int_{\omega}\int_{\omega}\frac{\Div_{x'}\overline M'_\e(x')\Div_{y'}\overline M'_\e(y')}{\sqrt{|x'-y'|^2+\e^2}}g(x'/\e)g(y'/\e)dx'dy' \\
&\quad+2\e \int_{\omega}\int_{\omega}\frac{\Div_{x'}\overline M'_\e(x')\overline M'_\e(y')\cdot(x'-y')}{(|x'-y'|^2+\e^2)^{3/2}}g(x'/\e)g(y'/\e)dx'dy'\\
&\quad +\e  \int_{\omega}\int_{\omega}\frac{\overline M'_\e(x')\cdot \overline M'_\e(y')}{(|x'-y'|^2+\e^2)^{3/2}}g(x'/\e)g(y'/\e)dx'dy'\\
&\quad -3\e \int_{\omega}\int_{\omega}\frac{[\overline M'_\e(x')\cdot(x'-y')] [\overline M'_\e(y')\cdot(x'-y')]}{(|x'-y'|^2+\e^2)^{5/2}}g(x'/\e)g(y'/\e)dx'dy' \\
&=: J^{1,1}_\e+J^{1,2}_\e+J^{1,3}_\e+J^{1,4}_\e\,.
\end{align*}
Note that 
$$
J^{1,1}_\e\leq \|g\|_\infty^2  \int_{\omega}\int_{\omega}K_\e(x'-y')|\Div_{x'}\overline M'_\e(x')|\,|\Div_{y'}\overline M'_\e(y')|\, dx'dy'\,,
$$
where we set 
$$
K_\e(z'):=\frac{\e}{\sqrt{|z'|^2+\e^2}}\,.
$$
Using the fact that  $\|K_\e\|_{L^1(B)} \leq C \e$, where $B$ is a sufficiently large ball containing $\omega-\omega$,  and that
$\Div \overline M_\e$ is bounded in $L^2$, we deduce from the Generalized Young's Inequality that
$J^{1,1}_\e\to 0$.
Analogously,
$$
J^{1,2}_\e\leq 2\|g\|_\infty^2  \int_{\omega}\int_{\omega}K'_\e(x'-y')|\Div_{x'}\overline M'_\e(x')|\, dx'dy'\,,
$$
with
$$
K'_\e(z'):=\frac{\e |z'|}{(|z'|^2+\e^2)^{3/2}}\,.
$$
Since  $\|K'_\e\|_{L^1(B)}\to 0$ (see \eqref{J3}), we also have $J^{1,2}_\e\to 0$ using Generalized Young's Inequality. Thus,
$$
\lim_{\e\to 0} J^1_\e=\lim_{\e\to 0}(J^{1,3}_\e+J^{1,4}_\e)\,.
$$
The last limit can be now computed arguing as in  the proof of Lemma~\ref{lm:main} to get
\beq\label{J1}
\lim_{\e\to 0} J^1_\e=\lim_{\e\to 0}(J^{1,3}_\e+J^{1,4}_\e)= 4\pi \int_\omega A'_{\mathrm{hom}, 4} m'_0\cdot m'_0 dx'\,,
\eeq
with
$$
A'_{\mathrm{hom}, 4}:=
\frac{1}{4\pi}\int_{Q}\int_{\R^2}\Biggl(\frac{Id}{(|z'|^2+1)^{3/2}}-\frac{3z \otimes z'}{(|z'|^2+1)^{5/2}}\Biggr)g(z'+x')g(x')\, dz'dx'\,.
$$
We reproduce here the argument for the reader's convenience.  First of all, note that we can write
$$
J^{1,3}_\e+J^{1,4}_\e=\int_{\omega}\int_{\omega}\hat \Gamma_\e(x', y') \overline M'_\e(x') \overline M'_\e(y')\, dx'dy'\,,
$$
where
$$
\hat \Gamma_\e(x', y'):=
\e \Biggl(\frac{Id}{(|x'-y'|^2+\e^2)^{3/2}}-\frac{3(x'-y') \otimes (x'-y')}{(|x'-y'|^2+\e^2)^{5/2}}\Biggr)g(x'/\e)g(y'/\e)\,,
$$
and note that
$$
 |\Gamma_\e(x', y')|\leq
\e\|g\|_\infty^2 \Biggl|\frac{Id}{(|x'-y'|^2+\e^2)^{3/2}}-\frac{3(x'-y') \otimes (x'-y')}{(|x'-y'|^2+\e^2)^{5/2}}\Biggr|=:\hat K_\e(x'-y')\,,
$$
with $\hat K_\e$ satisfying \eqref{Ktodelta} (with $\hat K_\e$ in place of $\frac{L}{2\pi \e} K_\e$). Moreover, a change of variables shows that 
$$
\hat G\left(\frac{{ y'}}{\e} \right) = \int_{\R^2} \hat \Gamma_\e (x',  y')\, dx'\,,
$$
where
$$
\hat G(x'):= \int_{\R^2}\Biggl(\frac{Id}{(|z'|^2+1)^{3/2}}-\frac{3z \otimes z'}{(|z'|^2+1)^{5/2}}\Biggr)g(z'+x')g(x')\, dz'\,.
$$
We can now proceed as in the last part of the proof of Lemma~\ref{lm:main} to show that 
$$
\int_{\omega}\Biggl|\int_{\omega}  \hat \Gamma_\e ({ x'}, { y'}) \overline M'_\e({ x'})  \, dx' - \hat G\left(\frac{{ y'}}{\e} \right)m'_0({ y'})\Biggr| dy'\to 0
$$
and, in turn, 
\begin{multline*}
 \lim_{\e \to 0}   \int_\omega \int_\omega \hat \Gamma_\e (x',y')  \overline M'_\e(x')\cdot  \overline M'_\e(y')  \, dx'dy'\\ =
 \lim_{\e \to 0}\int_\omega  \hat G\left(\frac{{ y'}}{\e} \right)m'_0({ y'})\cdot  \overline M'_\e(y') dy'=4\pi
  \int_\omega  A'_{\mathrm{hom}, 4} m'_0\cdot m'_0\, dx'\,.
\end{multline*}
This establishes \eqref{J1}
Collecting \eqref{red3}--\eqref{J1} we conclude the proof of the proposition.
\end{proof}
As at the end of Section~\ref{subsec:parallel}, we can combine the previous results to obtain the following:
\begin{proposition}\label{prp:finalgeneral} Let $m_0\in H^1(\omega; \mathbb{S}^2)$  and let $\overline M_\e \wto m_0$ weakly in $H^1(\omega; \overline{B(0,1)})$.
 For every $\e>0$ let  $\bar u_\e$ solve   \eqref{eqe} with $M$ replaced by $\overline M_\e$. Then
$$
\frac1\e\int_{\R^3}|\nabla \bar u_\e|^2\, dx\to  \int_\omega  A_{\mathrm{hom}}\, m_0\cdot m_0\, dx'\,,
$$
as $\e\to 0^+$, where $A_{\mathrm{hom}}$ is the matrix defined in \eqref{Ahomgen}.
\end{proposition}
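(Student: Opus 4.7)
The plan is to mirror the last step of Section~\ref{subsec:parallel}, namely the derivation of Proposition~\ref{prp:finalparallel}, with the parallel-profile statement Proposition~\ref{lm:main} replaced by its general-profile counterpart Proposition~\ref{prop:main}. First I would apply the representation formula of Lemma~\ref{lm:rappresentazione} to the potential $\bar u_\e$, viewing $\overline M_\e$ as an $x_3$-independent magnetization on $V_\e$. This splits $4\pi \int_{\R^3}|\nabla \bar u_\e|^2\, dx$ into a bulk-bulk term, a bulk-boundary cross term, and the boundary-boundary double integral, and it remains to handle each of these three pieces after multiplication by $1/(4\pi \e)$.

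The bulk-bulk and bulk-boundary contributions are negligible. Indeed, since $\overline M_\e \wto m_0$ weakly in $H^1(\omega; \overline{B(0,1)})$, the sequence $\Div_{x'} \overline M_\e$ is uniformly bounded in $L^2(\omega)$. The general-profile versions of Lemma~\ref{lm:bulk-bulk} and Lemma~\ref{lm:bulk-boundary}, which as pointed out at the beginning of Section~\ref{subsec:general} transfer verbatim from the parallel case, then bound the first contribution by $C\e^2$ and the third by $C \e^{3/2}$, both of which vanish in the limit after division by $\e$.

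For the boundary-boundary term, the weak $H^1$ convergence of $\overline M_\e$ to $m_0$ combined with the Rellich compactness theorem yields the strong $L^2(\omega; \R^3)$ convergence that is the hypothesis of Proposition~\ref{prop:main}. That proposition then supplies
$$
\frac{1}{4\pi \e}\int_{\pa V_\e}\int_{\pa V_\e}\frac{(\overline M_\e(x')\cdot \nu_{\e}(x))(\overline M_\e(y')\cdot \nu_{\e}(y))}{|x-y|}\, d\H^{2}(x)\, d\H^2(y) \longrightarrow \int_\omega A_{\mathrm{hom}}\, m_0 \cdot m_0 \, dx',
$$
with $A_{\mathrm{hom}}$ given by \eqref{Ahomgen}. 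Combining this convergence with the negligibility of the other two contributions established in the previous paragraph completes the argument.

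At the level of this statement there is essentially no new obstacle: all the delicate analytic work has already been carried out in Proposition~\ref{prop:main} (the homogenization of the top/bottom magnetic charge interaction, including the new $I^4_\e$ piece arising from the non-parallel geometry) and in the elementary surface-error estimates of Lemmas~\ref{lm:bulk-bulk}--\ref{lm:bd-bd2}. The only point worth double-checking is that the weak $H^1$ hypothesis is simultaneously strong enough to activate each ingredient, which is immediate: it furnishes both the $L^2$-boundedness of $\Div_{x'} \overline M_\e$ needed for the bulk and cross estimates and, via Rellich's theorem, the strong $L^2$ convergence needed as input to Proposition~\ref{prop:main}.
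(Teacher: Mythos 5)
Your proof is correct and matches the paper's intended argument: the paper's own ``proof'' of Proposition~\ref{prp:finalgeneral} is the single phrase ``As at the end of Section~\ref{subsec:parallel}, we can combine the previous results,'' and you supply exactly that combination by applying Lemma~\ref{lm:rappresentazione}, the transferred Lemmas~\ref{lm:bulk-bulk} and~\ref{lm:bulk-boundary}, and Proposition~\ref{prop:main}. One minor imprecision: Proposition~\ref{prop:main} is actually stated under a weak $H^1(\omega;\S^2)$ hypothesis rather than the strong $L^2$ convergence you mention, but this costs nothing since your hypothesis already furnishes weak $H^1$ convergence (hence, via Rellich, the strong $L^2$ convergence and, in addition, the $L^2$-boundedness of $\Div_{x'}\overline M_\e$ that the lemmas require).
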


\subsection{Study of the exchange energy}\label{sbs:exch}
In this section we identify the limiting exchange energy. 
We start with the following simple extension argument. 
\begin{lemma}\label{lm:extension}
Let $M>\max\{\|f_1\|_\infty, \|f_2\|_\infty\}$ and set $\Om^M:=\omega\times (0,M)$. Let $\{m_\e\}$  be such that $m_\e\in H^1(\Om_\e; \mathbb{S}^2)$ for every $\e>0$ and  
\beq\label{supe}
\sup_{\e>0} \int_{\Omega_\e} \left( | \nabla_{x'} m_\e|^2 + \frac{1}{\e^2} |\partial_{x_3} m_\e|^2 \right)\, dx<+\infty. 
\eeq
Then for every $\e>0$ there exists $\tilde m_\e\in H^1(Q_M; \S^2)$ such that $\tilde m_\e=m_\e$ in $\Om_\e$ and 
\beq\label{supme}
\sup_\e\int_{\Omega^M} \left( | \nabla_{x'} \tilde m_\e|^2 + \frac{1}{\e^2} |\partial_{x_3} \tilde m_\e|^2 \right)\, dx<+\infty. 
\eeq
\end{lemma}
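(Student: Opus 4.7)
The plan is to build $\tilde m_\e$ by iteratively reflecting $m_\e$ across the oscillating top and bottom boundaries of $\Omega_\e$. Set $\delta:=\min_{x'\in\R^2}(f_2(x')-f_1(x'))>0$, which is strictly positive by continuity, $Q$-periodicity, and the strict pointwise inequality $f_1<f_2$. Since each reflection enlarges the vertical extent by at least $\delta$, a finite number $N=N(f_1,f_2,M)$ of reflections will suffice to produce an extension defined on a region containing $\Omega^M=\omega\times(0,M)$ (I read the symbol $Q_M$ in the statement as a typo for $\Omega^M$).

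For the basic reflection step, consider the upward reflection
\[
T_\e^+(x',x_3):=\bigl(x',\,2f_2(x'/\e)-x_3\bigr),
\]
an involution with $|\det DT_\e^+|=1$ that maps $\Omega_\e$ onto the strip $S_\e^+:=\{(x',x_3):x'\in\omega,\,f_2(x'/\e)<x_3<2f_2(x'/\e)-f_1(x'/\e)\}$. Defining $m_\e^{(1)}:=m_\e$ on $\Omega_\e$ and $m_\e^{(1)}:=m_\e\circ T_\e^+$ on $S_\e^+$ yields a function in $H^1(\Omega_\e\cup S_\e^+;\mathbb{S}^2)$, since the traces on the common oscillating interface $\{x_3=f_2(x'/\e)\}$ agree ($T_\e^+$ fixes this interface pointwise). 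The chain rule gives
\[
\nabla_{x'}(m_\e\circ T_\e^+)=(\nabla_{x'}m_\e)\circ T_\e^+ - \tfrac{2}{\e}\,\nabla f_2(x'/\e)\otimes(\partial_{x_3}m_\e)\circ T_\e^+,\qquad \partial_{x_3}(m_\e\circ T_\e^+)=-(\partial_{x_3}m_\e)\circ T_\e^+,
\]
and therefore, after a change of variables and using $|a+b|^2\le 2|a|^2+2|b|^2$,
\[
\int_{S_\e^+}\!\Bigl(|\nabla_{x'}m_\e^{(1)}|^2+\tfrac{1}{\e^2}|\partial_{x_3}m_\e^{(1)}|^2\Bigr)\,dx\le C\int_{\Omega_\e}\!\Bigl(|\nabla_{x'}m_\e|^2+\tfrac{1}{\e^2}|\partial_{x_3}m_\e|^2\Bigr)\,dx,
\]
with $C$ depending only on $\mathrm{Lip}\,f_2$. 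Crucially, the $1/\e$ from the chain rule combines with the $1/\e^2$ weight on $\partial_{x_3}m_\e$ to give a uniform bound in $\e$. An analogous estimate holds for the downward reflection $T_\e^-(x',x_3):=(x',2f_1(x'/\e)-x_3)$, with $C$ depending on $\mathrm{Lip}\,f_1$.

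We then iterate, reflecting alternately across the new top and bottom oscillating interfaces produced at each step. Since each reflection adds a vertical slab of thickness at least $\delta$, and all resulting interfaces inherit the same Lipschitz regularity as the original boundary, after at most $N\le\lceil 2M/\delta\rceil+2$ reflections the extended function $\tilde m_\e^{(N)}$ is defined on a set containing $\Omega^M$, is $\mathbb{S}^2$-valued (reflections do not change values), and satisfies the bound \eqref{supme} with a constant independent of $\e$ (roughly the original bound times $C^N$). Setting $\tilde m_\e:=\tilde m_\e^{(N)}|_{\Omega^M}$ completes the construction. The only delicate point, and the one place requiring careful verification, is the $H^1$-conformity across each newly created oscillating interface; this reduces to the standard trace-matching principle, applicable here because every $T_\e^\pm$ is bi-Lipschitz (for fixed $\e$) and fixes the relevant interface pointwise.
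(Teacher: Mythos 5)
Your proof is correct and follows essentially the same strategy as the paper: repeated vertical reflections across the (rescaled) graphs of $f_1$ and $f_2$, with the key observation that the factor $1/\e$ produced by the chain rule when differentiating $f_i(x'/\e)$ is absorbed by the $1/\e^2$ weight on $|\partial_{x_3}m_\e|^2$. (The only blemish is a sign in your chain-rule formula, which should read $+\tfrac{2}{\e}\nabla f_2(x'/\e)\otimes(\partial_{x_3}m_\e)\circ T_\e^+$; this is immaterial for the estimate.)
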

\begin{proof}
The required extension is obtained through repeated vertical reflections with respect to the graphs of $f_1$ and $f_2$. More precisely, for every $k\in \N$, $k\geq 3$, we set 
$f_k:=f_2+(k-2)(f_2-f_1)$ and for $k\in \Z$, with $k\leq 0$,  set $f_k:=f_1+ (k-1)(f_2-f_1)$.
Moreover, for every $\e>0$ and $k\in \Z$ denote
$$
\Om_\e^k:=\left\{(x', x_3):\, x'\in \omega, f_k\left(\frac{x'}{\e}\right)< x_3<f_{k+1}\left(\frac{x'}{\e}\right)\right\}
$$
In particular, note that $\Om_\e^1=\Om_\e$. Set $m_\e^1:=m_\e$ on $\Om_\e$ and inductively define $m_\e^k$ on $\Om_\e^k$ as
$$
m_\e^k(x', x_3):=\begin{cases}
m_\e^{k-1}\left(x', 2 f_k\left(\frac{x'}{\e}\right)-x_3\right) & \text{if }k\geq 2\,, \vspace{7pt}\\

m_\e^{k+1}\left(x', 2 f_{k+1}\left(\frac{x'}{\e}\right)-x_3\right) & \text{if }k\leq 0\,.
\end{cases}
$$
Finally, we let $\tilde m_\e: \omega\times\R \to \S^2$ be   defined   as $\tilde m_\e:=m_\e^k$ on $\Om_\e^k$.  In order to proof \eqref{supme}, it clearly suffices to show that for every $k\in \Z$ we have
\beq\label{supme2}
\sup_\e \int_{\Omega^k_\e} \left( | \nabla_{x'} m^k_\e|^2 + \frac{1}{\e^2} |\partial_{x_3} m^k_\e|^2 \right)\, dx<+\infty\,.
\eeq
To this aim, observe that for $k\geq 2$ we have
\begin{multline*}
\nabla m_\e^k(x', x_3)=\biggl(\nabla _{x'}m_\e^{k-1}\Bigl(x', 2 f_k\Bigl(\frac{x'}{\e}\Bigr)-x_3\Bigr)+\frac2\e\partial_{x_3}m_\e^{k-1}\Bigl(x', 2 f_k\Bigl(\frac{x'}{\e}\Bigr)-x_3\Bigr)\nabla f_k\Bigl(\frac{x'}{\e}\Bigr),\\ -\partial_{x_3}m_\e^{k-1}\Bigl(x', 2 f_k\Bigl(\frac{x'}{\e}\Bigr)-x_3\Bigr)\biggr)\,.
\end{multline*}
Thus, \eqref{supme2} follows easily by induction for $k\geq 2$ recalling that by  \eqref{supe} we have 
$$
\sup_\e \left\|\Bigl(\nabla_{x'} m^1_\e, \frac1\e \partial_{x_3} m_\e^1\Bigr)\right\|_{L^2(\Om_\e^1; \M^{3\times 3})}<+\infty\,.
$$
The proof for $k\leq 0$ is analogous.
\end{proof}
We are now ready to proof the $\Gamma$-liminf inequality for the exchange energy.  
 \begin{proposition}\label{prop:exch-liminf}
Let $m_0\in H^1(\omega; \mathbb{S}^2)$  and let $\{m_\e\}_\e$ be such that $m_\e\in H^1(\Om_\e; \mathbb{S}^2)$ for every $\e>0$ and 
\beq\label{conve}
\int_{\Om_\e}|m_\e(x)-m_0(x')|^2\, dx\to 0
\eeq
as $\e\to 0^+$. Then
\beq\label{exch-liminf}
\int_{\omega}g_{\mathrm{hom}}(\nabla_{x'} m_0)\, dx'\leq \liminf_{\e \to 0} \int_{\Omega_\e} \left( | \nabla_{x'} m_\e|^2 + \frac{1}{\e^2} 
|\partial_{x_3} m_\e|^2 \right)\, dx\,,
\eeq
where $g_{\mathrm{hom}}$ is the homogenized exchange energy density defined in  \eqref{ghom}.
\end{proposition}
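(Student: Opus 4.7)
The plan is to adapt the two-scale dimension reduction framework of \cite{neukamm10} to our oscillating domain $\Omega_\e$. Without loss of generality, assume the right-hand side of \eqref{exch-liminf} is finite and, along a subsequence, equals a limit. By Lemma~\ref{lm:extension}, extend each $m_\e$ to $\tilde m_\e \in H^1(\Om^M; \mathbb{S}^2)$ preserving the scaled gradient bound \eqref{supme}. The reflection structure of the extension, combined with $\partial_{x_3}\tilde m_\e = O(\e)$ in $L^2$, implies $\tilde m_\e \to m_0$ in $L^2(\Om^M)$ (here $m_0$ is trivially extended since it depends only on $x'$) and $\tilde m_\e \rightharpoonup m_0$ weakly in $H^1(\Om^M)$.

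Next, I apply dimension-reduction two-scale compactness to the pair $(\nabla_{x'}\tilde m_\e, \frac{1}{\e}\partial_{x_3}\tilde m_\e)$. The horizontal microscopic variable is $y'=x'/\e \in Q$; because the rescaled vertical thickness is already $O(1)$, the vertical microscopic variable $y_3$ is identified with $x_3 \in (0,M)$, \emph{without} rescaling. The motivating ansatz is $\tilde m_\e(x',x_3) \approx m_0(x') + \e\,\psi(x', x'/\e, x_3)$, whose horizontal gradient produces $\nabla_{x'} m_0 + \nabla_{y'}\psi$ at leading order and whose vertical derivative produces $\e\,\partial_{x_3}\psi$, so that the scaled derivative $\frac{1}{\e}\partial_{x_3}\tilde m_\e$ is $O(1)$. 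Testing against oscillating divergence-free vector fields in the standard way yields: up to a further subsequence, there exists $\psi \in L^2(\omega; H^1_\#(Q\times(0,M); \R^3))$, $Q$-periodic in $y'$, such that
$$
\nabla_{x'}\tilde m_\e \;\2\; \nabla_{x'} m_0(x') + \nabla_{y'}\psi(x',y',y_3), \qquad \tfrac{1}{\e}\partial_{x_3}\tilde m_\e \;\2\; \partial_{y_3}\psi(x',y',y_3).
$$

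Since the indicator $\chi_{\Omega_\e}(x',x_3) = \chi_{Q_{f_1,f_2}}(x'/\e, x_3)$ is an admissible oscillating test function with strong two-scale limit $\chi_{Q_{f_1,f_2}}(y',y_3)$, the lower semicontinuity of $\xi \mapsto |\xi|^2$ under two-scale convergence yields
\begin{align*}
\liminf_{\e\to 0}\int_{\Omega_\e}\Bigl(|\nabla_{x'} m_\e|^2 + \tfrac{1}{\e^2}|\partial_{x_3} m_\e|^2\Bigr) dx \geq \int_\omega\int_{Q_{f_1,f_2}}\Bigl(|\nabla m_0(x') + \nabla_{y'}\psi|^2 + |\partial_{y_3}\psi|^2\Bigr)dy\,dx'.
\end{align*}
For a.e.\ $x'\in\omega$, the restriction of $\psi(x',\cdot,\cdot)$ to $Q_{f_1,f_2}$ lies in $H^1_\#(Q_{f_1,f_2}; \R^3)$ and is admissible in \eqref{ghom}; hence the inner integral bounds $g_{\mathrm{hom}}(\nabla m_0(x'))$ from below, and integrating over $\omega$ gives \eqref{exch-liminf}.

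The principal obstacle is the two-scale compactness step, since the ambient domain $\Omega_\e$ oscillates and the effective microscopic cell $Q_{f_1,f_2}$ is non-rectangular. The fix is precisely Lemma~\ref{lm:extension}, which transfers everything to the fixed slab $\Om^M = \omega\times(0,M)$ with the standard product microscopic cell $Q\times(0,M)$; the oscillating geometry of $\Omega_\e$ then reenters only through the strong two-scale convergence of the characteristic function $\chi_{\Omega_\e}$. A secondary subtlety is the sphere constraint $|m_\e|=1$, which would in principle restrict $\nabla m_0 + \nabla_{y'}\psi$ and $\partial_{y_3}\psi$ to be tangent to $m_0$; however, for the lower bound one is free to relax this constraint (enlarging the admissible class of correctors only decreases the infimum in \eqref{ghom}), so the constraint plays no role here and will become relevant only in the matching $\Gamma$-limsup construction.
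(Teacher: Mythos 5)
Your overall strategy coincides with the paper's: extend via Lemma~\ref{lm:extension} to the fixed slab $\Om^M$, invoke the two-scale dimension-reduction compactness of \cite{neukamm10} for the pair $\bigl(\nabla_{x'}\tilde m_\e, \tfrac1\e\pa_{x_3}\tilde m_\e\bigr)$, reinsert the indicator of the oscillating domain, and conclude from the cell formula~\eqref{ghom}. Your remark that the sphere constraint can simply be dropped in the liminf is also correct and is implicit in the paper.

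There is, however, a genuine gap in the step where you assert that $\chi_{\Omega_\e}(x',x_3)=\chi_{Q_{f_1,f_2}}(x'/\e,x_3)$ is an \emph{admissible} two-scale test function and that ``lower semicontinuity of $\xi\mapsto|\xi|^2$ under two-scale convergence'' then gives the limit inequality. In the definition of two-scale convergence used here (Definition~\ref{def:2stest}), a test function must lie in $L^2(\Om^M;C_\#(Q;H))$, i.e.\ be \emph{continuous} in the fast variable $y'$; the function $g(x_3,y')=\chi_{(f_1(y'),f_2(y'))}(x_3)$ is discontinuous in $y'$ and hence is not admissible. The same issue affects the corrector $\psi$ you obtain from two-scale compactness: it lies only in $L^2$ in the $y'$ variable, so it too cannot be used directly in the usual ``complete-the-square'' mechanism
\beq
|a|^2\ \geq\ -|b|^2+2\langle a,b\rangle
\eeq
that implements lower semicontinuity. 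The paper resolves both points with explicit approximations: it replaces the indicator $g$ by an increasing sequence of $k$-Lipschitz, $Q$-periodic functions $g_k\nearrow g$, which \emph{are} admissible, uses $g_k\leq g$ to bound $\int_{\Om_\e}$ from below, and then passes $k\to\infty$ by monotone convergence; and it mollifies the corrector in the fast variable, $m_1^\eta=\rho_\eta*_{y'}m_1$, so that $\nabla_{x_3,y'}m_1^\eta\in L^2(\Om^M;C_\#(Q;\M^{3\times3}))$ is an admissible test object, letting $\eta\to0^+$ at the end. Your proof would have to include these two approximation layers (or cite a lower-semicontinuity statement that already incorporates them) to be complete; without them, the key inequality
\beq
\liminf_{\e}\int_{\Om^M}g\Bigl(x_3,\tfrac{x'}{\e}\Bigr)|v_\e|^2\, dx\ \geq\ \int_{\Om^M}\int_Q g(x_3,y')\,|v_0(x,y')|^2\,dy'dx
\eeq
for a weakly two-scale converging sequence $v_\e\2 v_0$ has not actually been established.

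Note also a small inaccuracy: the paper's microscopic cell in the corrector space is $Q\times(0,M)$ with periodicity only in $y'$, and the vertical variable coincides with the physical $x_3$ (the paper writes $\mathcal Y\subset H^1((0,M)\times Q)$ and $m_1=m_1(x',x_3,y')$). Your notation $\psi(x',y',y_3)$ conveys the same idea but, read literally, suggests an independent $y_3$ variable; this cosmetic point is worth clarifying so as not to mislead about the cell problem in~\eqref{ghom}.
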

 When $f_2=-f_1+a$ for some $a>0$, the above result is proven in \cite{braides00}. It is also clear that the methods of \cite{braides00} could be adapted to deal with thin films of the form \eqref{dom-gen}.
 However, for the reader's convenience we prefer to give here a simple self-contained proof based on the two-scale approach developed in \cite{neukamm10}.  Following \cite{neukamm10} (see also \cite{neukamm13}) we consider the following notion of two-scale convergence      adapted to the  3D-2D dimension reduction framework with the purpose of capturing the in-plane oscillations.
 
 \begin{definition}\label{def:2s}
 Let $\Om^M$ be as in Lemma~\ref{lm:extension}, let $H$ be a finite dimensional Hilbert space,  and let $\{g_\e\}\subset L^2(\Om^M; H)$ be $L^2$-bounded.  For any subsequence $\e_n\searrow 0$   we say that $\{g_{\e_n}\}$ two-scale converges to $g$, with 
 $g\in L^2\big(\Om^M; L^2(Q; H)\big)$, and we write $g_{\e_n}\2 g$, if 
 $$
 \lim_{n }\int_{\Om^M}\Bigl\langle g_{\e_n}(x), \psi\Bigl(x,\frac{x'}{\e_n}\Bigr)\Bigr\rangle\, dx=\int_{\Om^M}\int_{Q}\langle g(x, y'), \psi(x, y')\rangle\, dy'\, dx
 $$
 for all $\psi\in L^2(\Om^M; C_\#(Q; H))$. Here, $C_\#(Q; H)$ denotes the space of the $Q$-periodic continuous functions from $\R^2$ to $H$,   endowed with the sup norm on $Q$, and $\langle\cdot, \cdot\rangle$  stands for the scalar product of $H$. 
 \end{definition}
\begin{definition}\label{def:2stest}
Any function $\psi\in L^2(\Om^M; C_\#(Q; H))$ will be called an {\em admissible} test function for the two-scale convergence  defined in Definition~\ref{def:2s}. 
 \end{definition}
\begin{proof}[Proof of Proposition~\ref{prop:exch-liminf}]
Without loss of generality we may assume that \eqref{supe} holds.  Let $\{\tilde m_\e\}$ be the family of extensions provided by Lemma~\ref{lm:extension}. In particular \eqref{supme} holds and $\tilde m_\e\wto m_0$ weakly in $H^1(\Om^M; \S^2)$. Fix a subsequence $\e_n$ along which the liminf in \eqref{exch-liminf} is achieved. Thus,
 denoting by $\mathcal Y$ the subspace of $H^1((0,M)\times Q; \R^3)$ of functions $m=m(x_3, y' )$ that are $Q$-periodic in the $y'$-variable,  we may thus apply \cite[Theorem 6.3.3]{neukamm10}  and find $m_1=m_1(x', x_3, y')\in L^2(\omega; \mathcal Y)$ and a (not relabelled) subsequence such that 
 \beq\label{me2s}
 \Bigl(\nabla_{x'}\tilde m_{\e_n}, \frac1{\e_n}\pa_{x_3}\tilde m_{\e_n}\Bigr)\2 (\nabla_{x'}m_0+\nabla_{y'}m_1, \pa_{x_3}m_1 )
 \eeq
 in the sense of Definition~\ref{def:2s}, that is, 
 \begin{multline*}
\lim_n \int_{\Om^M}\Bigl\langle\Bigl(\nabla_{x'}\tilde m_{\e_n}(x), \frac1{\e_n}\pa_{x_3}\tilde m_{\e_n}(x)\Bigr),  \psi\Bigl(x,\frac{x'}{\e_n}\Bigr)\Bigr\rangle\, dx
\\=\int_{\Om^M}\int_{Q}\Bigl\langle\Big(\nabla_{x'}m_0(x')+\nabla_{y'}m_1(x, y'), \pa_{x_3}m_1(x,y') \Big),  \psi(x, y')\Bigr\rangle\, dy'\, dx
 \end{multline*}
 for all $\psi\in L^2(\Om^M; C_\#(Q; \M^{3\times3}))$.   For $\eta>0$   we can define
 $$
 m^\eta_1(x',x_3, y'):=\int_{\R^2}\rho_\eta(y'-  z')m_1(x',x_3, z')\, dz'  
 $$
for almost every $(x', x_3)\in \Omega^M$ and for all $ y'\in   Q$, where $(\rho_\eta)_\eta$ stands for the standard family of mollifiers on $\R^2$.  Note that in particular  
$\nabla_{x_3, y'}m_1^\eta\in L^2(\Om^M; C_\#(Q; \M^{3\times3}))$ for every $\eta>0$ and thus it can be used as a test function for the two-scale convergence, see Definition~\ref{def:2stest}.

 For every $k\in \N$, $x_3\in (0,M)$, and $y'\in \R^2$ set
 $$
 g_k(x_3, y'):=\inf\{\chi_{(f_1(z'), f_2(z'))}(t)+k|(t,z')-(x_3, y')|:\, t\in (0,M),\, z'\in \R^2\}\,,
 $$
 so that $0\leq g_k\leq 1$
 \beq\label{gkey}
 g_k(x_3, y')\nearrow g(x_3, y'):=\chi_{(f_1(y'), f_2(y'))}(x_3)
 \eeq
 as $k\to\infty$. Note also that by construction $g_k$ is $k$-Lipschitz continuous and $Q$-periodic in the $y'$-variable. Therefore, it is an admissible test function for the two-scale convergence.  Notice that for every $n$, $k\in \N$ and $\eta>0$ we have
 \begin{align*}
  \int_{\Omega_{\e_n}} \Bigl( | \nabla_{x'} m_{\e_n}|^2 & + \frac{1}{\e_n^2} |\partial_{x_3} m_{\e_n}|^2 \Bigr)\, dx \geq 
    \int_{\Omega^M}g_k\Bigr(x_3, \frac{x'}{\e_n}\Bigl) \left( | \nabla_{x'} \tilde m_{\e_n}|^2  + \frac{1}{\e_n^2} |\partial_{x_3} \tilde m_{\e_n}|^2 \right)\, dx\\
   &\geq - \int_{\Omega^M}g_k\Bigr(x_3, \frac{x'}{\e_n}\Bigl) \Bigl | \nabla_{x'}  m_{0}(x')+\nabla_{y'}m_1^\eta\Bigl(x,\frac{x'}{\e_n}\Bigr) \Bigr|^2\, dx\\
   &\quad +2\int_{\Omega^M}g_k\Bigr(x_3, \frac{x'}{\e_n}\Bigl) \Bigl \langle \nabla_{x'}  m_{0}(x')
   +\nabla_{y'}m_1^\eta\Bigl(x,\frac{x'}{\e_n}\Bigr), \nabla_{x'}\tilde m_{\e_n}(x)\Bigr\rangle\, dx\\
     &\quad- \int_{\Omega^M}g_k\Bigr(x_3, \frac{x'}{\e_n}\Bigl) \Bigl |\pa_{x_3}m_1^\eta\Bigl(x,\frac{x'}{\e_n}\Bigr) \Bigr|^2\, dx\\
      &\quad +2\int_{\Om^M}g_k\Bigr(x_3, \frac{x'}{\e_n}\Bigl)\Bigl \langle \frac1{\e_n}\pa_{x_3}\tilde m_{\e_n}(x) , \pa_{x_3}m_1^\eta\Bigl(x,\frac{x'}{\e_n}\Bigr) \Bigr\rangle\, dx
 \end{align*}
 Recalling  that
 $g_k\Bigl(\cdot, \frac{\cdot}{\e_n}\Bigr)\2 g_k$
 as $n\to \infty$, using \eqref{me2s} and the admissibility of  $\nabla_{x_3, y'}m_1^\eta$, $g_k$ as test functions for the two-scale convergence, we deduce that 
 \begin{align*}
  \liminf_n &\int_{\Omega_{\e_n}} \Bigl( | \nabla_{x'} m_{\e_n}|^2  + \frac{1}{\e_n^2} |\partial_{x_3} m_{\e_n}|^2 \Bigr)\, dx\\
  & \geq \int_{\Omega^M}\int_Q g_k(x_3, y') \biggl[ -| \nabla_{x'}  m_{0}(x')+\nabla_{y'}m_1^\eta(x,y') |^2\\
   &\quad\qquad\qquad\qquad\qquad +2\langle \nabla_{x'}  m_{0}(x')
   +\nabla_{y'}m_1^\eta(x,y'), \nabla_{x'}m_0(x')+\nabla_{y'}m_1(x, y') \rangle\, \\
     &\quad\qquad\qquad\qquad\qquad-  |\pa_{x_3}m_1^\eta(x,y') |^2
     + 2 \langle \pa_{x_3} m_{1}(x, y') , \pa_{x_3}m_1^\eta(x,y')\rangle\biggr]\,dy' dx\,.
\end{align*} 
In turn, recalling \eqref{gkey} and that $\nabla_{x_3, y'}m_1^\eta\to \nabla_{x_3, y'}m_1$ in $L^2(\Om^M; L^2(Q; \M^{3\times 3})$ as $\eta\to 0^+$, we may conclude 
\begin{align*}
  \liminf_n \int_{\Omega_{\e_n}} \Bigl( | \nabla_{x'} m_{\e_n}|^2  + \frac{1}{\e_n^2} |\partial_{x_3} m_{\e_n}|^2 \Bigr)\, dx & \geq 
  \int_{\Omega^M}\int_Q g(x_3, y') \Bigl[ | \nabla_{x'}  m_{0}(x')+\nabla_{y'}m_1(x,y') |^2\\
  &\qquad\qquad\qquad\qquad\qquad+  |\pa_{x_3}m_1(x,y') |^2\Bigr]\, dy'dx\\
  &= \int_\omega\int_{Q_{f_1,f_2}} \Bigl[ | \nabla_{x'}  m_{0}(x')+\nabla_{y'}m_1(x',x_3, y') |^2\\
  &\qquad\qquad\qquad+  |\pa_{x_3}m_1(x', x_3, ,y') |^2\Bigr]\, dy'dx_3 dx'\\
  &\geq \int_{\omega}g_{\mathrm{hom}}(\nabla_{x'} m_0)\, dx'\,,
  \end{align*}
  where the last inequality follows from the very definition \eqref{ghom} of $g_{\mathrm{hom}}$, recalling that for a.e. $x'\in \omega$ we have $m_1(x', \cdot, \cdot)\in
   H^1_{\#}(Q_{f_1,f_2};\,\R^3)$. This concludes the proof of the proposition.
\end{proof}
 
 We now  seek to prove the upper bound. We start with the following remark.
 
\begin{remark}[Cell formula revisited]\label{rm:cell} Let $\xi\in \M^{3\times 2}$ and denote by 
$\bar \xi$ its extension to a matrix in $\M^{3\times 3}$  obtained by taking a vanishing third column. Then by standard arguments    $\vphi_\xi$ solves the    minimization problem in  \eqref{ghom} if and only if it is a weak solution  to the problem
\beq\label{wmin}
\begin{cases}
\Delta \vphi_\xi=0 & \text{in $Q_{f_1, f_2}$,}\\
\nabla \vphi_\xi\nu=-\bar \xi\nu & \text{on $\pa Q_{f_1, f_2}\setminus (\pa Q\times \R)$,}\\
 \vphi_\xi \text{ is $Q$-periodic in the $y'$-variable,}
\end{cases}
\eeq
where $\nu$ denotes the outer unit normal to $\pa Q_{f_1, f_2}$. In particular, $\vphi_\xi$ is unique up adding constant vectors. Incidentally, \eqref{wmin} shows that the solution $\vphi_\xi$ is non constant, unless of course $f_1$ and $f_2$ are constant. 
Note also that whenever $\vphi_\xi$ is nonconstant then necesessarily it is also $x_3$-dependent, since otherwise it would be a periodic harmonic function (and thus constant).
Let now $s\in \S^2$ be such that $s\xi=0$ (that is, $s$ is orthogonal to both columns of $\xi$).  Then, setting $\psi_\xi:=\vphi_\xi- (\vphi_\xi\cdot s)s$ we can argue as in \cite[page 10]{alouges15}  to show that 
$$
|\xi+\nabla_{y'}\vphi_\xi|^2+|\pa_{y_3}\vphi_\xi|^2\geq |\xi+\nabla_{y'}\psi_\xi|^2+|\pa_{y_3}\psi_\xi|^2+|\nabla (\vphi_\xi\cdot s)|^2\geq  |\xi+\nabla_{y'}\psi_\xi|^2+|\pa_{y_3}\psi_\xi|^2\,.
$$
It follows that $\psi_\xi$ is also a solution and thus $\nabla (\vphi_\xi\cdot s)\equiv 0$, that is, $\vphi_\xi\cdot s$ is constant.
 Therefore,  upon  adding a suitable constant vector, we may assume that the solution $\vphi_\xi$ satisfies
$$
\begin{cases}
\vphi_\xi \text{ solves  \eqref{ghom} or equivalently \eqref{wmin}}\,,\\
\int_{Q_{f_1, f_2}}\vphi_\xi\, dx=0\,,\\
 \vphi_\xi\cdot s=0 \text{ in }Q_{f_1, f_2}\,.
\end{cases}
$$
The above conditions determine $\vphi_\xi$ uniquely. Notice also that $\vphi_{\lambda \xi}=\lambda\vphi_\xi$ for all $\lambda\in \R$, so that, in particular, $g_{\mathrm{hom}}(\lambda \xi)=|\lambda|^2 g_{\mathrm{hom}}(\xi)$ for all $\xi\in \M^{3\times 2}$ and for all $\lambda\in \R$.  Moreover, standard arguments show that if $\xi_k\to \xi$, then $\vphi_{\xi_k}\to \vphi_\xi$ in 
$H^1$ and thus  $g_{\mathrm{hom}}$ is continuous. Finally, choosing $\vphi=0$ as a test function in \eqref{ghom} we immediately get $g_{\mathrm{hom}}(\xi)\leq |\xi|^2$ for all $\xi\in \M^{3\times 2}$.

\end{remark}
 
\begin{lemma}\label{lm:density}
Let $M>0$ be as in Lemma~\ref{lm:extension} and denote by  $\mathcal Y$   the subspace of $H^1(Q\times (0,M); \R^3)$ of functions $m=m(y)$ that are $Q$-periodic in the $y'$-variable. Let $m_0\in C^1(\overline \omega; \S^2)$ then, for $g_{\mathrm{hom}}$ defined in \eqref{ghom}, the following identity holds:
\begin{multline}\label{switchinf}
\int_{\omega}g_{\mathrm{hom}}(\nabla_{x'} m_0)\, dx'=\inf\biggl\{ \int_\omega\int_{Q_{f_1,f_2}} \Bigl[ | \nabla_{x'}  m_{0}(x')+\nabla_{y'}m(x',y) |^2+  |\pa_{y_3}m(x', y) |^2\Bigr]\, dy  dx':\\
m\in C^1(\overline \omega;  \mathcal Y)\text{ s.t. } m(x', y)\cdot m_0(x')\equiv0\text{ for a.e.  }(x', y)\in \omega \times [Q\times (0,M)]  \biggr\}\,.
\end{multline}
\begin{proof}
Without loss of generality we may assume that $m_0\in C^1(\R^2; \S^2)$. Now for every $x'\in \R^2$ let $\overline m(x', \cdot)\in H^1_{\#}(Q_{f_1,f_2};\,\R^3)$ be the {\em unique}
solution to 
\beq\label{thirdeq}
\begin{cases}
\overline m(x', \cdot) \text{ solves } \eqref{ghom} \text{ with }\xi\text{ replaced by }\nabla_{x'} m_0(x')\,,\\
\int_{Q_{f_1, f_2}}\overline m(x', y) dy=0\,,\\
 \overline m(x', \cdot)\cdot m_0(x')=0 \text{ in }Q_{f_1, f_2}\,.
\end{cases}
\eeq
The solution to the above problem exists and is unique, thanks to Remark~\ref{rm:cell}, since $m_0$ is $\S^2$-valued and thus  $m_0(x')\nabla m_0(x')=0$ for all $x'$. By repeated reflections of $\overline m(x', \cdot)$ with respect to the $y_3$-variable (as in the proof of Lemma~\ref{lm:extension}) we may in fact assume that $\overline m(x',\cdot)\in \mathcal Y$ and that the third equation in \eqref{thirdeq} holds in $Q\times (0,M)$. 
Due to uniqueness, it is easy to see that $\overline m\in C^0(\R^2; \mathcal Y)$. In particular, $\overline m$ and $\nabla_y\overline m$ are globally measurable and 
\beq\label{meas}
\int_{\omega}g_{\mathrm{hom}}(\nabla_{x'} m_0)\, dx'= \int_\omega\int_{Q_{f_1,f_2}} \Bigl[ | \nabla_{x'}  m_{0}(x')+\nabla_{y'}\overline m(x',y) |^2+  |\pa_{y_3}\overline m(x', y) |^2\Bigr]\, dy  dx'\,.
\eeq
 Let $(\rho_\eta)_{\eta>0}$ be a family of standard mollifiers on $\R^2$ and for every $y\in Q\times (0,M)$ set  $\overline m_\eta(\cdot, y):=\rho_\eta*\overline m(\cdot, y)$, that is, $\overline m_\eta$ is defined by taking the convolution of $\overline m$ with respect to the $x'$-variable. Note that by the properties of convolutions we have $\overline m_\eta\in C^\infty(\R^2; \mathcal Y)$ and $\overline m_\eta\to \overline m$ in $C^0(\overline\omega; \mathcal Y)$, as $\eta\to 0^+$. In turn, setting 
 $\widehat m_\eta:=\overline m_\eta- (\overline m_\eta\cdot m_0)m_0$, we have
  $\widehat m_\eta\in C^1(\R^2;  \mathcal Y)$ and  $\widehat m_\eta(x', \cdot)\cdot m_0(x')\equiv0\text{ for all }x'$. Moreover, using the third equation in \eqref{thirdeq} in $Q\times (0,M)$ one sees that  
  $\widehat m_\eta\to \overline m - (\overline m \cdot m_0)m_0=\overline m$ in $C^0(\overline\omega;\mathcal Y)$ as $\eta\to 0^+$. Owing to the latter convergence property and recalling \eqref{meas} we easily deduce
  \begin{align*}
  &\int_{\omega}g_{\mathrm{hom}}(\nabla_{x'} m_0)\, dx'=\int_\omega\int_{Q_{f_1,f_2}} \Bigl[ | \nabla_{x'}  m_{0}(x')+\nabla_{y'}\overline m (x',y) |^2+  |\pa_{y_3}\overline m (x', y) |^2\Bigr]\, dy  dx'\\
  &= \lim_{\eta\to 0^+} \int_\omega\int_{Q_{f_1,f_2}} \Bigl[ | \nabla_{x'}  m_{0}(x')+\nabla_{y'}\widehat m_\eta(x',y) |^2+  |\pa_{y_3}\widehat m_\eta(x', y) |^2\Bigr]\, dy  dx'\\
  &\geq \inf\biggl\{ \int_\omega\int_{Q_{f_1,f_2}} \Bigl[ | \nabla_{x'}  m_{0}(x')+\nabla_{y'}m(x',y) |^2+  |\pa_{y_3}m(x', y) |^2\Bigr]\, dy  dx':
  \\
& \qquad\qquad\qquad\qquad\qquad\qquad m\in C^1(\overline \omega;  \mathcal Y )\text{ s.t. } m(x', \cdot)\cdot m_0(x')\equiv0\text{ for all }x'\in \omega \biggr\}\,.
    \end{align*}
Since the other inequality is trivial, this concludes the proof of the lemma. 
\end{proof}
\end{lemma}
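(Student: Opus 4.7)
The plan is to prove the two inequalities separately. The $\geq$ direction is immediate from the definition of $g_{\mathrm{hom}}$: for any admissible $m$ in the right-hand infimum, the restriction $m(x',\cdot)|_{Q_{f_1,f_2}}$ lies in $H^1_\#(Q_{f_1,f_2};\R^3)$ for a.e.\ $x'$, so the inner integral bounds $g_{\mathrm{hom}}(\nabla_{x'}m_0(x'))$ from above, and integration over $\omega$ concludes.

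For the nontrivial $\leq$ direction, the key structural observation is that since $m_0\in\S^2$ pointwise, differentiating $|m_0|^2\equiv1$ yields $m_0(x')\cdot\partial_{x_i}m_0(x')=0$ for $i=1,2$, i.e.\ $m_0(x')$ is orthogonal to both columns of $\nabla_{x'}m_0(x')$. Hence Remark~\ref{rm:cell}, applied with $s=m_0(x')$ and $\xi=\nabla_{x'}m_0(x')$, produces for every $x'\in\overline\omega$ a \emph{unique} element $\overline m(x',\cdot)\in H^1_\#(Q_{f_1,f_2};\R^3)$ solving the cell problem defining $g_{\mathrm{hom}}(\nabla_{x'}m_0(x'))$ and satisfying the normalizations $\int_{Q_{f_1,f_2}}\overline m(x',y)\,dy=0$ and $\overline m(x',\cdot)\cdot m_0(x')\equiv0$.

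With $\overline m$ in hand, I would carry out four steps: (i) extend $\overline m(x',\cdot)$ from $Q_{f_1,f_2}$ to $Q\times(0,M)$ by the vertical reflection procedure of Lemma~\ref{lm:extension}, obtaining $\overline m(x',\cdot)\in\mathcal Y$ while preserving the orthogonality condition on the whole of $Q\times(0,M)$; (ii) show that $\overline m\in C^0(\overline\omega;\mathcal Y)$, using uniqueness together with standard $H^1$ continuous-dependence for the Neumann-type problem \eqref{wmin} under perturbations of both $\nabla_{x'}m_0(x')$ and the orthogonality direction $m_0(x')$; (iii) mollify in the $x'$-variable, setting $\overline m_\eta(\cdot,y):=\rho_\eta*_{x'}\overline m(\cdot,y)$, to obtain $\overline m_\eta\in C^\infty(\R^2;\mathcal Y)$ with $\overline m_\eta\to\overline m$ in $C^0(\overline\omega;\mathcal Y)$; (iv) restore the orthogonality by projection, defining $\widehat m_\eta:=\overline m_\eta-(\overline m_\eta\cdot m_0)m_0\in C^1(\R^2;\mathcal Y)$; since $\overline m\cdot m_0\equiv0$, the correction term is $C^0(\overline\omega;\mathcal Y)$-small, so $\widehat m_\eta\to\overline m$ in $C^0(\overline\omega;\mathcal Y)$ as well.

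Plugging the admissible test function $\widehat m_\eta$ into the right-hand infimum and letting $\eta\to0^+$, the continuity of the quadratic energy with respect to $C^0(\overline\omega;\mathcal Y)$-convergence of the test function (combined with uniform boundedness of $\nabla_y\widehat m_\eta$ inherited from $\nabla_y\overline m_\eta=\rho_\eta *_{x'}\nabla_y\overline m$) yields the value
\[
\int_\omega\int_{Q_{f_1,f_2}}\!\!\Bigl[|\nabla_{x'}m_0(x')+\nabla_{y'}\overline m(x',y)|^2+|\partial_{y_3}\overline m(x',y)|^2\Bigr]\,dy\,dx',
\]
which equals $\int_\omega g_{\mathrm{hom}}(\nabla_{x'}m_0)\,dx'$ by construction of $\overline m$. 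The main obstacle I expect is step~(ii): proving the continuous dependence $x'\mapsto\overline m(x',\cdot)\in\mathcal Y$ in the right topology. This requires simultaneously exploiting uniqueness from Remark~\ref{rm:cell}, $H^1$-stability of \eqref{wmin}, and the fact that $m_0$ (and hence the admissible orthogonality direction $s=m_0(x')$) varies continuously; without this regularity the mollification in step~(iii) cannot be set up. Once (ii) is in place, steps~(iii)--(iv) are routine and the proof is complete.
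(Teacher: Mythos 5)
Your proposal is correct and follows essentially the same route as the paper: trivial $\geq$-inequality from the definition of $g_{\mathrm{hom}}$, then for the reverse inequality construct the pointwise cell-problem minimizer $\overline m(x',\cdot)$ normalized via Remark~\ref{rm:cell}, extend by reflection to $\mathcal Y$, establish $\overline m\in C^0(\overline\omega;\mathcal Y)$ by uniqueness and $H^1$-stability, mollify in $x'$, and restore orthogonality by projecting onto $m_0(x')^\perp$. Your flagging of the continuity step (ii) as the main point requiring care matches where the paper implicitly carries the argument ("Due to uniqueness, it is easy to see that $\overline m\in C^0(\R^2;\mathcal Y)$"); the only minor redundancy is that, as the remark shows, the $s$-orthogonality is automatic once the zero-mean normalization is imposed, so continuity of the normalized solution already follows from continuity of $\xi\mapsto\vphi_\xi$ without separately tracking the dependence on $s=m_0(x')$.
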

 
 We are now ready to establish the upper bound for the limiting exchange energy.
 \begin{proposition}\label{prop:exch-limsup}
Let $m_0\in H^1(\omega; \mathbb{S}^2)$. Then there exists  $\{m_\e\}_{\e>0}$  such that $m_\e\in H^1(\Om_\e; \mathbb{S}^2)$ for every $\e>0$, \eqref{conve} holds and
$$
\limsup_{\e \to 0} \int_{\Omega_\e} \left( | \nabla_{x'} m_\e|^2 + \frac{1}{\e^2} 
|\partial_{x_3} m_\e|^2 \right)\, dx\leq \int_{\omega}g_{\mathrm{hom}}(\nabla_{x'} m_0)\, dx'\,.
$$
\end{proposition}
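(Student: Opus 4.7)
The plan is to build a two-scale recovery sequence and to conclude by a diagonal argument. Since $g_{\mathrm{hom}}$ is continuous and $2$-homogeneous by Remark~\ref{rm:cell}, the functional $m_0\mapsto\int_\omega g_{\mathrm{hom}}(\nabla_{x'}m_0)\,dx'$ is continuous under strong $H^1$-convergence. Combined with the density of $C^1(\overline\omega;\mathbb{S}^2)$ in $H^1(\omega;\mathbb{S}^2)$ (valid in this critical dimension setting $n=p=2$), a standard diagonal argument reduces the problem to the case $m_0\in C^1(\overline\omega;\mathbb{S}^2)$.

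Fix such an $m_0$ and $\eta>0$. By Lemma~\ref{lm:density} I may select $m\in C^1(\overline\omega;\mathcal Y)$ enjoying the pointwise orthogonality $m(x',y)\cdot m_0(x')\equiv 0$ and satisfying
\[
\int_\omega\int_{Q_{f_1,f_2}}\bigl(|\nabla_{x'}m_0+\nabla_{y'}m|^2+|\pa_{y_3}m|^2\bigr)\, dy\, dx'\leq\int_\omega g_{\mathrm{hom}}(\nabla_{x'}m_0)\, dx'+\eta\,.
\]
I then introduce the classical two-scale ansatz $\widetilde m_\e(x',x_3):=m_0(x')+\e\,m(x',x'/\e,x_3)$ and project it onto the sphere via $m_\e:=\widetilde m_\e/|\widetilde m_\e|$, restricting to $\Omega_\e$. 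The orthogonality forces $|\widetilde m_\e|^2=1+\e^2|m(x',x'/\e,x_3)|^2$, which remains uniformly close to $1$. Thus $m_\e\in H^1(\Omega_\e;\mathbb{S}^2)$ is well defined and $|m_\e-m_0|=O(\e)$ uniformly, so \eqref{conve} is immediate.

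To estimate the energy, differentiation of the constraint $|m_\e|\equiv 1$ yields the pointwise inequality $|\pa_{x_i}m_\e|^2\leq |\pa_{x_i}\widetilde m_\e|^2/|\widetilde m_\e|^2$ for each $i=1,2,3$, hence
\[
|\nabla_{x'}m_\e|^2+\tfrac{1}{\e^2}|\pa_{x_3}m_\e|^2\leq|\widetilde m_\e|^{-2}\bigl(|\nabla_{x'}\widetilde m_\e|^2+\tfrac{1}{\e^2}|\pa_{x_3}\widetilde m_\e|^2\bigr)\,.
\]
The chain rule gives $\nabla_{x'}\widetilde m_\e=\nabla m_0+\nabla_{y'}m+\e\nabla_{x'}m$ and $\pa_{x_3}\widetilde m_\e=\e\,\pa_{y_3}m$, all evaluated at $(x',x'/\e,x_3)$, so the $O(\e)$ remainders and the factor $|\widetilde m_\e|^{-2}\to 1$ contribute negligibly. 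A standard Riemann--Lebesgue-type periodic averaging applied to a continuous integrand $F(x',y',y_3)\chi_{Q_{f_1,f_2}}(y',y_3)$ yields
\[
\lim_{\e\to 0}\int_{\Omega_\e}F\bigl(x',x'/\e,x_3\bigr)\, dx=\int_\omega\int_{Q_{f_1,f_2}}F(x',y)\, dy\, dx'\,,
\]
leading to the bound $\limsup_\e\int_{\Omega_\e}\bigl(|\nabla_{x'}m_\e|^2+\e^{-2}|\pa_{x_3}m_\e|^2\bigr)dx\leq\int_\omega g_{\mathrm{hom}}(\nabla_{x'}m_0)\,dx'+\eta$. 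Sending $\eta\to 0$ and diagonalizing against the $C^1$-approximation of a general $m_0$ completes the argument.

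The principal obstacle is enforcing the sphere constraint on the recovery sequence: the naive ansatz $m_0+\e m(x',x'/\e,x_3)$ does not take values in $\mathbb{S}^2$, and if the corrector $m$ were not tangential to $m_0$, the linear term $2\e m_0\cdot m$ in $|\widetilde m_\e|^2$ would force the normalization to introduce an error of the same order as the energy itself, destroying the upper bound. The orthogonality condition built into Lemma~\ref{lm:density} is precisely what secures tangency and consequently reduces the normalization error to the harmless order $\e^2$; this is the conceptual heart of the argument and explains why the lemma is stated with that extra constraint.
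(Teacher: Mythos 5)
Your proposal is correct and follows essentially the same route as the paper: reduce to $m_0\in C^1(\overline\omega;\mathbb{S}^2)$ by density and continuity of $g_{\mathrm{hom}}$, invoke Lemma~\ref{lm:density} to get a tangential corrector $m$, build the ansatz $\widehat m_\e=m_0+\e\, m(x',x'/\e,x_3)$ normalized to the sphere, use the orthogonality to guarantee $|\widehat m_\e|\geq 1$ so that projection onto $\mathbb{S}^2$ does not increase the energy, and pass to the limit by periodic (two-scale) averaging. Your explicit remark that the orthogonality forces $|\widehat m_\e|^2=1+\e^2|m|^2$ and thus confines the normalization error to order $\e^2$ is exactly the mechanism behind the paper's one-line use of $|\widehat m_\e|\geq1$, so the content matches; the paper stops at the weaker bound $|\pa_{x_i}m_\e|\leq|\pa_{x_i}\widehat m_\e|$, while you record the sharper $|\pa_{x_i}m_\e|\leq|\pa_{x_i}\widehat m_\e|/|\widehat m_\e|$, but either suffices.
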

\begin{proof}
We start by assuming that $m_0\in C^1(\overline\omega; \mathbb{S}^2)$. Fix $\eta>0$. Then, by Lemma~\ref{lm:density} we may find 
$m\in C^1(\overline \omega;  \mathcal Y)$ such that 
\beq\label{st1000}
 m(x', \cdot)\cdot m_0(x')=0\text{ in $Q\times (0,M)$}\qquad\text{for all }x'\in \omega
 \eeq
 and 
 \beq\label{st1001}
\int_\omega\int_{Q_{f_1,f_2}} \Bigl[ | \nabla_{x'}  m_{0}(x')+\nabla_{y'}m(x',y) |^2+  |\pa_{y_3}m(x', y) |^2\Bigr]\, dy  dx'\leq \int_{\omega}g_{\mathrm{hom}}(\nabla_{x'} m_0)\, dx'+\eta\,.
 \eeq
 For every $\e>0$  and for $x\in \Om^M:=\omega\times (0,M)$ we set
 $$
 \widehat m_\e(x):=m_0(x')+\e m\Bigl(x', \frac{x'}\e, x_3\Bigr) \qquad\text{and}\qquad m_\e:=\frac{\widehat m_\e}{|\widehat m_\e|} .
 $$ 
 Clearly $\{m_\e\}\subset H^1(\Om^M; \S^2)$ and satisfies \eqref{conve}. 
Since by \eqref{st1000} we have $|\widehat m_\e|\geq 1$, it is immediately  checked that 
 $$
  | \nabla_{x'} m_\e|^2 + \frac{1}{\e^2} 
|\partial_{x_3} m_\e|^2\leq  | \nabla_{x'}\widehat  m_\e|^2 + \frac{1}{\e^2} 
|\partial_{x_3} \widehat m_\e|^2\,.
 $$
 Thus,   setting $g(x_3, y'):=\chi_{(f_1(y'), f_2(y'))}(x_3)$, we may estimate 
 \begin{align*}
 \limsup_{\e \to 0} & \int_{\Omega_\e} \left( | \nabla_{x'} m_\e|^2 + \frac{1}{\e^2} 
|\partial_{x_3} m_\e|^2 \right)\, dx \\
&\leq \lim_{\e \to 0}\int_{\Om^M}g\Bigl(x_3, \frac{x'}{\e}\Bigr)\left( | \nabla_{x'}\widehat  m_\e|^2 + \frac{1}{\e^2} 
|\partial_{x_3} \widehat m_\e|^2\right)\, dx\\
&=  \lim_{\e \to 0}\int_{\Om^M}g\Bigl(x_3, \frac{x'}{\e}\Bigr)\left(\Bigl | \nabla_{x'}m_0(x')+ \nabla_{y'}m\Bigl(x', \frac{x'}\e, x_3\Bigr)\Bigr|^2 +  
\Bigr|\partial_{y_3} m\Bigl(x', \frac{x'}\e, x_3\Bigr)\Bigr|^2\right)\, dx\\
&= \int_{\Om^M}\int_{Q}g(x_3, y)\left(  | \nabla_{x'}m_0(x')+ \nabla_{y'}m(x', y', x_3)|^2 +  
|\partial_{y_3} m(x', y', x_3)|^2\right)\,dy' dx\\ 
&\leq  \int_{\omega}g_{\mathrm{hom}}(\nabla_{x'} m_0)\, dx'+\eta
 \end{align*}
 where the last equality has been obtained by passing to the two-scale limit, while the last inequality is   \eqref{st1001}.
  By the arbitrariness of $\eta$ and a standard diagonalization argument the thesis of the proposition is established when  $m_0\in C^1(\overline\omega; \mathbb{S}^2)$.
 
Let now   $m_0\in H^1( \omega; \mathbb{S}^2)$. Then there exists $\{m_k\}\subset C^1(\overline\omega; \mathbb{S}^2)$ such that $m_k\to m_0$ in $H^1( \omega; \mathbb{S}^2)$ as $k\to \infty$. In particular, recalling that  $g_{\mathrm{hom}}$ is continuous and  $g_{\mathrm{hom}}(\xi)\leq |\xi|^2$ (see Remark~\ref{rm:cell}), we have 
$$
 \int_{\omega}g_{\mathrm{hom}}(\nabla_{x'} m_k)\, dx'\to  \int_{\omega}g_{\mathrm{hom}}(\nabla_{x'} m_0)\, dx'\,.
$$
The thesis follows by applying the first part of the proof to each $m_k$ and using  diagonalization argument.  
\end{proof}

\subsection{$\Gamma$-convergence}\label{sbs:G}
In this section we  prove the main compactness and  $\Gamma$-convergence theorem by combining   all the previous results.
\begin{proof}[Proof of Theorem\onda\ref{th:main}]
We start by showing part (i).  Let $\{m_\e\}$ be as in the statement and for every $\e>0$ let  $\overline M_\e$ be the function in $H^1(\omega; \R^3)$, with $|\overline M_\e|\leq 1$ defined by
$$
\overline M_\e(x'):=\medint_{f_1(x'/\e)}^{f_2(x'/\e) }m_\e(x', x_3)\, dx_3=\medint_{\e f_1(x/\e')}^{\e f_2(x'/\e) }M_\e(x', x_3)\, dx_3\,,
$$
where, we recall, $M_\e(x', x_3):=m_\e(x', x_3/\e)$. Note that, in particular,   \eqref{supe} holds.
Using \eqref{supe} it is straightforward to check that $\{\overline M_\e\}$ is bounded in $H^1(\omega; \R^3)$.
Thus, up to a (not relabelled) subsequence there exists $m_0\in H^1(\omega; \R^3)$ such that 
$\overline M_\e\wto m_0$ weakly in $H^1(\omega; \R^3)$. Observe now that by the one-dimensional Poincar\'e-Wirtinger's inequality we have 
\begin{multline*}
\int_{\Om_\e}|m_\e-\overline M_\e|^2\, dx=\int_{\omega}\int_{f_1(x'/\e)}^{f_2(x'/\e)}\biggl|m_\e(x', x_3)-\medint_{f_1(x'/\e)}^{f_2(x'/\e)}m_\e(x', x_3)\, dx_3\biggr|^2\, dx'\\
\leq \int_\omega\frac{(f_2(x'/\e)-f_1(x'/\e))^2}{\pi^2} \int_{f_1(x'/\e)}^{f_2(x'/\e)}|\pa_{x_3} m_\e|^2\, dx_3dx'\leq C\e^2\,,
\end{multline*}
thanks to \eqref{supe}, for some constant $C>0$ independent of $\e$. We deduce that
$$
\int_{\Om_\e}|m_\e-m_0|^2\, dx\to 0\,.
$$

For part (ii),  we may assume without loss of generality
that 
$$
\liminf_\e E_\e(m_\e)=\lim_\e E_\e(m_\e)<+\infty\,,
$$
In particular, \eqref{supe} holds.  
Thus, defining $\overline M_\e$ as before, we deduce that $\{\overline M_\e\}$ is bounded in $H^1$. By \eqref{conve1}
  it readily follows that 
\beq\label{h1b2}
\overline M_\e\wto m_0\qquad\text{weakly in $H^1(\omega; \R^3)$.}
\eeq
In turn, by Lemma~\ref{lm:media} (and Remark~\ref{rm:media}) and Proposition~\ref{prp:finalgeneral} we get
\beq\label{get100}
\frac1\e\int_{\R^3}|\nabla u_\e|^2\, dx\to  \int_\omega  A_{\mathrm{hom}}\, m_0\cdot m_0\, dx'\,,
\eeq
which together with Proposition~\ref{prop:exch-liminf} yields the conclusion of part (ii). 

Part (iii) easily follows from Proposition~\ref{prop:exch-limsup} and the fact that \eqref{get100} holds whenever \eqref{conve1} and \eqref{supe} hold. 
\end{proof}
\begin{proof}[Proof of Corollary\onda\ref{cor:main}]
By Theorem~\ref{th:main}  and standard $\Gamma$-convergence arguments we infer that there exists a global minimizer $m_0$ of $E_0$
in $H^1(\omega; \mathbb{S}^2)$ such that, up to a (not relabelled) subsequence, \eqref{conve1} holds.  It is now easy to see that  $m_0$ is a global minimizer if and only if it is constant and minimizes the quadratic form associated to the matrix $A_{\mathrm{hom}}$. This concludes the proof of the corollary.
\end{proof}

\paragraph*{\bf Acknowledgements}  The authors thank O. Tretiakov for helpful discussions and acknowledge the support from EPSRC grant EP/K02390X/1 and Leverhulme grant RPG-2014-226.

\bibliography{anisotropy_curvature,nonlin}
\bibliographystyle{plain}

\end{document}